\definecolor{blue}{rgb}{0,0,1}
\definecolor{red}{rgb}{1,0,0}
\theoremstyle{plain}
\newtheorem{theorem}{Theorem}
\newtheorem{lemma}[theorem]{Lemma}
\newtheorem{corollary}[theorem]{Corollary}
\newtheorem{definition}[theorem]{Definition}
\newtheorem{remark}[theorem]{Remark}
\newtheorem{example}[theorem]{Example}
\newcommand{\I}{\mathrm{i}}
\newcommand{\R}{\mathbb{R}}
\newcommand{\N}{\mathbb{N}}
\newcommand{\C}{\mathbb{C}}
\newcommand{\de}{\mathrm{d}}
\newcommand{\KK}{\mathtt{K}}
\newcommand{\supp}{\mathop{\mathrm{supp}}\nolimits}
\newcommand{\Ker}{\mathop{\mathrm{Ker}}\nolimits}
\newcommand{\EE}{\mathcal{E}}
\newcommand{\DD}{\mathcal{D}}
\newcommand{\Sch}{\mathcal{S}}
\newcommand{\Borel}{\mathrm{Borel}}
\newcommand{\Reg}{\mathcal{R}}
\newcommand{\RegS}{\mathcal{R}_{\mathcal{S}}}
\newcommand{\RegSp}{\mathcal{R}_{\mathcal{S}+}}
\newcommand{\RegD}{\mathcal{R}_{\mathcal{D}}}
\newcommand{\SchD}{\mathcal{S}_{\mathcal{D}}}
\newcommand{\Vcal}{\mathcal{V}}
\newcommand{\Rbar}{\overline{\R}}
\newcommand{\Wbar}{\underline{W}}
\newcommand{\seq}[1]{(#1_{n})_{n\in\N}}
\DeclareMathSymbol{\shortminus}{\mathbin}{AMSa}{"39}
\newcommand{\defin}[1]{\textit{\textbf{#1}}}
\newcommand{\enref}[1]{(\ref{#1})}
\newcommand{\eqref}[1]{(\ref{#1})}
\begin{document}

\title[Existence theorem on the UV limit of Wilsonian RG flows of Feynman measures]{Existence theorem on the UV limit of Wilsonian RG flows of Feynman measures}

\author{Andr\'as L\'aszl\'o\footnote{Corresponding author.}}
\address{
Dep.\ of High Energy Physics, HUN-REN Wigner Research Centre for Physics\\
Konkoly-Thege M u 29-33, 1121 Budapest, Hungary
}
\ead{laszlo.andras@wigner.hun-ren.hu}

\author{Zsigmond Tarcsay}
\address{
Dep.\ of Mathematics, Corvinus University of Budapest\\
F\H ov\'am t\'er 13-15, 1093 Budapest, Hungary\\
and Dep.\ of Appl.\ Analysis and Comp.\ Mathematics, E\"otv\"os University\\
P\'azm\'any P\'eter s\'et\'any 1/C, 1117 Budapest, Hungary}
\ead{zsigmond.tarcsay@uni-corvinus.hu}

\author{Jobst Ziebell}
\address{
Faculty of Mathematics and Computational Sciences, Friedrich-Schiller University,\\
Ernst Abbe Platz 2, 07743 Jena, Germany
}
\ead{jobst.ziebell@uni-jena.de}

\begin{abstract}
In nonperturbative formulation of Euclidean signature quantum field theory (QFT), the vacuum state is characterized by the Wilsonian renormalization group (RG) flow of Feynman measures. Such an RG flow is a family of Feynman measures on the space of ultraviolet (UV) regularized fields, linked by the Wilsonian renormalization group equation. In this paper we show that under mild conditions, a Wilsonian RG flow of Feynman measures extending to arbitrary regularization strengths has a factorization property: there exists an ultimate Feynman measure (UV limit) on the distribution sense fields, such that the regularized instances in the flow are obtained from this UV limit via taking the marginal measure against the regulator. Existence theorems on the flow and UV limit of the corresponding action functional are also discussed.
\end{abstract}

\noindent{\it Keywords}: Wilsonian renormalization, renormalization group flow, UV limit

\maketitle

\section{Introduction}
\label{S:Intro}

In the Feynman functional integral formulation of nonperturbative interacting Euclidean signature quantum field theories (QFTs), the vacuum state of a model is described by a Feynman measure living on the space of field configurations. More precisely, it is described by a Wilsonian renormalization group (RG) flow of Feynman measures on the space of ultraviolet (UV) regularized fields. As is well known, the need for the regularization and renormalization comes from the necessity of spacetime pointwise multiplication of certain distribution sense fields, as spelled out in the following.

Classically, a Euclidean field theory is described by its action functional $S$ on the space of smooth and sufficiently rapidly decreasing field configurations. It is assumed that a splitting $S=T+V$ is given, with $T$ being a quadratic positive definite kinetic term, such as a Klein--Gordon term $T(\varphi)=\frac{1}{2}\int \varphi\,(-\Delta+m^{2})\varphi$, and $V$ being a higher than quadratic degree interaction term bounded from below, such as $V(\varphi)=g\int\varphi^{4}$, the integrals performed over the spacetime.\footnote{The integration domain for the interaction term may also be restricted to a compact region of the spacetime, referred to as an infrared (IR) cutoff. In this paper, a fixed IR cutoff on the interaction term will be assumed, wherever relevant, as only the UV behavior will be investigated.} The quantization is outlined below.

It is well known, that a Gaussian Feynman measure $\gamma$ on the space of distribution sense fields can be naturally associated to the kinetic term $T$, as follows \cite{Glimm1987,Velhinho2017,Albeverio2008,Gill2008,Montaldi2017}. Denoting by $\KK$ the maximally symmetric fundamental solution (Green's distribution) to the partial differential operator $(-\Delta+m^{2})$ subordinate to $T$, the Gaussian function $Z_{\mathrm{free}}(j):=\e^{-\frac{1}{2}(j\,\vert\,\KK\star j)}$ is the Fourier transform of a probability measure $\gamma$ on the space of distribution sense fields, as implied by the well known Bochner--Minlos theorem (see e.g.\ \cite{Velhinho2017}~Corollary~1 for a concise review). This Gaussian measure $\gamma$ is a genuine probability measure, devoid of any pathologies, and is the mathematically precise definition of the Feynman measure for a non-interacting Euclidean signature QFT model. It is customary to write $\int(\dots)\,\de\gamma(\phi)$ informally as $\int(\dots)\,\e^{-T(\phi)}\,\de\phi$, the integrals performed on the distribution sense fields.

The tentative definition for the Feynman measure of the interacting QFT model is the product measure $\mu:=\e^{-V}\cdot\gamma$, meaning that $\int(\dots)\,\de\mu(\phi):=\int(\dots)\,\e^{-V(\phi)}\,\de\gamma(\phi)$, where the right hand side of the equation is the tentative rigorous definition for the informal expression $\int(\dots)\,\e^{-(T+V)(\phi)}\,\de\phi$. This construction is known to be rather obviously problematic: the measure $\gamma$ naturally lives on the space of distribution sense fields, whereas the Feynman exponent $\e^{-V}$ of interaction term lives on the space of function sense fields. Thus, the product measure $e^{-V}\cdot\gamma$ cannot be naively defined.

A workaround to this problem, called \emph{Wilsonian regularization}, was developed in the 1970s, which can be rigorously formulated as follows \cite{Wilson1975,Wilson1983,Peskin1995,Polonyi2003,Polonyi2008,Srednicki2007,Skinner2018,Dupuis2021,Kopietz2010,Bauerschmidt2019}. Consider an Euclidean QFT model encoded by a probability measure $\mu$ on the distribution sense fields, and fix a semigroup of coarse-graining operators, implemented as (in a suitable sense) $\mu$-measurable linear operators from the space of distribution sense fields to the space of function sense fields. 
The corresponding family of pushforward (marginal) measures $C_{*}\,\mu$ ($C\in\{\mbox{coarse-grainings}\}$) constitutes the renormalization group flow induced by $\mu$. By construction, the pushforward measure instances $C_{*}\,\mu$ live on the space of function sense (i.e., regularized) fields. Conversely, one may consider a family $\mu_C$ ($C\in\{\mbox{coarse-grainings}\}$) of probability measures on the space of function sense (i.e.,\ regularized) fields which are connected via pushforwards by intermediary coarse-grainings, and we shall refer to an above kind of linked family of measures as a \emph{nonterminating Wilsonian renormalization group (RG) flow}.
In this paper we investigate when such a flow admits some kind of limit measure $\mu$ from which the flow originates.\footnote{Of course, for this question to make sense, the semigroup of coarse-graining operators needs to approach arbitrarily close to the identity operation, but on continuum spacetimes this is basically automatic.}

To see that the above concept indeed corresponds to the usual constructive implementation of the Wilsonian RG, consider a free massive Klein--Gordon QFT encoded by its Gaussian measure $\gamma$ on the space of tempered distributions.
For every $\Lambda > 0$, consider the linear operator $C_{\Lambda}$ which cuts off the momentum modes $p^2 > \Lambda^2$ of a tempered distribution, whenever that operation is meaningful. (This amounts to the multiplication of a tempered distribution with a non-smooth function which is not everywhere defined.)
One can show that $C_{\Lambda}$ is indeed a well defined $\gamma$-measurable mapping in a certain sense, and that the pushforward measures $(C_{\Lambda})_{*}\, \gamma$ indeed correspond to the usual UV-cutoff versions of $\gamma$.
The next step is to include a family of Feynman exponents $e^{-V_{\Lambda}}$ ($0<\Lambda<\infty$) of interaction terms in a way that implements Wilson's idea: namely to integrate out high-momentum modes in every step.
This is precisely what the momentum cutoff operators do: the RG operation connecting $\e^{-V_{\Lambda'}}\cdot (C_{\Lambda'})_{*}\,\gamma$ with $\e^{-V_{\Lambda}}\cdot (C_{\Lambda})_{*}\,\gamma$  ($0<\Lambda'<\Lambda<\infty$) corresponds to the pushforward by $C_{\Lambda'}$, modulo an eventual field rescaling by factors $z(\Lambda')$ and $z(\Lambda)$.

Because the measurability of the sharp UV cutoff operators is intimately tied to the free theory, we will focus on the more generally applicable approach of using continuous coarse-graining operators (practically corresponding to smooth momentum cutoffs).
The benefit is that the corresponding pushforward operations are well-defined irrespective of which Borel measure they act on such that the setup generalizes to arbitrary scalar quantum field theories on either flat or curved spacetime, and without an \emph{a priori} splitting of the model to a free plus an interaction term. The precise form of the RG equation reads
\begin{eqnarray}\label{E:muzC}
 \exists\,\mathrm{real\; valued\; functional }\; z \;\mathrm{ of\; coarse{\shortminus}grainings}: \cr
 \Big. \quad \forall\,\mathrm{coarse{\shortminus}grainings }\; C,C',C'' \;\mathrm{ with }\; C''=C'\,C : \cr
 \Bigg.\qquad\quad z(C'')_{*}\,\mu_{C''} \;=\; C'_{*}\,\Big(\,z(C)_{*}\,\mu_{C}\,\Big)
\end{eqnarray}
where $C'_{*}$ denotes the pushforward by $C'$, whereas $z(C'')_{*}$ and $z(C)_{*}$ denote pushforward by the field rescaling operation by factors $z(C'')$ and $z(C)$, respectively. It is seen that the pertinent equation asserts the following: proceeding in the flow from the UV toward infrared (IR) means subsequent application of coarse-graining operators to the fields, along with a subsequent field rescaling.

In this paper we prove structural theorems on the space of nonterminating Wilsonian RG flows. We show that for scalar fields over a flat spacetime manifold, given a Wilsonian RG flow of measures $\mu_{C}$ for $C$ in a suitable class of coarse-grainings, there exists a unique measure $\mu$ from which the flow originates. Moreover, we prove that if two Wilsonian RG flows $\mu_{C}$ and $\gamma_{C}$ have an interaction potential $V_{C}$ relative to each-other such that $\mu_{C}=\e^{-V_{C}}\cdot\gamma_{C}$ holds for a suitably chosen coarse-gaining $C$, then a potential $V_{C}$ will exist for all coarse-gainings $C$, and the flow of potentials will have a UV limit $V$ on the distributional sense fields such that $\mu=\e^{-V}\cdot\gamma$ holds between the UV limit measures. It will also be shown that $V$ has the same lower bound as $V_{C}$ if those potentials were bounded from below at a suitably chosen coarse-graining $C$. We also point out possibilities for strengthening the mentioned results to manifold spacetimes. 
The UV measure existence result can be seen as the analogy of the existence theorem for the UV limit of Wilsonian RG flows of Feynman correlators in arbitrary signatures, proved in \cite{Laszlo2024}. In the present paper, however, due to the Euclidean signature setting, more constraining statements can be proved.

The shown UV theorems on nonterminating Wilsonian RG flows are also discussed in the light of certain concrete models, such as the $\varphi^{4}$ model in various dimensions, and on models with bounded potential density. With the help of the presented theorems one can demonstrate, for instance, that the limiting procedure usually applied for constructing Feynman measures of high dimensional $\varphi^{4}$ theories \cite{Frohlich2024,Barashkov2020,Barashkov2021,Aizenman2021} are themselves non-Wilsonian families. A renormalizability result on bounded potential density is also presented.

There is a difference, when comparing the described Wilsonian RG prescription to other constructive approaches in QFT, as considered e.g.\ in \cite{Rivasseau2014}. There, one consideres an already regularized instance of the model in question, living on a lattice in a finite volume.
Then, one uses sophisticated techniques such as polymer expansions to obtain bounds on partition functions or Schwinger functions which will inevitably depend on the volume as well as the lattice spacing.
The next step is to show that these bounds are sufficiently uniform to take the continuum limit.
Often, this is established utilizing the Wilsonian RG as a technique to split the integration over all fields into integration over fields with their momenta in specific intervals (momentum slices, see e.g.\ \cite{Brydges1995}).
The end result is then a family of measures $(\mu_\Lambda)$ indexed by UV-cutoffs $\Lambda$ which converges to some UV limit $\mu$. In other similar, but non-lattice constructions \cite{Barashkov2020,Barashkov2021} one works on the genuine continuum Euclidean spacetime, but with UV regularized fields to define a family $(\mu_\Lambda)$ convergent to some UV measure $\mu$. 
It needs to be stressed that in all these prescriptions, the family $(\mu_\Lambda)$ will in general not be Wilsonian itself, in the sense that it does not fulfill Wilsonian RG equation \eqref{E:muzC}, as will be shown by our results. What is true, on the other hand, that in certain cases, they do converge to a UV measure $\mu$ in the so-called weak topology. 
Given this UV limit $\mu$, it is therefore a natural open question to ask, what the Wilsonian RG flow induced by the UV limit $\mu$ looks like, even though $\mu$ itself may have been defined as the limit of a non-Wilsonian regularized family.
The non-Wilsonianity will be discussed in Example~\ref{Exa:runningc} in the context of $\varphi^4$ models.

For rigorous studies on RG flows of Euclidean signature theories we also refer to the construction in \cite{Ziebell2023}, which formalizes the notion of Wetterich type RG flows.
While the underlying idea, to suppress high-momentum modes inside a functional integral, remains the same, it is not implemented via pushforwards (neither sharp nor continuous).
Instead, one modifies the interaction potential $V_\Lambda$ by adding a bilinear term $B_k$ that suppresses momenta below the auxiliary scale $k \ge 0$.
Furthermore, the resulting flow deals with the \emph{effective average actions}, which are Legendre transforms of cumulant-generating functions, instead of the flow of the measures themselves.

The structure of the paper is as follows. In Section~\ref{S:Wilson} the mathematical framework for Wilsonian RG flows of regularized Feynman measures is rigorously spelled out. In Section~\ref{S:ExistenceMu} we prove the existence theorem for a UV limit Feynman measure. In Section~\ref{S:ExistenceV} we prove an existence theorem for the UV limit of the flow of relative interaction potentials between two flows. Based on the presented theorems, in Section~\ref{S:Examples} case studies are presented on concrete QFT models applying the above results. In Section~\ref{S:Conclusion} a conclusion is presented. The paper is closed by \ref{A:GaussSupp}, a review on some known important properties of Gaussian measures on the space of distribution sense fields.

The following normalization convention should be noted. The running field renormalization factor appearing in \eqref{E:muzC} may be merged into the measures $\mu_{C}$ in the flow, by considering rather the flow of measures $z(C)_{*}\,\mu_{C}$ on the space of rescaled fields (${C\in\{\mbox{coarse-grainings}\}}$). That convention, in QFT terms, corresponds to the book-keeping of the running field renormalization factor as the running coupling of the kinetic term in the action functional. This convention sets $z(C)=1$ in \eqref{E:muzC}, which is used from this point on, and therefore the running field renormalization factor will not appear in the notation explicitly in the sequel, although it is accounted for. In the pertinent convention, the factor re-appears when comparing two flows.

\section{Wilsonian renormalization group (RG) flows}
\label{S:Wilson}

The mathematical results presented in this paper focus on Euclidean field theories formulated over flat spacetimes, i.e.\ spacetimes isometric to $\R^{N}$. The sole reason for such restriction is that the proof of our main theorem (Theorem~\ref{T:ZExists}) relies on two factorization properties (Remark~\ref{R:ConvSurj}~\enref{R:ConvSurj:i} and \enref{R:ConvSurj:iii}) of convolution operators by Schwartz functions. If analogous factorization properties could be established for coarse-graining operators on generic manifolds, the arguments of the presented theorems would carry over to the manifold spacetimes as well. (For a review on coarse-graining operators on manifolds, see e.g.\ \cite{Laszlo2024}~Remark~1.) In order to simplify the notations further, we focus on the case of scalar valued bosonic fields: analogous theorems can be obtained for the case of vector valued bosonic fields, \emph{mutatis mutandis}. Fermionic fields are not considered in this paper, since they are not described by ordinary measure theory, which is our present focus.

Throughout the paper, the standard distribution theory notations are used \cite{Horvath1966,Treves1970}: $\EE$ for $\R^{N}\to\R$ smooth functions, $\Sch$ for Schwartz functions, $\DD$ for test functions, understood with their standard topologies, and $\EE'$, $\Sch'$, $\DD'$ for their continuous duals, understood with their standard strong dual topologies. In particular, $\EE'$ is the space of compactly supported distributions, $\Sch'$ is the space of tempered distributions, and $\DD'$ is the space of all distributions. Note that the spaces $\Sch$ and $\Sch'$ are only meaningful on flat spacetimes, whereas the spaces $\EE$, $\DD$ and $\EE'$, $\DD'$ are also meaningful on any orientable and oriented manifold without further assumptions. The symbol $F$ shall denote the usual $F:\,\Sch_{\C}
\to\Sch_{\C}$ Fourier transform, along with its extension $F:\,\Sch_{\C}'\to\Sch_{\C}'$ to the space of tempered distributions. Both of these are known to be topological vector space automorphisms. The normalization convention of $F$ is such that for all $\varphi,\psi\in\Sch_{\C}$ the identity $F(\varphi\star\psi)=F(\varphi)\cdot F(\psi)$ holds, where $\star$ denotes convolution and $\cdot$ denotes pointwise product. The above topological vector spaces $\EE$, $\Sch$, $\DD$, and $\EE'$, $\Sch'$, $\DD'$ are understood together with the Borel sets generated by their standard topology. For an $\eta\in\Sch$ we equip the subspace $\eta\star\Sch'\subset\Sch'$ with the subspace topology, and corresponding Borel sets. Often, we denote the convolution operator by an $\eta\in\Sch$ using the notation $C_{\eta}:\,\Sch\to\Sch$ and $C_{\eta}:\,\Sch'\to\Sch'$, furthermore we use the notation $C_{\eta}\,t=\eta\star t$ given some $t\in\Sch'$ interchangeably, according to convenience. In particular, one has $C_{\eta}[\Sch']=\eta\star\Sch'$ by notation. The standard symbols $\Rbar := \R \cup \{ \pm \infty \}$ and $\Rbar_{0}^{+}:=\R_{0}^{+}\cup\{+\infty\}$ are used for the extended real numbers and extended non-negative real numbers, along with $\e^{-\infty}:=0$ and $\e^{+\infty}:=+\infty$, furthermore we use the convention $\pm\infty\cdot 0:=0$ as customary in measure theory.

In order to motivate the mathematically precise definition of coarse-graining operators (regulators), let us consider the setting of a bare orientable and oriented manifold as a base, and fields over it, which are sections and distributional sections of some vector bundle. In such setting, coarse-graining operators are differential geometrically naturally defined as continuous linear maps from the distribution sense fields to the smooth function sense fields, such that they are properly supported (i.e.,\ they preserve compact support), are injective on compactly supported distributions, and their transpose operators likewise have the above properties, see also \cite{Laszlo2022}~Section~4 or \cite{Laszlo2024}~Section~2 for a review. From Schwartz kernel theorem it follows that these are smoothing operators by certain bivariate smooth kernel functions. When the base manifold happens to be an affine space ($\cong\R^{N}$), i.e.\ flat spacetime models are considered, one may further require translational invariance of the coarse-grainings, in which case it follows that the above operators are precisely the convolutions by some $\eta\in\DD\setminus\{0\}$. Furthermore, on $\R^{N}$ spacetimes the notions of Schwartz functions, Schwartz distributions, and Fourier transform (frequency space) become meaningful. Requiring the analogy of the above definition, but with Schwartz distributions, it follows that the translationally invariant coarse-graining operators are convolutions by some $\eta\in\Sch\setminus\{0\}$. Therefore, on flat spacetimes, the most general notion of coarse-graining operators are convolution operators $C_{\eta}$ ($\eta\in\Sch$). This definition of coarse-graining (regulator) on continuum spacetimes is widely used implicitly in the literature, see \cite{Ziebell2023,Barashkov2020,Barashkov2021,Dybalski2025} among many others as example. In other pieces of QFT literature on Wilsonian RG flows over flat spacetime, it is customary to require further properties on the allowed set of coarse-graining operators: it is customary to require the frequency spectrum $F(\eta)$ to have in addition a particular damping profile shape, as formally defined in the sequel.

\begin{definition}\label{D:Reg}
We introduce regulators with various frequency space asymptotics. As a subset of $\Sch$, let us introduce the set of \defin{regulators with Schwartz frequency tail}
\begin{eqnarray}
 \RegS := \left\{\eta\in\Sch \;\big\vert\; 0\leq F(\eta)\leq 1 \mbox{ and } \exists\,\mbox{neighbh.\ } U \mbox{of } 0\in\R^{N}:\; F(\eta)\vert^{}_{U}=1 \right\}.\cr
\end{eqnarray}
A subset of $\RegS$ is the \defin{regulators with nonvanishing Schwartz frequency tail}
\begin{eqnarray}
 \RegSp := \left\{\eta\in\RegS \;\big\vert\; 0<F(\eta) \right\}.
\end{eqnarray}
A further subset of $\RegS$ is the \defin{regulators with strictly bandlimited frequency tail}
\begin{eqnarray}
 \RegD := \left\{\eta\in\RegS \;\big\vert\; F(\eta)\in\DD \right\}.
\end{eqnarray}
\end{definition}

Throughout the paper the symbol $\Reg$ will be generally a placeholder for either of $\Sch$ or $\RegS$ or $\RegSp$ or $\RegD$ of the considered regulators, and we prove various theorems when the allowed coarse-grainings are $C_{\eta}$ ($\eta\in\Reg$) with $\Reg=\Sch$ or $\RegS$ or $\RegSp$ or $\RegD$, respectively. Note that all these sets of regulators satisfy $\forall\,\eta,\eta'\in\Reg:\;\eta\star\eta'\in\Reg$ and admit approximate identities. The customary choice in the flat spacetime QFT literature for the allowed set of regulators is $\Reg=\RegS$. We note, however, that our proofs go through when the allowed set of regulators is $\Reg=\Sch$, i.e.\ without restrictions on the Fourier spectra of the allowed regulators. The latter setting is useful for future generalization attempts of the presented theorems to manifolds, in which case no analogy to $\RegS$ or $\RegSp$ or $\RegD$ exists, due to lack of Fourier spectra. In other words, restrictions on the Fourier spectra of the regulators are avoidable in the scheme of the key proofs, and they would automatically generalize to manifolds if the suitable analogy of the factorization lemmas referred in Remark~\ref{R:ConvSurj}~\enref{R:ConvSurj:i} and \enref{R:ConvSurj:iii} could be proven for coarse-grainings on manifolds. We also note in passing that important pieces of literature on the Wilsonian RG flows over continuum flat spacetimes assume the regulators to be strictly bandlimited, i.e.\ $\Reg=\RegD$ only. This choice is not only restrictive on manifolds, but also on flat spacetimes: in the discrete analogy of the coarse-graining operation, fields are averaged in some compact vicinity of spacetime points on the spacetime mesh, being the discrete analogy of convolution by some bump function $\eta\in\DD$. Due to the Paley--Wiener--Schwartz theorem (\cite{Hormander1990}~Theorem~7.3.1), however, the Fourier spectrum of such bump function $\eta$ can never be strictly bandlimited. Nevertheless, we also study the $\Reg=\RegD$ case, i.e.\ when the allowed coarse-grainings are only the strictly bandlimited smooth frequency cutoffs, to conform to the literature.

\begin{definition}
\label{D:WRG}
Let $\Reg=\Sch$ or $\RegS$ or $\RegSp$ or $\RegD$. 
A family $(\mu_{\eta})_{\eta\in\Reg}$ of measures, such that for all $\eta\in\Reg$ the $\mu_{\eta}$ is a sigma-additive non-negative valued finite measure on the Borel sets of $C_{\eta}[\Sch']$, is called \defin{nonterminating Wilsonian renormalization group (RG) flow} whenever
\begin{eqnarray}
\label{E:WRG}
 \forall\, \eta,\eta',\eta''\in\Reg \mbox{ satisfying } C_{\eta''}=C_{\eta'}\,C_{\eta}: \qquad \mu_{\eta''} \;=\; (C_{\eta'})_{*}\, \mu_{\eta}
\end{eqnarray}
holds. Here, $()_{*}$ denotes the pushforward of a measure by a measurable map, i.e.\ by definition $\big((C_{\eta'})_{*}\,\mu_{\eta}\big)(B):=\mu_{\eta}\big(\overset{-1}{C_{\eta'}}(B)\big)$ for all Borel sets $B$, where $\overset{-1}{C_{\eta'}}(B)$ denotes the preimage of $B$ by the mapping $C_{\eta'}$.
\end{definition}

The main aim of the paper is to show that for a Wilsonian RG flow as in Definition~\ref{D:WRG}, there exists some sigma-additive non-negative valued finite measure $\mu$ on the Borel sets of $\Sch'$, such that for all $\eta\in\Reg$ the factorization identity $\mu_{\eta}=(C_{\eta})_{*}\,\mu$ holds. That would imply the family of Wilsonian regularized probability measures $(\mu_{\eta})_{\eta\in\Reg}$ having a regularization-independent UV limit.

\begin{remark}\label{R:Trace}
In the above definition, each member of the family $(\mu_\eta)_{\eta\in\Reg}$ was defined on Borel sets of $C_\eta[\Sch']$. However, as shown by Lemma~\ref{L:CetaS} stated in the sequel, when $C_\eta[\Sch']$ ($\eta\in\Sch$) is considered as a subset of $\Sch'$, it is itself seen to be a Borel measurable subset of $\Sch'$. For that reason, the Borel sigma-algebra of $C_\eta[\Sch']$ is identical to the trace sigma-algebra of Borel sets of $\Sch'$, i.e.\ 
\begin{eqnarray}
    \Borel(C_\eta[\Sch'])=\left\{B\cap C_\eta[\Sch'] \;\big\vert\; B\in \Borel(\Sch')\right\}
\end{eqnarray}
holds. This, in turn, means that every Borel measurable subset of $C_\eta[\Sch']$ is also a Borel subset of $\Sch'$. Consequently, any of the  measures $\mu_\eta$ on $C_{\eta}[\Sch']$ can be naturally extended to a Borel measure $\nu_\eta$ on $\Sch'$ as follows:
\begin{eqnarray}
     \nu_\eta(B) := \mu_\eta(B\cap C_\eta[\Sch'])\qquad B\in \Borel(\Sch').
\end{eqnarray}
The extended family of measures $(\nu_\eta)_{\eta\in\Reg}$ so obtained obviously fulfills the Wilsonian RG property \eqref{E:WRG}. Conversely, if $(\nu_\eta)_{\eta\in\Reg}$ satisfies \eqref{E:WRG}, then the family consisting of the restrictions of the $\nu_\eta$'s to Borel subsets of $C_\eta[\Sch']$, i.e.\ the measures $\mu_\eta:=\nu_\eta|_{{}_{C_\eta[\Sch']}}$, satisfy Definition~\ref{D:WRG}. Accordingly, for the purpose of Definition~\ref{D:WRG}, it does not matter whether the members of the measure family $(\mu_\eta)_{\eta\in\Reg}$ are interpreted as measures on the Borel sets of $C_\eta[\Sch']$ or as measures defined on the full Borel sigma-algebra of $\Sch'$. While the latter view is mathematically more natural and easier to handle, the former view relates more closely to the original QFT motivation of the concept of RG flow, and therefore is more appropriate for our applications carried out in Section~\ref{S:ExistenceV}. When not confusing we will not distinguish these explicitly in notation.
\end{remark}

\begin{lemma}\label{L:uX+A} 
Let $X$ and $Y$ be Hausdorff locally convex topological vector spaces, and suppose that $X$ is a countable union of weakly compact sets. Then for every continuous map $u:\,X\to Y$ and weakly closed set $A\subset Y$, the set $u[X]+A$ is a Borel set in $Y$ with respect to the weak and the original topology.
\end{lemma}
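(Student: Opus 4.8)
The plan is to prove the sharper assertion that $u[X]+A$ is Borel already for the \emph{weak} topology of $Y$. Since the weak topology is coarser than the original one, every weakly open set is originally open, so the weak Borel $\sigma$-algebra is contained in the original Borel $\sigma$-algebra; weak-Borelness therefore automatically yields Borelness for both topologies, which is exactly the phrasing of the statement. Writing $X=\bigcup_{n\in\N}K_{n}$ with each $K_{n}$ weakly compact, I would use $u[X]+A=\bigcup_{n\in\N}\big(u[K_{n}]+A\big)$, so that, a countable union of weak-Borel sets being weak-Borel, it suffices to treat a single weakly compact set $K:=K_{n}$.

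First I would encode the sum as a projection. I set $H:=\{(x,y)\in K\times Y\;|\;y-u(x)\in A\}$, so that $u[K]+A=\pi_{Y}[H]$, where $\pi_{Y}$ denotes the canonical projection onto $Y$. Because $u$ is continuous for the original topologies and the original topology of $Y$ is finer than its weak topology, $u$ is also continuous as a map into $(Y,\mathrm{weak})$; hence $(x,y)\mapsto y-u(x)$ is continuous from $(X,\mathrm{orig})\times(Y,\mathrm{weak})$ into $(Y,\mathrm{weak})$. As $A$ is weakly closed, $H$ is the preimage of $A$ under this continuous map, and therefore $H$ is closed in $(K,\mathrm{orig})\times(Y,\mathrm{weak})$. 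This closedness, valid with the \emph{original} topology on the $K$-factor, is the structural input that makes the projection tractable, and it uses nothing beyond continuity of $u$ (no linearity).

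The genuinely delicate step, which I expect to be the main obstacle, is to pass from this closedness to weak-Borelness of $\pi_{Y}[H]$ by exploiting the weak compactness of $K$, and it is precisely here that the distinction between linear and general continuous $u$ bites. The natural route is the closed-projection principle: given a weakly convergent net $y_{\alpha}\to y$ with $y_{\alpha}\in\pi_{Y}[H]$, pick $x_{\alpha}\in K$ with $y_{\alpha}-u(x_{\alpha})\in A$ and extract, by weak compactness of $K$, a weakly convergent subnet $x_{\beta}\to x\in K$. To conclude $y-u(x)\in A$, and hence weak closedness of $\pi_{Y}[H]$, one would need $u(x_{\beta})\to u(x)$ in $(Y,\mathrm{weak})$, i.e.\ continuity of $u$ along weakly convergent nets; this is automatic for weakly continuous (in particular linear) $u$ but need not hold for a general continuous $u$, and so the projection is in general only weak-Borel, not weakly closed. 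The obstacle is exactly the incompatibility of the two topologies: compactness of the fibre $K$ is available only in the weak topology, whereas closedness of $H$ is available only when the $K$-factor carries the original topology. To bridge this gap for arbitrary continuous $u$ I would not aim for closedness but retain the weakly compact fibre while keeping $H$ in the mixed original/weak product, and obtain weak-Borelness of $\pi_{Y}[H]$ from a descriptive projection theorem of Arsenin--Kunugui/Saint-Raymond type, the weak $\sigma$-compactness of $X$ rendering $(X,\mathrm{weak})$ a $K_{\sigma}$; the technical heart of the proof is the verification, in this mixed topology, of the section/closedness hypotheses that such a projection theorem requires.
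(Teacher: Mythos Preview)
Your decomposition $u[X]+A=\bigcup_{n}(u[K_{n}]+A)$ and the observation that weak-Borelness suffices are exactly what the paper does. However, you then take the statement at face value for an arbitrary continuous map $u$, whereas the paper's proof tacitly treats $u$ as \emph{linear}: it invokes the standard fact that a continuous linear map between locally convex spaces is weak--weak continuous, whence $u[K_{n}]$ is weakly compact, and then cites the elementary fact that a compact set plus a closed set is closed in any topological vector space. Each summand $u[K_{n}]+A$ is therefore weakly closed, and the result follows in two lines. (All applications of the lemma in the paper use linear $u$, so this is the intended scope; the word ``linear'' is simply missing from the hypothesis.) You yourself note that this route is ``automatic for weakly continuous (in particular linear) $u$'', so for the version the paper actually proves and uses, your argument collapses to the paper's and no projection machinery is needed.

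Your extra work addressing the genuinely nonlinear case is where the gap lies. The Arsenin--Kunugui/Saint-Raymond projection theorems live in descriptive set theory and require Polish (or at least analytic/Souslin) structure on the factors; here $Y$ is merely a Hausdorff locally convex space with no separability, metrizability, or Souslin hypothesis, so those theorems are not available. Moreover, your set $H$ is closed only for the \emph{original} topology on the $K$-factor, while compactness of $K$ is only weak; in the absence of additional structure there is no general mechanism to reconcile these, and the projection $\pi_{Y}[H]$ need not be weak-Borel for nonlinear $u$. In short: for the lemma as actually used (linear $u$), your sketch is correct and coincides with the paper's direct argument; for the lemma as literally stated (arbitrary continuous $u$), neither your proposal nor the paper supplies a valid proof.
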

\begin{proof}
    Let $(\mathcal{K}_n)_{n\in\mathbb{N}}$ be a sequence of weakly compact sets in $X$ such that $X=\bigcup_{n\in\mathbb{N}} \mathcal{K}_n$, then 
    \begin{eqnarray}\label{E:uX+A}
        u[X]+A & \quad=\quad &  \bigcup_{n\in\mathbb N} (u[
\mathcal{K}_n]+A).    
    \end{eqnarray}
    Note that every continuous linear map $u:X\to Y$ is weakly continuous, hence $u[\mathcal{K}_n]$ is weakly compact in $Y$. Furthermore, the sum between a compact and a closed set is closed, according to
    \cite{Schaefer1999}~Theorem~1.1~(iv). Due to \eqref{E:uX+A} this implies that $u[X]+A$ is the union of  countably many weakly closed (and thus weakly Borel) sets, and hence it is weakly Borel. Since the original topology contains more open sets in comparison to the weak one, a weakly Borel set is also Borel with respect to the original topology.
\end{proof}

\begin{lemma}\label{L:CetaS}
For all $\eta\in\Sch$ the subset $C_{\eta}[\Sch']$ of $\Sch'$ is Borel measurable, with respect to the weak and original (strong) topology.
\end{lemma}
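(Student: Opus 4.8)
The plan is to deduce Lemma~\ref{L:CetaS} directly from Lemma~\ref{L:uX+A}. I would invoke the latter with $X=Y=\Sch'$ equipped with its strong topology, with the continuous linear map $u=C_{\eta}$, and with the one-point set $A=\{0\}$. Since $C_{\eta}[\Sch']=C_{\eta}[\Sch']+\{0\}$, once the hypotheses of Lemma~\ref{L:uX+A} are verified its conclusion is precisely the assertion: $C_{\eta}[\Sch']$ is a Borel subset of $\Sch'$ with respect to both $\sigma(\Sch',\Sch)$ and the strong topology. So the task reduces to three verifications: (a) $\{0\}$ is weakly closed in $\Sch'$; (b) $C_{\eta}$ is a continuous linear map $\Sch'\to\Sch'$; (c) $\Sch'$ is a countable union of weakly compact sets.

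Point (a) is immediate because $\Sch'$ is Hausdorff: $\{0\}=\bigcap_{\varphi\in\Sch}\{t\in\Sch'\mid t(\varphi)=0\}$ is an intersection of $\sigma(\Sch',\Sch)$-closed hyperplanes. Point (b) is part of the standing framework of the paper; concretely, for fixed $\eta\in\Sch$ the convolution $t\mapsto\eta\star t$ is the transpose of the continuous linear convolution operator on $\Sch$ induced by the reflection $\check\eta$, and a transpose of a continuous linear map between such spaces is strongly continuous, so $C_{\eta}$ is a continuous linear endomorphism of $\Sch'$ (see e.g.\ \cite{Treves1970}).

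The substance of the proof lies in point (c), which is where I would focus. I would use that $\Sch$ is a nuclear Fréchet space, hence in particular metrizable and reflexive; reflexivity gives $(\Sch')'=\Sch$, so the weak topology of $\Sch'$ is exactly $\sigma(\Sch',\Sch)$. Fix an increasing fundamental sequence $(p_{k})_{k\in\N}$ of continuous seminorms generating the topology of $\Sch$, put $U_{k}:=\{\varphi\in\Sch\mid p_{k}(\varphi)\le 1\}$, and let $V_{k}:=k\cdot U_{k}^{\circ}$ with $U_{k}^{\circ}\subset\Sch'$ the polar of $U_{k}$. Each $U_{k}$ is a $0$-neighbourhood in $\Sch$, so by the Banach--Alaoglu theorem $U_{k}^{\circ}$, and therefore $V_{k}$, is $\sigma(\Sch',\Sch)$-compact. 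On the other hand every $t\in\Sch'$ is continuous, hence $|t(\varphi)|\le C\,p_{k}(\varphi)$ for some $k\in\N$ and $C>0$, so $t\in V_{m}$ with $m:=\max\{k,\lceil C\rceil\}$ (using that the $p_{k}$ increase). Thus $\Sch'=\bigcup_{k\in\N}V_{k}$ realises $\Sch'$ as a countable union of weakly compact sets, as required.

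With (a)--(c) in hand, Lemma~\ref{L:uX+A} applies and yields the claim. I expect the only mildly delicate point in a fully detailed write-up to be the identification of the abstract ``weak topology'' of $\Sch'$ appearing in Lemma~\ref{L:uX+A} with $\sigma(\Sch',\Sch)$, which is exactly where reflexivity of $\Sch$ is used; the remaining verifications are routine.
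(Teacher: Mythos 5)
Your proposal is correct and follows essentially the same route as the paper: apply Lemma~\ref{L:uX+A} with $X=Y=\Sch'$, $u=C_{\eta}$, $A=\{0\}$, and exhibit $\Sch'$ as a countable union of weakly compact sets via polars of a countable base of $0$-neighbourhoods of the metrizable space $\Sch$ together with the Alaoglu--Bourbaki theorem. The only cosmetic differences are that you scale the polars by $k$ (the paper absorbs this by noting that any $t\in\Sch'$ lies in the polar of some basic neighbourhood directly) and that you spell out the reflexivity argument identifying the weak topology of $\Sch'$ with $\sigma(\Sch',\Sch)$, which the paper leaves implicit.
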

\begin{proof}
This is a direct consequence of Lemma~\ref{L:uX+A}, as $\Sch'$ is the countable union of weakly compact sets. Indeed, since the $\Sch$ is metrizable, there is a countable family $(\mathcal{U}_n)_{n\in\N}$ which forms a base of $0$-neighborhoods in $\Sch$. Then, according to the Alaoglu--Bourbaki theorem (\cite{Robertson1980}~Chapter~III~Theorem~6), the polars $\mathcal{K}_n:= \mathcal{U}_n^\circ\subset \Sch'$ are compact subsets of $\Sch'$ in the weak-* topology, such that  
\begin{eqnarray*}
    \Sch' & \quad=\quad & \bigcup_{n=1}^\infty \mathcal{K}_n.
\end{eqnarray*}
Then the statement follows from Lemma~\ref{L:uX+A} by choosing $X=Y=\Sch'$, and $u=C_\eta$ and $A=\{0\}$.
\end{proof}

In probability and measure theory, the notion of Polish spaces are rather important. These are such topological spaces, which are complete, separable, and metrizable. A generalization of such spaces are the Souslin spaces, which are images of Polish spaces by continuous maps. The significance of Souslin spaces in measure theory are given by the following collection of known theorems.

\begin{remark}\label{R:SouslinInj}
The notion of Borel sets over distribution theory function spaces is robust.
\begin{enumerate}[(i)]
    \item \label{R:SouslinInj:i} \cite{Bogachev1998}~Theorem~A.3.15~(iii) shows the following. Let $X$ and $Y$ be Hausdorff Souslin topological spaces, $C:\,X\to Y$ be an injective Borel measurable map. Then, the image of Borel sets in $X$ are Borel sets in $Y$. In particular, $C$ is a Borel isomorphism between $X$ and $C[X]\subset Y$.
    \item \label{R:SouslinInj:ii} It is well known, but \cite{Treves1970}~Proposition~A.9 and after explicitly shows that the function spaces $\EE$, $\Sch$, $\DD$ and their strong duals $\EE'$, $\Sch'$, $\DD'$ are Souslin spaces.
    \item \label{R:SouslinInj:iii} By construction, the continuous image of a Souslin space is Souslin. In particular, any topology on a Souslin space which is weaker than the original one is still Souslin.
    \item \label{R:SouslinInj:iv} A direct consequence of the above is that for any $\eta\in\Sch$ the subspace $C_{\eta}[\Sch']\subset\EE\cap\Sch'$ is Souslin with either of the subspace topology against $\EE$ or $\Sch'$.
    \item \label{R:SouslinInj:v} By taking the identity map as the Borel measurable injection in \enref{R:SouslinInj:i}, and also applying \enref{R:SouslinInj:iii}, it follows that in a Hausdorff Souslin space any weaker Hausdorff topology will generate the same Borel sets as the original topology. Consequently, on $\EE$, $\Sch$, $\DD$, $\EE'$, $\Sch'$, $\DD'$, the notion of Borel sets does not depend on which topology we consider between the strong (original) and the weak-* topologies. Moreover, in $C_{\eta}[\Sch']\subset\EE\cap\Sch'$ the notion of Borel sets does not depend on whether it is equipped with the subspace topology against any topology on $\EE$ or $\Sch'$ between their respective original or weak-* topologies. Using these arguments, one can also prove Lemma~\ref{L:CetaS} without referring to the metrizability of $\Sch$, which can come useful in future efforts of extending the results of this paper to the case of manifold spacetimes.
\end{enumerate}
\end{remark}

\section{Existence theorem of the UV limit measure}
\label{S:ExistenceMu}

In the theory of Wilsonian regularization, the properties of the convolution operators $C_{\eta}:\,\Sch'\to\Sch'$ (with some $\eta\in\Sch$) is central. Below some key results are recalled, on which our proofs will hinge.

\begin{remark}\label{R:ConvSurj}
Strong factorization properties of the convolution operator $\star:\,\Sch\times\Sch\to\Sch$.
\begin{enumerate}[(i)]
    \item \label{R:ConvSurj:i} In \cite{Voigt1984Factorization}~Theorem~3.2, or \cite{Garrett2004}, or \cite{Miyazaki1960}~Lemma~1 it is shown that the convolution operator on $\Sch$ is surjective, i.e.\ $\Sch\star\Sch=\Sch$. That is, for any $j\in\Sch$ there exists some $\eta,\ell\in\Sch$ such that $j=\eta\star\ell$ holds.
    \item \label{R:ConvSurj:ii} In addition, \cite{Garrett2004} and the proof of \cite{Miyazaki1960}~Lemma~1 shows that in the above factorization, one of the factors, say $\eta$, may be chosen to be $F(\eta)>0$. Quite obviously, the normalization of such $\eta$ may be chosen ensuring that $F(\eta)$ is unity at the origin. Due to the paracompactness of $\R^{N}$, for such $\eta$ one may construct a compactly supported smooth symmetric function $\alpha:\,\R^{N}\to[0,1]$ which is unity over a compact neighborhood of the origin containing the level set $\{F(\eta)\geq\frac{1}{2}\}$. Clearly, the function $\Phi:=(1-\alpha)\cdot 1+\alpha\cdot 1/F(\eta)$ is smooth, symmetric, everywhere positive, and is unity outside the support of $\alpha$, moreover $F^{-1}(\Phi\cdot F(\eta))\in\RegSp$, see again Definition~\ref{D:Reg}. Furthermore, for any $\ell\in\Sch$ one has $F^{-1}(1/\Phi\cdot F(\ell))\in\Sch$. Therefore, without loss of generality, the $\eta$ in \enref{R:ConvSurj:i} may be chosen to be $\eta\in\RegSp$.
    \item \label{R:ConvSurj:iii} In \cite{Voigt1984Factorization}~Theorem~3.2 it is shown that in $\Sch$ the so-called compact factorization property holds. Namely, for any compact set $\mathcal{J}\subset\Sch$ there exists some $\eta\in\Sch$ and some compact set $\mathcal{L}\subset\Sch$, such that $\mathcal{J}=\eta\star\mathcal{L}$ holds.
    \item \label{R:ConvSurj:iv} The straightforward combination of \enref{R:ConvSurj:ii} and \enref{R:ConvSurj:iii} evidently yields that the compact factorization property holds in a stronger form as well: one may choose $\eta$ such that $F(\eta)>0$, or one may even choose $\eta\in\RegSp$.
\end{enumerate}
\end{remark}

For our practical uses, we distill the following key lemmas as a consequence of the above known results recalled from the literature.

\begin{lemma}\label{L:Surj}
(factorization)
For every $j\in\Sch$ there exist $\eta\in\Sch$ and $\ell\in\Sch$, such that $j=\eta\star\ell$ holds. (This $\eta$ exists also such that $\eta\in\RegS$ or even $\eta\in\RegSp$.)
\end{lemma}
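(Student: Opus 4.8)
The plan is to deduce Lemma~\ref{L:Surj} directly from the known results collected in Remark~\ref{R:ConvSurj}, essentially as a convenient repackaging. The bare surjectivity statement $\Sch\star\Sch=\Sch$ is exactly Remark~\ref{R:ConvSurj}~\enref{R:ConvSurj:i}, so the first sentence requires no work beyond citing it: given $j\in\Sch$ there are $\eta,\ell\in\Sch$ with $j=\eta\star\ell$. The only substantive point is the parenthetical strengthening, namely that $\eta$ may be taken in $\RegS$ or even in $\RegSp$.

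For the strengthening I would follow the construction already spelled out in Remark~\ref{R:ConvSurj}~\enref{R:ConvSurj:ii}. First invoke \enref{R:ConvSurj:ii} to get a factorization $j=\eta_0\star\ell_0$ with $F(\eta_0)>0$ everywhere and, after rescaling, $F(\eta_0)(0)=1$. Then pick a compactly supported smooth symmetric $\alpha:\R^N\to[0,1]$ equal to $1$ on a compact neighborhood of $0$ that contains the level set $\{F(\eta_0)\ge\tfrac12\}$, and set $\Phi:=(1-\alpha)\cdot 1+\alpha\cdot 1/F(\eta_0)$. As noted there, $\Phi$ is smooth, symmetric, everywhere positive, and identically $1$ outside $\supp\alpha$; hence $\Phi\cdot F(\eta_0)$ is smooth, lies between $0$ and $1$ (it equals $1$ on the neighborhood where $\alpha=1$, equals $F(\eta_0)\le 1$ where $\alpha=0$, and is a convex-type interpolation in between that stays in $(0,1]$), has rapid decay since it agrees with $F(\eta_0)\in F[\Sch]$ outside a compact set, and is bounded below away from $0$ on every compact set. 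Thus $\eta:=F^{-1}(\Phi\cdot F(\eta_0))\in\RegSp\subset\RegS$. Meanwhile $1/\Phi$ is smooth and equals $1$ outside $\supp\alpha$, so multiplication by $1/\Phi$ preserves $\Sch$ in frequency space, giving $\ell:=F^{-1}\big((1/\Phi)\cdot F(\ell_0)\big)\in\Sch$. Finally $F(\eta)\cdot F(\ell)=\Phi\cdot F(\eta_0)\cdot(1/\Phi)\cdot F(\ell_0)=F(\eta_0)\cdot F(\ell_0)=F(j)$, so $\eta\star\ell=j$ by the convolution theorem, as desired.

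I do not expect a genuine obstacle here, since every ingredient is already assembled in Remark~\ref{R:ConvSurj}; the lemma is a statement extracted ``for practical uses'' as the text says. The only care needed is bookkeeping: verifying that $\Phi\cdot F(\eta_0)$ genuinely satisfies all three defining conditions of $\RegSp$ (values in $[0,1]$, equal to $1$ near the origin, strictly positive), and that the interpolation $\Phi$ does not accidentally push $\Phi\cdot F(\eta_0)$ above $1$ — this is where choosing $\alpha$ to cover the full level set $\{F(\eta_0)\ge\tfrac12\}$ matters, so that on $\supp\alpha$ one has $\Phi\cdot F(\eta_0)=F(\eta_0)+\alpha(1-F(\eta_0))$ wherever $F(\eta_0)\le 1$, keeping the product in $[0,1]$. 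If one wanted to be maximally careful one could instead define $\Phi$ so that $\Phi\cdot F(\eta_0)$ is manifestly a convex combination of $1$ and $F(\eta_0)$, e.g.\ via $\Phi\cdot F(\eta_0)=\alpha\cdot 1+(1-\alpha)\cdot F(\eta_0)$, which lies in $[0,1]$ automatically and equals $1$ where $\alpha=1$; either route works. With those checks in place the proof is complete.
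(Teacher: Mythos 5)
Your proposal is correct and follows essentially the same route as the paper: the paper's own proof is literally the one-line statement that the lemma is a summary restatement of Remark~\ref{R:ConvSurj}~\enref{R:ConvSurj:ii}, whose construction (factor with $F(\eta_0)>0$, then modify by $\Phi=(1-\alpha)+\alpha/F(\eta_0)$ and compensate $\ell_0$ by $1/\Phi$) is exactly what you reproduce. Your extra bookkeeping — noting that the level-set condition $\{F(\eta_0)\geq\tfrac{1}{2}\}\subset\{\alpha=1\}$ is what keeps $\Phi\cdot F(\eta_0)=(1-\alpha)F(\eta_0)+\alpha$ in $(0,1]$ — is a correct and welcome verification of a detail the paper leaves implicit.
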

\begin{proof}
This is merely a summary restatement of Remark~\ref{R:ConvSurj}~\enref{R:ConvSurj:ii}.
\end{proof}

\begin{lemma}\label{L:Injectivity}
Let $\eta\in\Sch_{\C}$ with nowhere vanishing $F(\eta)$. Then, the convolution 
operator $C_{\eta}:\,\Sch_{\C}'\to\Sch_{\C}',\,t\mapsto C_{\eta}\,t:=\eta\star t$ is injective. In particular, then $C_{\eta}:\,\Sch_{\C}\to\Sch_{\C}$ is injective.
\end{lemma}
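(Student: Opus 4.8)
The plan is to pass to frequency space via the Fourier transform and reduce the claim to the injectivity of pointwise multiplication by the nowhere vanishing smooth function $F(\eta)$ on $\Sch_{\C}'$; the one point requiring care is that one may \emph{not} simply divide by $F(\eta)$, since $1/F(\eta)$, although smooth, need not be a multiplier of $\Sch_{\C}'$ (e.g.\ when $\eta$ is a Gaussian).

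First I would take $t\in\Sch_{\C}'$ with $C_{\eta}t=\eta\star t=0$ and apply the Fourier transform. Since $\eta\in\Sch_{\C}$, the convolution $\eta\star t$ is again a tempered distribution, and the exchange relation $F(\eta\star t)=F(\eta)\cdot F(t)$ holds (with the Fourier normalization fixed in Section~\ref{S:Wilson}), the right-hand side being the product of the Schwartz function $F(\eta)$ with the tempered distribution $F(t)$. Hence $F(\eta)\cdot F(t)=0$ in $\Sch_{\C}'$.

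Next, instead of dividing by $F(\eta)$, I would localize against test functions: for any $\varphi\in\DD_{\C}$ the function $\varphi/F(\eta)$ is smooth with the same compact support as $\varphi$ (as $F(\eta)$ is nowhere vanishing), so $\varphi/F(\eta)\in\DD_{\C}\subset\Sch_{\C}$, and therefore
\begin{eqnarray*}
 \langle F(t),\,\varphi\rangle \;=\; \big\langle F(t),\;F(\eta)\cdot(\varphi/F(\eta))\big\rangle \;=\; \big\langle F(\eta)\cdot F(t),\;\varphi/F(\eta)\big\rangle \;=\; 0 .
\end{eqnarray*}
Since $\DD_{\C}$ is dense in $\Sch_{\C}$ and $F(t)$ is continuous on $\Sch_{\C}$, this forces $F(t)=0$ in $\Sch_{\C}'$, and since $F:\Sch_{\C}'\to\Sch_{\C}'$ is a topological automorphism we conclude $t=0$. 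This proves injectivity of $C_{\eta}$ on $\Sch_{\C}'$. The ``in particular'' clause follows because the canonical injection $\Sch_{\C}\hookrightarrow\Sch_{\C}'$ intertwines $C_{\eta}$ on $\Sch_{\C}$ with $C_{\eta}$ on $\Sch_{\C}'$ (the smooth function $\eta\star\varphi$ represents the distribution $\eta\star\varphi$), so injectivity of the latter restricts to injectivity of the former.

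The argument is short, and the only genuine obstacle is the one flagged above: the naive ``division by $F(\eta)$'' is illegitimate in $\Sch_{\C}'$ and must be replaced by the localization-against-$\DD_{\C}$ plus density argument. Everything else is the standard Fourier dictionary exchanging convolution and pointwise multiplication.
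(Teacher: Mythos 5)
Your proof is correct and follows essentially the same route as the paper: Fourier transform the identity $\eta\star t=0$, pair against $\varphi/F(\eta)\in\DD_{\C}$ for arbitrary $\varphi\in\DD_{\C}$ to conclude $F(t)$ vanishes on $\DD_{\C}$, then use density of $\DD_{\C}$ in $\Sch_{\C}$ and injectivity of the Fourier transform. Your explicit remark that one cannot simply divide by $F(\eta)$ in $\Sch_{\C}'$ is exactly the subtlety the paper's localization argument is designed to avoid, and your handling of the ``in particular'' clause via the canonical embedding is fine.
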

\begin{proof}
    Take some $\eta\in\Sch_{\C}$ and $t\in\Sch_{\C}'$. Then the Fourier transform of the convolution of $\eta$ and $t$ is $F(\eta\star t)=F(t)F(\eta)$ as tempered distribution, meaning that for all $\varphi\in\Sch_{\C}$ one has $(F(\eta\star t)\,|\,\varphi)=(F(t)\,|\,F(\eta)\cdot\varphi)$. In particular, for all $\varphi\in\DD_{\C}\subset\Sch_{\C}$ that identity holds. Consider now such an $\eta$ which has nowhere vanishing $F(\eta)$. Then for any $\varphi\in\DD_{\C}$ the function $\varphi/F(\eta)$ is everywhere defined, smooth and compactly supported, i.e. $\varphi/F(\eta)\in\DD_{\C}$. Consequently, for all $\varphi\in\DD_{\C}$ one has $(F(\eta\star t)\,|\,\varphi/F(\eta))=(F(t)\,|\,\varphi)$. If $t\in\Sch'$ is such that $\eta\star t=0$ holds, then by means of our observation, for all $\varphi\in\DD_{\C}$ the identity $(F(t)\,|\,\varphi)=(F(\eta\star t)\,|\,\varphi/F(\eta))=0$ needs to hold. That is, the tempered distribution $F(t)\in\Sch_{\C}'$ is zero on the dense subspace $\DD_{\C}\subset\Sch_{\C}$, implying that $F(t)=0$. By the injectivity of the Fourier transformation, $t=0$ follows.
\end{proof}

\begin{lemma}\label{L:Seq}
(sequential factorization)
Let $\seq{j}$ be a sequence in $\Sch$ that converges to some $j\in \Sch$. Then there exist a function $\eta\in\Sch$ and a sequence $\seq{\ell}$ in $\Sch$ convergent to some $\ell\in\Sch$, such that $j_n=\eta\star \ell_n$ holds for every $n\in\N$. (This $\eta$ exists also such that $\eta\in\RegS$ or even $\eta\in\RegSp$.)
\end{lemma}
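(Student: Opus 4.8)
The plan is to reduce the statement to the compact factorization property already recorded in Remark~\ref{R:ConvSurj}~\enref{R:ConvSurj:iii}--\enref{R:ConvSurj:iv}, and then to upgrade a bare choice of preimages into a convergent sequence by a standard compactness argument. First I would observe that $\mathcal{J}:=\{j_n\mid n\in\N\}\cup\{j\}$ is a compact subset of $\Sch$, being a convergent sequence together with its limit in the metrizable (hence Hausdorff) space $\Sch$. Applying the strong form of the compact factorization property in Remark~\ref{R:ConvSurj}~\enref{R:ConvSurj:iv}, I obtain some $\eta\in\RegSp$ (in particular $\eta\in\RegS\subset\Sch$, with $F(\eta)>0$ everywhere) together with a compact set $\mathcal{L}\subset\Sch$ such that $\mathcal{J}=\eta\star\mathcal{L}=C_{\eta}[\mathcal{L}]$.

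Next I would invoke Lemma~\ref{L:Injectivity}: since $F(\eta)$ is nowhere vanishing, the convolution operator $C_{\eta}:\Sch\to\Sch$ is injective. Consequently the restriction $C_{\eta}|_{\mathcal{L}}:\mathcal{L}\to\mathcal{J}$ is a continuous bijection, surjectivity being the factorization identity $\mathcal{J}=C_{\eta}[\mathcal{L}]$ and injectivity coming from Lemma~\ref{L:Injectivity}. As a continuous bijection from a compact space onto a Hausdorff space, $C_{\eta}|_{\mathcal{L}}$ is a homeomorphism, so its inverse $\Psi:=(C_{\eta}|_{\mathcal{L}})^{-1}:\mathcal{J}\to\mathcal{L}$ is continuous. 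For each $n$ I set $\ell_n:=\Psi(j_n)$ and $\ell:=\Psi(j)$; these lie in $\mathcal{L}\subset\Sch$ and satisfy $\eta\star\ell_n=j_n$ and $\eta\star\ell=j$ by construction, and they are the unique such preimages in $\Sch$ by injectivity of $C_{\eta}$. Since $j_n\to j$ in $\mathcal{J}$ and $\Psi$ is continuous, $\ell_n\to\ell$ in $\mathcal{L}$, hence in $\Sch$. This produces $\eta\in\RegSp$ (a fortiori $\eta\in\RegS$), the sequence $\seq{\ell}$ and its limit $\ell$, as required; dropping the $\RegSp$-refinement, one could equally quote only Remark~\ref{R:ConvSurj}~\enref{R:ConvSurj:iii} to get a bare $\eta\in\Sch$.

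I do not anticipate a genuine obstacle here. The one point that needs care is that one cannot simply pick preimages $\ell_n$ arbitrarily and expect convergence: the argument must exploit that all the $j_n$ and $j$ arise as the $C_{\eta}$-image of a single compact set $\mathcal{L}$, and that $C_{\eta}$ is globally injective on $\Sch$, which both forces the preimages into $\mathcal{L}$ and makes the inverse map continuous there. Everything else — compactness of a convergent sequence together with its limit, and the compact-to-Hausdorff homeomorphism lemma — is routine point-set topology.
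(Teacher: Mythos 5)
Your proof is correct and rests on exactly the same two ingredients as the paper's: the strengthened compact factorization property of Remark~\ref{R:ConvSurj}~\enref{R:ConvSurj:iv} giving $\eta\in\RegSp$ and a compact $\mathcal{L}\subset\Sch$ with $\mathcal{J}=\eta\star\mathcal{L}$, and the injectivity of $C_{\eta}$ from Lemma~\ref{L:Injectivity}. The only difference is how convergence of the preimages is extracted: the paper picks preimages $\ell_{n}\in\mathcal{L}$ and argues that a sequence in a compact set whose accumulation points all coincide (uniqueness being forced by continuity of convolution plus injectivity of $C_{\eta}$) must converge, whereas you package the same facts into the statement that $C_{\eta}\vert_{\mathcal{L}}:\mathcal{L}\to\mathcal{J}$ is a continuous bijection from a compact space onto a Hausdorff space, hence a homeomorphism, and then read off $\ell_{n}=\Psi(j_{n})\to\Psi(j)$ from continuity of the inverse. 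Your version is marginally slicker and avoids the subsequence bookkeeping; the paper's version avoids invoking the compact-to-Hausdorff homeomorphism lemma. Both are fine.

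One small caveat: your closing remark that, ``dropping the $\RegSp$-refinement, one could equally quote only Remark~\ref{R:ConvSurj}~\enref{R:ConvSurj:iii} to get a bare $\eta\in\Sch$'' does not stand on its own, because with an $\eta$ supplied by \enref{R:ConvSurj:iii} alone you lose the guarantee $F(\eta)>0$, hence the injectivity of $C_{\eta}$ on which your homeomorphism (and the uniqueness of preimages) depends. This is harmless for the lemma as stated, since your main argument already produces $\eta\in\RegSp\subset\RegS\subset\Sch$, which covers all three cases, but the parenthetical shortcut as phrased is not a valid alternative proof.
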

\begin{proof}
    By assumption,  $\mathcal{J}:= \left\{j_n\,\vert\,n\in\N\right\}\cup\{j\}$ is a compact set in $\Sch$. Applying Remark~\ref{R:ConvSurj}~\enref{R:ConvSurj:iv} it follows that there exits some $\eta\in\RegSp$ and some compact set $\mathcal{L}\subset\Sch$, such that $\mathcal{J}=\eta\star\mathcal{L}$ holds. By this observation, for every $n\in\N$ there is some $\ell_{n}\in\mathcal{L}$ such that $j_{n}=\eta\star\ell_{n}$ holds. Since the sequence $\seq{\ell}$ runs in a compact set, therefore it is bounded and has accumulation points. We will show that all its accumulation points are the same, which implies that it is convergent. For let $\lambda_{1},\lambda_{2}\in\Sch$ be accumulation points. Then, there exists some index subsequence $(n_{k})_{k\in\N}$ such that $(\ell_{n_{k}})_{k\in\N}$ converges to $\lambda_{1}$, and there exists some index subsequence $(m_{l})_{l\in\N}$ such that $(\ell_{m_{l}})_{l\in\N}$ converges to $\lambda_{2}$. For these, one has that for all $k\in\N:$ $\eta\star\ell_{n_{k}}=j_{n_{k}}$, moreover for all $l\in\N:$ $\eta\star\ell_{m_{l}}=j_{m_{l}}$. By the continuity of convolution it follows then that $\eta\star\lambda_{1}=j$ and $\eta\star\lambda_{2}=j$, i.e.\ one has $\eta\star\lambda_{1}=\eta\star\lambda_{2}$. Since one had $F(\eta)>0$, the convolution by $\eta$ is injective by means of Lemma~\ref{L:Injectivity} and thus $\lambda_{1}=\lambda_{2}$ follows. Therefore, $\seq{\ell}$ converges to some $\ell:=\lambda_{1}=\lambda_{2}$.
\end{proof}

\begin{remark}\label{R:ApproxIdS}
The following results on approximate identities are recalled from literature.
\begin{enumerate}[(i)]
  \item \label{R:ApproxIdS:i} The convergence of a sequence of test functions $\alpha_{n}\in\DD$ ($n\in\N$) to the Dirac delta distribution $\delta\in\DD'$ in the $\DD'$ topology guarantees that for all test functions $\varphi\in\DD$ one has $\alpha_{n}\star\varphi\to\varphi$ pointwise as $n\to\infty$. It is not guaranteed, however, that the convergence also holds in the $\DD$ topology. On the other hand, it is well known that one can construct special Dirac delta approximating sequences $\alpha_{n}\in\DD$ ($n\in\N$) for which the pertinent convergence holds in $\DD$. These are called \defin{approximate identities}, see e.g.\ \cite{Rudin1991}~Theorem~6.32(a).
  \item \label{R:ApproxIdS:ii} It is common knowledge that the space $\Sch$ also admits approximate identity in the following sense. Sequences $\alpha_{n}\in\RegS\subset\Sch$ ($n\in\N$) exist such that for all $\varphi\in\Sch$ one has $\alpha_{n}\star\varphi\to\varphi$ in the $\Sch$ topology as $n\to\infty$. For let us take any $\alpha\in\RegS$ and define its compressed version $\alpha^{}_{\Lambda}(x):=\Lambda^{N}\,\alpha(\Lambda\,x)$ $\,$ ($x\in\R^{N}$) for all $1\leq\Lambda<\infty$. One can verify that for all $\varphi\in\Sch$ one has $\alpha^{}_{\Lambda}\star\varphi\to\varphi$ in the $\Sch$ topology as $\Lambda\to\infty$.  To confirm this, one should note that $F:\,\Sch_{\C}\to\Sch_{\C}$ is topological automorphism, and therefore it is enough to show that $(1-F(\alpha^{}_{\Lambda}))\cdot F(\varphi)\to 0$ in the $\Sch$ topology as $\Lambda\to\infty$. The canonical topology of $\Sch$ is characterized by the family of seminorms $\sup_{x\in\R^{N}}|Q(x)\,P(\partial_{x})\,\psi(x)|$ on a $\psi\in\Sch_{\C}$, with $P,Q$ being any polynomials. Clearly, the seminorms $\sup_{x\in\R^{N}}|P(\partial_{x})\,Q(x)\,\psi(x)|$ ($\psi\in\Sch_{\C}$) define the very same topology. Due to H\"older's inequality and the Sobolev inequality, this family of seminorms is equivalent to the family of seminorms $\Vert P(\partial)\,Q\,\psi\Vert_{L^{2}}$ ($\psi\in\Sch_{\C}$), with $P,Q$ being polynomials. Using $\left.F(\alpha^{}_{\Lambda})\right\vert_{\omega}=\left.F(\alpha)\right\vert_{\omega/\Lambda}$ ($\omega\in\R^{N}$), via Lebesgue's theorem of dominated convergence it is not hard to see that for all polynomials $P,Q$ and all $\alpha\in\RegS$, $\varphi\in\Sch$ one has $\Vert P(\partial)\,Q\cdot(1-F(\alpha^{}_{\Lambda}))\cdot F(\varphi)\Vert_{L^{2}}\to 0$ as $\Lambda\to\infty$, which completes the argument.
\end{enumerate}
\end{remark}

\begin{remark}
\label{R:BochnerMinlos}
Recall that given some sigma-additive non-negative valued finite measure $\mu$ on the Borel sets of $\Sch'$ it has a corresponding Fourier transform function $Z:\,\Sch\to\C$ defined by the formula
\begin{eqnarray}
 Z(j) & := & \int\limits_{\phi\in\Sch'} \e^{\I\,(\phi|j)} \,\de\mu(\phi) \qquad \mbox{ for all } j\in\Sch.
\end{eqnarray}
In the QFT context, this is sometimes called the \defin{partition function}, or in probability theory the \defin{characteristic function}. It has some notable properties, namely that it uniquely characterizes the measure $\mu$, it is continuous, and is a so-called positive definite function. The latter means that for all $m\in\N$, for all finite systems $j_{1},j_{2},\ldots,j_{m}\in\Sch$, the $m\times m$ matrix $\big(Z(j_k-j_l)\big)_{1\leq k,l\leq m}$ is positive semidefinite.  The Bochner--Minlos theorem (see e.g.\ \cite{Velhinho2017}~Corollary1, or \cite{Bogachev2007}~vol.2~Theorem7.13.9, or \cite{Bogachev2017}~Corollary5.11.11) says that the converse is also true. Namely, given a continuous positive definite function $Z:\,\Sch\to\C$ there is a unique sigma-additive non-negative valued finite measure $\mu$ on the Borel sets of $\Sch'$, such that the Fourier transform of $\mu$ equals to $Z$. Moreover, $\mu(\Sch')=Z(0)$ holds.
\end{remark}

\begin{remark}
\label{R:WRGZ}
Let $(\mu_{\eta})_{\eta\in\Reg}$ be a Wilsonian RG flow as in Definition~\ref{D:WRG}. Then, their corresponding family of Fourier transform functions $(Z_{\eta})_{\eta\in\Reg}$, with $Z_{\eta}:\,\Sch'\to\C$ for each $\eta\in\Reg$ being continuous and positive definite, obey the relation
\begin{eqnarray}\label{E:WRGZ}
\forall\, \eta,\eta',\eta''\in\Reg \mbox{ satisfying } C_{\eta''}=C_{\eta'}\,C_{\eta}: \qquad Z_{\eta''}=Z_{\eta}\circ C_{\eta'^{t}}.
\end{eqnarray}
Here, $\eta'^{t}$ stands for the reflected $\eta'$, i.e.\ for all $x\in\R^N$ one has $\eta'^{t}(x):=\eta'(-x)$. This equation is the simple consequence of the Wilsonian RG equation \eqref{E:WRG}, and the relation between measure pushforward and function composition, i.e.\ the fundamental formula for integration variable substitution, moreover the relation $(C_{\eta'}\,\phi\,|\,j)=(\phi\,|\,C_{\eta'^{t}}j)$ for all $\phi\in\Sch'$, $j\in\Sch$.
\end{remark}

In the following, we prove that the family of Fourier transforms $(Z_{\eta})_{\eta\in\Reg}$ of a Wilsonian RG flow $(\mu_{\eta})_{\eta\in\Reg}$ of sigma-additive non-negative valued finite measures has a corresponding continuous positive definite function $Z:\,\Sch\to\C$, such that $Z_{\eta}=Z\circ C_{\eta^{t}}$ holds for $\eta\in\Reg$.

\begin{theorem}\label{T:ZExists}
    Let $\Reg=\Sch$ or $\RegS$ or $\RegSp$, and $(Z_\eta)_{\eta\in \Reg}$ be a family of $\Sch\to\C$ continuous and positive definite functions, which satisfies \eqref{E:WRGZ}, i.e.\ has the property that
    \begin{eqnarray}\label{E:RGZ}
        Z_{\eta\star\eta'}(j)=Z_{\eta'}(\eta^{t}\star j)\qquad \mbox{for all}\quad \eta,\eta'\in\Reg \mbox{ and }j\in \Sch.
    \end{eqnarray}
    Then there exists a unique continuous positive definite function $Z:\,\Sch\to \C$ such that 
    \begin{eqnarray}\label{E:ZUV}
        Z_\eta(j)=Z(\eta^t \star j)\qquad\mbox{for all}\quad \eta\in\Reg \mbox{ and } j\in \Sch.
    \end{eqnarray}
\end{theorem}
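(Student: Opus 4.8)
The plan is to build $Z$ by hand from a factorization and then verify its properties one at a time. For each $k\in\Sch$, Lemma~\ref{L:Surj} lets me write $k=\mu^{t}\star\ell$ with $\mu\in\RegSp$ and $\ell\in\Sch$ (using that $\RegSp$ is stable under the reflection $\eta\mapsto\eta^{t}$, and that $\RegSp\subset\Reg$ for each of the three admitted choices of $\Reg$), and I would simply \emph{set} $Z(k):=Z_{\mu}(\ell)$. The whole theorem then follows once I establish the single claim: \emph{if $\eta\in\Reg$, $\mu\in\RegSp$, $m,\ell\in\Sch$ satisfy $\eta^{t}\star m=\mu^{t}\star\ell$, then $Z_{\eta}(m)=Z_{\mu}(\ell)$}. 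Indeed, restricting the claim to $\eta\in\RegSp$ shows that $Z$ is well defined (independent of the chosen factorization, since any two representations $\mu_{1}^{t}\star\ell_{1}=\mu_{2}^{t}\star\ell_{2}$ with $\mu_{1},\mu_{2}\in\RegSp$ fall under it), and the claim in full force gives exactly the desired identity \eqref{E:ZUV}, because $\eta^{t}\star m$ always admits some factorization $\mu^{t}\star\ell$ with $\mu\in\RegSp$ by Lemma~\ref{L:Surj}.

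To prove the claim, I would first extract from the cocycle \eqref{E:RGZ}, applied twice and combined with commutativity of convolution, the symmetry relation $Z_{\eta_{1}}(\eta_{2}^{t}\star j)=Z_{\eta_{2}}(\eta_{1}^{t}\star j)$ valid for all $\eta_{1},\eta_{2}\in\Reg$ and $j\in\Sch$. Then I would fix once and for all an approximate identity $(\alpha_{n})_{n\in\N}$ in $\RegD$, so that $\alpha_{n}\star\varphi\to\varphi$ in $\Sch$ for every $\varphi\in\Sch$ and, decisively, each $F(\alpha_{n})\in\DD$ is compactly supported; these $\alpha_{n}$ serve only as an auxiliary device and need not lie in $\Reg$. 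The heart of the argument is the observation that $g_{n}:=F^{-1}(F(\alpha_{n})\cdot F(m)/F(\mu^{t}))$ is again a Schwartz function: since $F(\mu^{t})$ is smooth and nowhere zero, the quotient $F(m)/F(\mu^{t})$ is smooth, and multiplying it by the compactly supported $F(\alpha_{n})$ produces an element of $\DD$. A brief Fourier-space computation, exploiting the identity $F(\eta^{t})F(m)=F(\mu^{t})F(\ell)$, then yields $\mu^{t}\star g_{n}=\alpha_{n}\star m$ and $\eta^{t}\star g_{n}=\alpha_{n}\star\ell$. Feeding these into the symmetry relation and using the continuity of $Z_{\eta}$ and $Z_{\mu}$ gives
\begin{eqnarray*}
 Z_{\eta}(m)&=&\lim_{n\to\infty}Z_{\eta}(\alpha_{n}\star m)\;=\;\lim_{n\to\infty}Z_{\eta}(\mu^{t}\star g_{n})\\
 &=&\lim_{n\to\infty}Z_{\mu}(\eta^{t}\star g_{n})\;=\;\lim_{n\to\infty}Z_{\mu}(\alpha_{n}\star\ell)\;=\;Z_{\mu}(\ell),
\end{eqnarray*}
which is the claim.

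Once \eqref{E:ZUV} is in hand, the three remaining properties are straightforward. For continuity of $Z$: given $j_{n}\to j$ in $\Sch$, the sequential factorization Lemma~\ref{L:Seq} supplies a single $\eta\in\RegSp$ and a convergent sequence $\ell_{n}\to\ell$ in $\Sch$ with $j_{n}=\eta^{t}\star\ell_{n}$ (hence $j=\eta^{t}\star\ell$ by continuity of convolution), so that $Z(j_{n})=Z_{\eta}(\ell_{n})\to Z_{\eta}(\ell)=Z(j)$ by continuity of $Z_{\eta}$; metrizability of $\Sch$ then upgrades this to continuity. For positive definiteness: given $j_{1},\dots,j_{m}\in\Sch$, the strengthened compact factorization of Remark~\ref{R:ConvSurj}~\enref{R:ConvSurj:iv} provides one $\eta\in\RegSp$ and $\ell_{1},\dots,\ell_{m}\in\Sch$ with $j_{k}=\eta^{t}\star\ell_{k}$, whence $Z(j_{k}-j_{l})=Z_{\eta}(\ell_{k}-\ell_{l})$ and the matrix $(Z(j_{k}-j_{l}))_{1\le k,l\le m}$ inherits positive semidefiniteness from $Z_{\eta}$. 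Uniqueness is immediate from \eqref{E:ZUV}: any continuous positive definite $Z'$ with that property satisfies $Z'(k)=Z_{\mu}(\ell)=Z(k)$ whenever $k=\mu^{t}\star\ell$ with $\mu\in\RegSp$, and such factorizations exhaust $\Sch$.

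I expect the main obstacle to be the claim of the second paragraph, and within it the construction of the bridging functions $g_{n}$: this is precisely where the flat-spacetime structure is used, through the refined factorization $\Sch=\RegSp\star\Sch$ and through the possibility of dividing a frequency-truncated Schwartz function by the nowhere-vanishing $F(\mu^{t})$, neither of which has an obvious analogue on a general manifold. Everything else is routine bookkeeping built on the factorization results (Lemma~\ref{L:Surj}, Lemma~\ref{L:Seq}, Remark~\ref{R:ConvSurj}~\enref{R:ConvSurj:iv}) and the elementary properties of the individual $Z_{\eta}$'s.
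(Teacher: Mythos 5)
Your proposal is correct, and it shares the paper's overall skeleton (pointwise definition of $Z$ via the factorization Lemma~\ref{L:Surj}, continuity via the sequential factorization Lemma~\ref{L:Seq}, metrizability of $\Sch$), but the two key steps are handled by genuinely different arguments. For well-definedness and \eqref{E:ZUV}, the paper never divides in Fourier space: it inserts an arbitrary auxiliary regulator $\alpha\in\Reg$ and uses \eqref{E:RGZ} twice to get $Z^{}_{\eta_1^t}(\ell_1\star\alpha)=Z^{}_{\alpha^t}(j)=Z^{}_{\eta_2^t}(\ell_2\star\alpha)$, then lets $\alpha$ run through an approximate identity and invokes continuity of the $Z_\eta$'s; \eqref{E:ZUV} is then immediate from the definition. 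You instead prove a single stronger claim by constructing bridging functions $g_n=F^{-1}\big(F(\alpha_n)\,F(m)/F(\mu^t)\big)$, which hinges on the nowhere-vanishing spectrum of $\mu\in\RegSp$ and on band-limited truncations $\alpha_n\in\RegD$ — this is fine (note only that $g_n$ is automatically real-valued because $F(\mu)$ and $F(\alpha_n)$ are real, hence even, for real regulators, so $g_n$ is a legitimate argument of the real-valued-test-function partition functions), and it has the nice side effect that \eqref{E:ZUV} and uniqueness (the latter not even needing continuity of $Z'$) drop out of the same claim. The trade-off is that your argument is intrinsically tied to flat-spacetime Fourier analysis, whereas the paper deliberately keeps the key step free of Fourier division so that it would transfer to manifolds if the factorization properties of Remark~\ref{R:ConvSurj}~\enref{R:ConvSurj:i} and \enref{R:ConvSurj:iii} could be established there (cf.\ Remark~\ref{R:Refinement}~\enref{R:Refinement:i}). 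For positive definiteness your route is actually more direct than the paper's: you apply the compact factorization of Remark~\ref{R:ConvSurj}~\enref{R:ConvSurj:iv} to the finite family $\{j_1,\dots,j_m\}$ to pull all arguments back through one common $\eta$ and inherit positive semidefiniteness of the matrix from $Z_\eta$ in one stroke, whereas the paper first proves $Z^{}_{\eta_n}(j)\to Z(j)$ along approximate identities (using the already-established continuity of $Z$) and then uses closedness of the cone of positive semidefinite matrices; your version dispenses with that limiting step entirely.
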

\begin{proof}
    We are going to construct the function $Z:\,\Sch\to\C$ in a pointwise manner. For let $j\in\Sch$ be arbitrary but fixed, and choose $\eta\in\Reg$ and $\ell\in\Sch$ such that $j=\eta\star\ell$. Such factorization exists according to Lemma~\ref{L:Surj}. Define $Z(j)$ by letting
    \begin{eqnarray}\label{E:Z(j)}
        Z(j):= Z_{\eta^t}(\ell).
    \end{eqnarray}    
    First, we show below that the definition of $Z(j)$ does not depend on the choice of the factorization $j=\eta\star\ell$.

    Let us consider two factorizations $j=\eta_1\star\ell_1=\eta_2\star\ell_2$, where $\eta_{1},\eta_{2}\in\Reg$ and $\ell_{1},\ell_{2}\in\Sch$. Choosing any $\alpha\in\Reg$, by \eqref{E:RGZ} and by the commutativity of convolution
    \begin{eqnarray}
         Z^{}_{\eta_1^t}(\ell_1\star\alpha)=Z^{}_{\alpha^{t}\star\eta_1^t}(\ell_1)=Z^{}_{\alpha^t}(\eta_1\star\ell_1)=Z^{}_{\alpha^t}(j)\cr
         \Big.\qquad\qquad\qquad\qquad =Z^{}_{\alpha^t}(\eta_2\star\ell_2)=Z^{}_{\alpha^{t}\star\eta_2^t}(\ell_2)=Z^{}_{\eta_2^t}(\ell_2\star\alpha)
    \end{eqnarray}
    follows. Moreover, for any approximate identity $\alpha_{n}\in\Reg\subset\Sch$ ($n\in\N$) in $\Sch$, one has $\ell_1\star\alpha_{n}\to\ell_1$ and $\ell_2\star\alpha_{n}\to\ell_2$ in the canonical $\Sch$ topology as $n\to\infty$. Therefore, by continuity of the functions $Z^{}_{\eta_1^t}$ and $Z^{}_{\eta_2^t}$, one has $Z^{}_{\eta_1^t}(\ell_1)=Z^{}_{\eta_2^t}(\ell_2)$. This shows that $Z$ is well defined. 

    It is straightforward that $Z$ satisfies \eqref{E:ZUV}. For if $\eta\in\Reg$ and $\ell\in\Sch$, then by \eqref{E:Z(j)} we have
    $Z(\eta^t \star \ell)=Z^{}_{(\eta^t)^t}(\ell)=Z^{}_{\eta}(\ell)$.

    Next we show the continuity of $Z$, which is equivalent to its  sequentially continuity, since $\Sch$ is metrizable. Let $\seq{j}$ be a sequence in $\Sch$ such that $j_n\to j$ for some $j\in\Sch$. According to Lemma~\ref{L:Seq}, there exists a function $\eta\in\Reg$ and a sequence $\seq{\ell}$ in $\Sch$ such that $\ell_{n}\to\ell$ for some $\ell\in\Sch$, alongside $j_{n}=\eta\star\ell_{n}$ for all $n\in\N$. By the continuity of convolution, the identity $j=\eta\star\ell$ holds between the limits $j$ and $\ell$. Then, by \eqref{E:Z(j)} one has
    \begin{eqnarray}Z(j_n)=Z(\eta\star\ell_n)=Z^{}_{\eta^t}(\ell_n) \;\to\; Z^{}_{\eta^{t}}(\ell)=Z(\eta\star\ell)=Z(j) \quad (\mbox{as } n\to\infty),\cr
    \end{eqnarray}
    where we used the continuity of $Z^{}_{\eta^{t}}$.
    
    As a next step, we show that $Z^{}_{\eta^{}_{n}}(j)\to Z(j)$ as $n\to\infty$, whenever $\eta_{n}\in\Reg\subset\Sch$ ($n\in\N$) is an approximate identity in $\Sch$, and $j\in\Sch$. Indeed, for an approximate identity $\seq{\eta}$ one has $\eta^{t}_{n}\star j\overset{\Sch}{\to}j$ as $n\to\infty$, therefore 
    \begin{eqnarray}
        Z^{}_{\eta^{}_{n}}(j)= Z(\eta^{t}_{n}\star j)\to Z(j) \qquad (\mbox{as }n\to\infty)
    \end{eqnarray}
    because of the continuity of $Z$, just shown in the above paragraph.
    
    To conclude the proof, we show that the function $Z$ is positive definite. Consider a finite system $j_1,j_2,\ldots,j_m$ in $\Sch$ and fix an $\eta\in\Reg$. Let us introduce the $m\times m$ matrices
    \begin{eqnarray}
    \mathbf{Z}_{\eta}:= \big(Z_{\eta}(j_k-j_l)\big)_{1\leq k,l\leq m} \qquad \mbox{and}\qquad \mathbf{Z}:= \big(Z(j_k-j_l)\big)_{1\leq k,l\leq m}.
    \end{eqnarray}
    By assumption, $Z_{\eta}$ is a positive definite function, meaning that the matrix $\mathbf{Z}_\eta$ is positive semidefinite. Taking now an approximate identity $\eta_{n}\in\Reg\subset\Sch$ ($n\in\N$), by letting $n\to\infty$ we have $\mathbf{Z}^{}_{\eta^{}_{n}}\to \mathbf{Z}$  according to the observation in the previous paragraph. Since the set of positive semidefinite matrices is closed, the matrix $\mathbf{Z}$ is positive semidefinite too, which means that the function $Z$ itself is positive definite.
\end{proof}

According to the Bochner--Minlos theorem mentioned in Remark~\ref{R:BochnerMinlos}, and taking into account Remark~\ref{R:WRGZ} as well as the above Theorem~\ref{T:ZExists}, the following corollary can be stated.

\begin{corollary}\label{C:muExists}
Let $\Reg=\Sch$ or $\RegS$ or $\RegSp$, and $(\mu_{\eta})_{\eta\in\Reg}$ be a family of sigma-additive non-negative valued finite measures with the Wilsonian RG property as in Definition~\ref{D:WRG}. Then there exists a unique sigma-additive non-negative valued finite measure $\mu$ on the Borel sets of $\Sch'$, such that the factorization property
\begin{eqnarray}
\mu_{\eta}=(C_{\eta})_{*}\;\mu\qquad\mbox{for all}\quad \eta\in \Reg
\end{eqnarray}
holds. The pertinent $\mu$ will be called the \defin{UV limit of the RG flow of measures} $(\mu_{\eta})_{\eta\in\Reg}$.
\end{corollary}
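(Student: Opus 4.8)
The plan is to lift the statement to the level of Fourier transforms (characteristic functions), invoke Theorem~\ref{T:ZExists}, and descend again via the Bochner--Minlos theorem. First, using Remark~\ref{R:Trace} I would replace each $\mu_\eta$ on $\Borel(C_\eta[\Sch'])$ by its canonical extension $\nu_\eta$ to $\Borel(\Sch')$; the family $(\nu_\eta)_{\eta\in\Reg}$ is again a Wilsonian RG flow in the sense of \eqref{E:WRG}. To each $\nu_\eta$ I would attach its Fourier transform $Z_\eta:\Sch\to\C$ as in Remark~\ref{R:BochnerMinlos}. Every $Z_\eta$ is continuous and positive definite, and by Remark~\ref{R:WRGZ} the family $(Z_\eta)_{\eta\in\Reg}$ satisfies \eqref{E:WRGZ}, that is, it fulfils the hypothesis \eqref{E:RGZ} of Theorem~\ref{T:ZExists}. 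That theorem then yields a unique continuous positive definite $Z:\Sch\to\C$ with $Z_\eta(j)=Z(\eta^{t}\star j)$ for all $\eta\in\Reg$ and $j\in\Sch$, and Bochner--Minlos applied to $Z$ furnishes a unique finite non-negative Borel measure $\mu$ on $\Sch'$ whose Fourier transform equals $Z$ (with $\mu(\Sch')=Z(0)$).

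Next I would verify the factorization $\mu_\eta=(C_\eta)_*\mu$. The pushforward $(C_\eta)_*\mu$ is a finite non-negative Borel measure on $\Sch'$, and it is concentrated on $C_\eta[\Sch']$ because $\overset{-1}{C_\eta}\big(\Sch'\setminus C_\eta[\Sch']\big)=\emptyset$. Computing its Fourier transform via the change-of-variables formula together with the identity $(C_{\eta}\,\phi\,|\,j)=(\phi\,|\,C_{\eta^{t}}j)$ gives
\[
 j\;\longmapsto\; \int\limits_{\phi\in\Sch'} \e^{\I\,(\phi\,|\,\eta^{t}\star j)}\,\de\mu(\phi)\;=\;Z(\eta^{t}\star j)\;=\;Z_\eta(j),
\]
which is exactly the Fourier transform of $\nu_\eta$. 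By the uniqueness clause of Bochner--Minlos, $(C_\eta)_*\mu=\nu_\eta$ on $\Borel(\Sch')$; since both sides are concentrated on $C_\eta[\Sch']$, restricting to $\Borel(C_\eta[\Sch'])$ (cf.\ Remark~\ref{R:Trace}) yields $(C_\eta)_*\mu=\mu_\eta$, as claimed.

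For uniqueness, suppose $\mu'$ is another finite non-negative Borel measure on $\Sch'$ with $(C_\eta)_*\mu'=\mu_\eta$ for all $\eta\in\Reg$. Then its Fourier transform $Z'$ is continuous, positive definite and --- by the computation above --- satisfies $Z'(\eta^{t}\star j)=Z_\eta(j)$ for all $\eta\in\Reg$ and $j\in\Sch$, i.e.\ it fulfils \eqref{E:ZUV}. The uniqueness part of Theorem~\ref{T:ZExists} forces $Z'=Z$, and then $\mu'=\mu$ by the uniqueness in Bochner--Minlos.

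None of this is analytically deep once Theorem~\ref{T:ZExists} is at hand; the only points that need a little care are the systematic passage between measures on $C_\eta[\Sch']$ and on $\Sch'$ --- handled cleanly by Remark~\ref{R:Trace} together with the elementary observation that a $C_\eta$-pushforward automatically lives on $C_\eta[\Sch']$ --- and keeping track of the reflection $\eta\mapsto\eta^{t}$ when transporting the RG identity through the Fourier transform. The genuine obstacle, namely establishing the well-definedness, continuity and positive-definiteness of the pointwise-built limit $Z$, was already surmounted in the proof of Theorem~\ref{T:ZExists}.
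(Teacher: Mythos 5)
Your proof is correct and takes essentially the same route as the paper, which obtains the corollary directly from Remark~\ref{R:WRGZ}, Theorem~\ref{T:ZExists} and the Bochner--Minlos theorem of Remark~\ref{R:BochnerMinlos}. You merely spell out the details the paper leaves implicit (the trace-sigma-algebra bookkeeping of Remark~\ref{R:Trace}, the Fourier-transform computation of $(C_\eta)_*\mu$ via $(C_{\eta}\,\phi\,|\,j)=(\phi\,|\,C_{\eta^{t}}j)$, and the uniqueness argument), all of which are sound.
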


\begin{remark}\label{R:weakLimit}
Assume that the conditions of Corollary~\ref{C:muExists} hold. Then, for any sequence $h_{n}:\,\Sch'\to\R$ ($n\in\N$) of uniformly bounded measurable functions, and any bounded continuous function $g:\,\Sch'\to\R$, by Lebesgue's theorem of dominated convergence,
\begin{eqnarray}\label{E:weakLimit1}
 \lim\limits_{n\to\infty} \int\limits_{\Sch'} \vert h_{n}\vert\cdot\vert g\circ C_{\eta_{n}}-g\vert \,\de\mu \;=\; 0
\end{eqnarray}
holds for all approximate identities $\eta_{n}\in\Reg\subset\Sch$ ($n\in\N$). Consequently,
\begin{eqnarray}\label{E:weakLimit2}
 \lim\limits_{n\to\infty} \int\limits_{\Sch'} h_{n}\cdot(g\circ C_{\eta_{n}}-g) \,\de\mu \;=\; 0
\end{eqnarray}
follows. This, in particular, implies that for all approximate identities, the measure sequence $\mu^{}_{\eta^{}_{n}}$ ($n\in\N$) converges to $\mu$ weakly. That is seen by choosing $h_{n}=1$ for all $n$ and by using the fundamental formula relating pushforward and integration variable substitution in \eqref{E:weakLimit2}.
\end{remark}

\begin{remark}\label{R:Refinement}
The following can be stated on generalizations of the above results.
\begin{enumerate}[(i)]
 \item\label{R:Refinement:i} On manifold spacetimes the following can be said. At the price of a more elaborate argument using compact sets in $\Sch$ and their accumulation points, the proof of Theorem~\ref{T:ZExists} can also be carried out by solely referring to Remark~\ref{R:ConvSurj}~\enref{R:ConvSurj:i} and \enref{R:ConvSurj:iii}, without referring to the commutativity of convolution operators, and without referring to Fourier spectra of the regulators. That is, the key proof can be carried out without using arguments very specific to a flat spacetime. If the analogy of the factorization theorems in Remark~\ref{R:ConvSurj}~\enref{R:ConvSurj:i} and \enref{R:ConvSurj:iii} could be proven on manifolds, meaning that if all test functions in $\DD$ were factorizable into a coarse-graining operator acting on some other test function from $\DD$, the analogy of Theorem~\ref{T:ZExists} and its consequences would automatically generalize to measures on $\DD'$ over manifolds. These are open questions worth to investigate.
 \item\label{R:Refinement:ii} Staying on flat spacetime, but allowing strictly bandlimited regulators $\Reg=\RegD$ only, the following can be said. Introduce the dense linear subspace $\SchD$ of $\Sch$ consisting of such elements whose Fourier transform reside in $\DD$. Assign a vector topology to $\SchD$ which is just the $\DD$ topology on the Fourier transforms. Then, clearly $\SchD\cong\DD$ (the identification map being the Fourier transformation), and thus $\SchD$ is barrelled nuclear and therefore Bochner--Minlos theorem applies. Its continuous dual space $\SchD'$ equipped with the strong dual topology will contain $\Sch'$ as a dense subspace, and $\SchD'\cong\DD'$. The analogy of Lemma~\ref{L:Surj} and Lemma~\ref{L:Seq} holds: by the construction of $\DD$, for all $j\in\SchD$ there exists some $\eta\in\RegD$ and $\ell\in\SchD$ such that $j=\eta\star\ell$, moreover, for all convergent sequences $j_{n}\in\SchD$ ($n\in\N$) there exists some $\eta\in\RegD$ and a convergent sequence $\ell_{n}\in\SchD$ ($n\in\N$) such that $j_{n}=\eta\star\ell_{n}$ ($\forall\,n\in\N$). Consequently, the scheme of the proof of Theorem~\ref{T:ZExists} can be copied when replacing $\Sch$ by $\SchD$ and $\RegS$ by $\RegD$: it follows that there exists a corresponding positive definite sequentially $\SchD$-continuous function $Z:\,\SchD\to\C$ in the UV limit. If all the regularized measures $\mu_\eta$ ($\eta\in\Reg$) had second moments (which is always the case in QFT applications), using the RG equation, the estimate \cite{Bogachev2017}~(5.5.3) and the quadratic bound $1-\cos t\leq t^{2}$, by \cite{Laszlo2024}~Lemma21 it follows that $Z$ is also topologically $\SchD$-continuous, and thus Bochner--Minlos theorem can be invoked. In summary, with $\Reg=\RegD$, the UV limit measure $\mu$ of a Wilsonian RG flow still exists, but as a measure on $\SchD'$, having second moment. If this second moment distribution is tempered (i.e., does not blow up in frequency space faster than all polynomials), then $Z$ is $\Sch$-continuous, i.e.\ $\mu$ will be supported on $\Sch'$.
 \item\label{R:Refinement:iii} On flat spacetime, if the regularized measure instances in a Wilsonian RG flow is described by a free Gaussian measure modified by an interaction potential, then also sharp bandlimited cutoffs can be defined, the cutoff operator being a measurable map with respect to the free Gaussian measure. Once well defined, the Wilsonian RG flows with sharp and smooth bandlimited cutoffs can be shown to be in one-to-one correspondence. Details of the proof of \enref{R:Refinement:ii} as well as the sharp cutoff construction will be spelled out in a different paper, focusing entirely on strictly bandlimited regulators, as their treatment need rather different and involved techniques. Certain further support properties of the UV limit measures with strictly bandlimited regulators will be still mentioned in Section~\ref{S:ExistenceV}.
 \end{enumerate}
\end{remark}

Since the paper deals with Euclidean QFT models, a note on reflection positivity property is in order. 
Denote by $E_{\varepsilon+} := \left\{ x \in \mathbb{R}^{N} \,\big\vert\, x_1 > \varepsilon \right\}$ the open half-spacetime with margin $\varepsilon\geq 0$. Reflection positivity of a probability measure $\mu$ on $\Sch'$ would mean that for its Fourier transform $Z$ the following property holds: for all $j_{1},\dots,j_{m}\in\DD$ with their support in $E_{0+}$, the matrix $\big(Z(j_{k}-\theta\,j_{l})\big)_{1\leq k,l\leq m}$ is positive semidefinite, where $\theta$ denotes the reflection of the $x_{1}$ spacetime coordinate (see e.g.\ \cite{Glimm1987}~Section~6.1~(i)). The first question is: assuming the reflection positivity of the UV limit $\mu$ of a Wilsonian RG flow as in Corollary~\ref{C:muExists}, what does it imply for the regularized measures $(\mu_\eta)_{\eta \in \Reg}$ in the flow. Let $\Reg = \Sch$ and take an $\eta\in\Reg$ such that it is symmetric, non-negative valued, compactly supported, with its support contained in the unit ball of $\R^{N}$, and having unit integral. Its compressed version $(\eta^{}_{\Lambda})_{1\leq\Lambda<\infty}$ in terms of Remark~\ref{R:ApproxIdS}~\enref{R:ApproxIdS:i} will define a $\DD\to\DD$ approximate identity, which will also be an $\Sch\to\Sch$ approximate identity (see e.g.\ \cite{Reed1981}~p.326). It is easily read off that $\mu_{\eta^{}_{\Lambda}}$ will be reflection positive with margin $1/\Lambda$: for all $j_{1},\dots,j_{m}\in\DD$ with their support in $E_{1/\Lambda\,+}$, the matrix $\big(Z_{\eta^{}_{\Lambda}}(j_{k}-\theta\,j_{l})\big)_{1\leq k,l\leq m}$ is positive semidefinite. This follows simply from the identity
\begin{equation*}
\mathrm{supp}(\eta_{\Lambda}^t \star j)
\subset
\mathrm{supp}(\eta_{\Lambda}^t) + \mathrm{supp}(j),
\end{equation*}
the symmetricity of $\eta^{}_{\Lambda}$, and \eqref{E:ZUV}. 
Conversely, if there is an approximate identity of the above kind, such that each $\mu_{\eta^{}_\Lambda}$ is reflection positive with margin $1/\Lambda$, then $\mu$ will be reflection positive, by the weak convergence of $\mu_{\eta^{}_{\Lambda}}$ to $\mu$.
Similarly, if in addition, the regulator $\eta$ was chosen to be $O(N)$ invariant, the Euclidean invariance is inherited between the subflow $(\mu_{\eta^{}_{\Lambda}})_{1\leq\Lambda<\infty}$ and the UV limit $\mu$, by analogous arguments.

\section{Existence of UV limit of relative interaction potential}
\label{S:ExistenceV}

In QFT applications, one often needs to address the problematics of comparing Wilsonian RG flows of two related models. For instance, the flow of an interacting model, with respect to a reference flow corresponding to the Gaussian measure subordinate to the free Klein--Gordon operator $(-\Delta+m^{2})$, as outlined in Section~\ref{S:Intro}. In particular, let $\mu_{\eta}$ and $\gamma_{\eta}$ be non-negative valued finite sigma-additive measures on $C_{\eta}[\Sch']$ ($\eta\in\Reg$), and assume that they form two Wilsonian RG flows in terms of Definition~\ref{D:WRG}. Assume that the evolution of these two flows ``cross'' each-other in the sense that for a certain $\eta\in\Reg$ there exists some relative field renormalization factor $z_{\eta}\in\R^{+}$ between them, such that the measure $\mu_{\eta}$ is absolutely continuous to the reference measure $\tilde{\gamma}_{\eta}$ at this particular $\eta$, where $\tilde{\gamma}_{\eta}:=(z_{\eta})_{*}\,\gamma_{\eta}$ denotes the reference measure $\gamma_{\eta}$ re-expressed on the rescaled fields by the factor $z_{\eta}$. The absolute continuity, spelled out explicitly, means that there exists some Borel measurable function $f_{\eta}:\,C_{\eta}[\Sch']\to\Rbar_{0}^{+}$, such that $\mu_{\eta}=f_{\eta}\cdot\tilde{\gamma}_{\eta}$ holds. (In QFT, often this relative density function is expressed through its negative logarithm $V_{\eta}$, i.e.\ via $f_{\eta}=\e^{-V_{\eta}}$, called relative interaction potential.) Spelling out the ansatz $\mu_{\eta}=f_{\eta}\cdot\tilde{\gamma}_{\eta}$ means that for each Borel set $A$ of $C_{\eta}[\Sch']$ one has 
\begin{equation*}
    \mu_{\eta}(A)=\int\limits_{\varphi\in A}f_{\eta}(\varphi)\,\de\tilde{\gamma}_{\eta}(\varphi)=\int\limits_{\phi\in \overset{-1}{(z_{\eta}\,C_{\eta})}(A)}f_{\eta}(z_{\eta}\,C_{\eta}\phi)\,\de\gamma(\phi)
\end{equation*}
where at the last equality we used the fact that the family of reference measures $(\gamma_{\eta})_{\eta\in\Reg}$ form a Wilsonian RG flow (with UV limit $\gamma$ guaranteed by Corollary~\ref{C:muExists}). The typical task is to try to characterize the evolution and UV limit of the flow $(\mu_{\eta})_{\eta\in\Reg}$ in terms of the reference flow  $(\gamma_{\eta})_{\eta\in\Reg}$.\footnote{As outlined at the end of Section~\ref{S:Intro}, also formalized in Definition~\ref{D:WRG}, without loss of generality, the running field renormalization factor can be merged into the flow of measures in a Wilsonian RG flow. That is, one can adapt a normalization convention in which it does not appear explicitly in the notation, when considering only a single flow. In this convention, it re-appears as a running relative field renormalization factor, when comparing two different flows against each-other.} The answer to this question on physics ground is not evident at all, since the whole concept of Wilsonian RG flow was invented in order to overcome the difficulty of directly specifying a measure $\mu:=f\cdot\gamma$ in the UV limit, where the density $f$ would be defined by $f:=\e^{-V}$, with $V$ being some pointwise functional of distributional fields, such as $V(\phi)=g\,\int\phi^{4}$ according to the tentative definition. The development of the concept of Wilsonian RG flows was motivated by the expectation that such UV limit interaction potential $V$ may not exist in general, but only regularized interaction potentials $V_{\eta}$ should exist on the space of UV regularized fields $C_{\eta}[\Sch']$ ($\eta\in\Reg$). Our results will show that actually, the UV limit relative interaction potential $V$ exists, in a generalized sense.

\begin{lemma}\label{L:fFlow}
Let $\Reg=\Sch$ or $\RegS$ or $\RegSp$, and let $\mu_{\eta}$ and $\gamma_{\eta}$ be Wilsonian RG flows of non-negative valued finite sigma-additive measures on $C_{\eta}[\Sch']$ ($\eta\in\Reg$), in terms of Definition~\ref{D:WRG}, with corresponding UV limit measures $\mu$ and $\gamma$ guaranteed by Corollary~\ref{C:muExists}. Assume that at some particular $\eta\in\RegSp$, there exists a field rescaling factor $z_{\eta}\in\R^{+}$ and a measurable function $f_{\eta}:\,C_{\eta}[\Sch']\to\Rbar_{0}^{+}$ such that $\mu_{\eta}=f_{\eta}\cdot\tilde{\gamma}_{\eta}$ holds, with $\tilde{\gamma}_{\eta}:=(z_{\eta})_{*}\,\gamma_{\eta}$. Then, one has
\begin{eqnarray}\label{E:fmu}
(f_{\eta}\circ C_{\eta})\cdot\big((z_{\eta})_{*}\gamma\big) & \;=\; &\mu.
\end{eqnarray}
That is, $\mu$ is absolutely continuous with respect to the measure $(z_{\eta})_{*}\gamma$ with corresponding density function $f_{\eta}\circ C_{\eta}$.
\end{lemma}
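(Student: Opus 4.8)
The plan is to translate the identity $\mu_\eta=f_\eta\cdot\tilde\gamma_\eta$, which lives on $C_\eta[\Sch']$, into the claimed identity on $\Sch'$, by exploiting two facts: first, that $\mu_\eta=(C_\eta)_*\mu$ and $\gamma_\eta=(C_\eta)_*\gamma$ by Corollary~\ref{C:muExists}; and second, that $C_\eta$ is \emph{injective} on $\Sch'$ since $\eta\in\RegSp$ has nowhere-vanishing $F(\eta)$, so by Lemma~\ref{L:Injectivity} and Remark~\ref{R:SouslinInj}~\enref{R:SouslinInj:i} the map $C_\eta$ is a Borel isomorphism onto its image $C_\eta[\Sch']$. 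This injectivity is what makes it legitimate to ``undo'' the pushforward and pull the density back.

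First I would rewrite $\tilde\gamma_\eta=(z_\eta)_*\gamma_\eta=(z_\eta)_*(C_\eta)_*\gamma=(z_\eta C_\eta)_*\gamma=(C_\eta)_*\big((z_\eta)_*\gamma\big)$, using that scalar rescaling commutes with the linear operator $C_\eta$ (indeed $z_\eta C_\eta=C_\eta z_\eta$ as operators on $\Sch'$). Set $\gamma':=(z_\eta)_*\gamma$, so $\tilde\gamma_\eta=(C_\eta)_*\gamma'$ and $\mu_\eta=(C_\eta)_*\mu$, and the hypothesis reads $(C_\eta)_*\mu=f_\eta\cdot(C_\eta)_*\gamma'$ on $\Borel(C_\eta[\Sch'])$. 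Next I would check the candidate identity $\mu=(f_\eta\circ C_\eta)\cdot\gamma'$ by testing it against an arbitrary Borel set $B\subset\Sch'$: compute $\int_B f_\eta(C_\eta\phi)\,\de\gamma'(\phi)$ and show it equals $\mu(B)$. The natural route is the change-of-variables formula for pushforwards, $\int_{\Sch'} (g\circ C_\eta)\,\de\gamma' = \int_{C_\eta[\Sch']} g\,\de\big((C_\eta)_*\gamma'\big)$, valid for nonnegative Borel $g$; but one must handle the indicator $\mathbf 1_B$, which is not of the form $h\circ C_\eta$ for general $B$. Here the injectivity is essential: since $C_\eta$ is injective, $\mathbf 1_B\circ$ (inverse on the image) makes sense, or more cleanly, $C_\eta^{-1}(C_\eta[B])=B$, so writing $g:=\mathbf 1_{C_\eta[B]}\cdot f_\eta$ on $C_\eta[\Sch']$ one gets $g\circ C_\eta=\mathbf 1_B\cdot(f_\eta\circ C_\eta)$, and $C_\eta[B]$ is Borel in $C_\eta[\Sch']$ by Remark~\ref{R:SouslinInj}~\enref{R:SouslinInj:i}. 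Then
\begin{eqnarray*}
\int_B (f_\eta\circ C_\eta)\,\de\gamma'
&=& \int_{\Sch'} (g\circ C_\eta)\,\de\gamma'
= \int_{C_\eta[\Sch']} g\,\de\big((C_\eta)_*\gamma'\big)
= \int_{C_\eta[\Sch']} \mathbf 1_{C_\eta[B]}\, f_\eta\,\de\tilde\gamma_\eta \\
&=& \big(f_\eta\cdot\tilde\gamma_\eta\big)\big(C_\eta[B]\big)
= \mu_\eta\big(C_\eta[B]\big)
= \big((C_\eta)_*\mu\big)\big(C_\eta[B]\big)
= \mu\big(C_\eta^{-1}(C_\eta[B])\big)
= \mu(B),
\end{eqnarray*}
which is exactly \eqref{E:fmu}.

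The main obstacle I anticipate is purely measure-theoretic bookkeeping around the injectivity: one must be careful that $C_\eta[B]$ is genuinely a Borel subset of $C_\eta[\Sch']$ (this is where Lemma~\ref{L:CetaS} together with Remark~\ref{R:SouslinInj}~\enref{R:SouslinInj:i} and \enref{R:SouslinInj:iv} enter, since $\Sch'$ and $C_\eta[\Sch']$ are Souslin and $C_\eta$ is an injective Borel map), and that the change-of-variables identity is being applied to a genuinely nonnegative (possibly $+\infty$-valued) Borel function $g$, so that no integrability hypotheses are needed — the convention $(+\infty)\cdot 0:=0$ handles the null sets. Once those points are nailed down the computation is a one-line chain as above; the absolute-continuity conclusion ($\mu\ll\gamma'$ with density $f_\eta\circ C_\eta$) is then immediate from the definition of density. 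No use of the Wilsonian RG equation beyond Corollary~\ref{C:muExists} is needed, nor any of the Fourier-analytic factorization lemmas — only the injectivity afforded by $\eta\in\RegSp$.
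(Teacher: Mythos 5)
Your proposal is correct and follows essentially the same route as the paper's proof: both rewrite $\tilde{\gamma}_{\eta}=(C_{\eta})_{*}\big((z_{\eta})_{*}\gamma\big)$ via Corollary~\ref{C:muExists} and the commutation of $z_{\eta}$ with $C_{\eta}$, and both exploit that $C_{\eta}$ is an injective Borel map between Souslin spaces (Lemma~\ref{L:Injectivity}, Remark~\ref{R:SouslinInj}) and hence a Borel isomorphism onto $C_{\eta}[\Sch']$, so that the hypothesis can be pulled back through the change-of-variables formula. The only cosmetic difference is that you test the identity on indicators $\mathbf 1_{B}$ using forward images $C_{\eta}[B]$, whereas the paper tests it on arbitrary nonnegative Borel functions $h$ via $g:=h\circ C_{\eta}{}^{-1}$; these are equivalent formulations of the same argument.
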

\begin{proof}
   Let $\eta\in\RegSp$ as above. By assumption, one had
   \begin{eqnarray}\label{E:Cmulong}
   (C_{\eta})_{*}\,\mu \;=\; \mu_{\eta} \;=\; f_{\eta}\cdot\tilde{\gamma}_{\eta} \;=\; f_{\eta}\cdot((z_{\eta})_{*}\,\gamma_{\eta}) \;=\; f_{\eta}\cdot((z_{\eta})_{*}\,(C_{\eta})_{*}\,\gamma) \cr
   \qquad\bigg. \;=\; f_{\eta}\cdot ((z_{\eta}\,C_{\eta})_{*}\,\gamma) \;=\; f_{\eta}\cdot ((C_{\eta}\,z_{\eta})_{*}\,\gamma) \;=\; f_{\eta}\cdot ((C_{\eta})_{*}\,(z_{\eta})_{*}\gamma)
   \end{eqnarray}
   where at the 4-th equality Corollary~\ref{C:muExists} was used, and at the 6-th equality the commutativity of $z_{\eta}$ and $C_{\eta}$ was taken into account. Since $C_{\eta}:\,\Sch'\to C_{\eta}[\Sch']$ is Borel measurable (as it is continuous), moreover $\Sch'$ and $C_{\eta}[\Sch']$ are Souslin spaces (Remark~\ref{R:SouslinInj}~\enref{R:SouslinInj:ii}~and~\enref{R:SouslinInj:iv}), and since $\eta$ was chosen such that $C_{\eta}$ is injective (Lemma~\ref{L:Injectivity}), by Remark~\ref{R:SouslinInj}~\enref{R:SouslinInj:i} it follows that $C_{\eta}$ is Borel isomorphism. That is, $C_{\eta}{}^{-1}$ is Borel measurable as well. Taking now any Borel measurable function $h:\,\Sch'\to\Rbar_{0}^{+}$ the composite function $g:=h\circ C_{\eta}{}^{-1}$ will be $C_{\eta}[\Sch']\to\Rbar_{0}^{+}$ Borel measurable. Evaluating the integral of this $g$ against the leftmost and rightmost side of \eqref{E:Cmulong}, and using the fundamental formula relating the pushforward and the integration variable substitution,
   \begin{eqnarray}
   \int\limits_{\phi\in\Sch'} h(\phi)\,\de\mu(\phi) & \;=\; & \int\limits_{\phi\in\Sch'} h(\phi) \, f_{\eta}(C_{\eta}\,\phi) \,\de((z_{\eta})_{*}\gamma)(\phi)
   \end{eqnarray}
   follows, which completes the proof.
\end{proof}

\begin{remark}\label{R:dimtrans}
Assuming the conditions of Lemma~\ref{L:fFlow}, one may realize that given $(\mu_{\eta})_{\eta\in\Reg}$, the reference flow $(\gamma_{\eta})_{\eta\in\Reg}$, the field renormalization factor $z_{\eta}$, and the density function $f_{\eta}$ may be transformed by some constant field rescaling $c\in\R^{+}$ according to the rule $\gamma'_{\eta}:=c_{*}\,\gamma_{\eta}$, $z'_{\eta}:=z_{\eta}/c$, and $f'_{\eta}:=f_{\eta}$, such that the conditions of Lemma~\ref{L:fFlow} will still hold with $(\gamma'_{\eta})_{\eta\in\Reg}$, $z'_{\eta}$, $f'_{\eta}$. Since the multiplication by $c$ and the operator $C_{\eta}$ commutes (for any $\eta\in\Reg$), one has $(z'_{\eta})_{*}\,\gamma'_{\eta}=(z_{\eta})_{*}\,\gamma_{\eta}$, and $\gamma'=c_{*}\,\gamma$. This simultaneous redefinition freedom of the field scale and the reference flow is often called dimensional transmutation in QFT. The rationale behind this naming is that at any fixed particular $\eta\in\RegSp$ where the assumption of Lemma~\ref{L:fFlow} holds, one may set $c:=z_{\eta}$, and do the above transformation. When such transformation is done, after the omission of $()'$ in the notation, at that particular $\eta$ the field renormalization factor $z_{\eta}$ is unity by convention. With that normalization convention, by Lemma~\ref{L:fFlow} one has the relation $(f_{\eta}\circ C_{\eta})\cdot\gamma=\mu$ at that particular $\eta$, which implies the absolute continuity of $\mu$ with respect to $\gamma$. In summary, whenever the conditions of Lemma~\ref{L:fFlow} are satisfied, $\mu$ is absolutely continuous to $c_{*}\,\gamma$ with some suitably chosen $c\in\R^{+}$, and the constant $c$ may be merged into $\gamma$ by field normalization convention, without loss of generality.
\end{remark}

\begin{lemma}\label{L:zConstant}
Assume that the conditions of Lemma~\ref{L:fFlow} hold at some particular $\eta\in\RegSp$. Then, it holds for all $\eta\in\Reg$, with $z_{\eta}=const$. Furthermore, in the normalization convention $z_{\eta}=1$ by means of Remark~\ref{R:dimtrans}, it follows that $\mu$ is absolutely continuous to $\gamma$. That is, it follows that for all $\eta\in\Reg$ there exists a Borel measurable function $f_{\eta}:\,C_{\eta}[\Sch']\to\Rbar_{0}^{+}$, and an $f:\,\Sch'\to\Rbar_{0}^{+}$, such that $\mu_{\eta}=f_{\eta}\cdot\gamma_{\eta}$ and $\mu=f\cdot\gamma$ holds. (In QFT, often $f_{\eta}$ and $f$ are characterized by their negative logarithm $V_{\eta}$ and $V$, respectively, i.e.\ $f_{\eta}=\e^{-V_{\eta}}$ and $f=\e^{-V}$, which are called interaction potentials.)
\end{lemma}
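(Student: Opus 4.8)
The plan is to assemble the statement from Lemma~\ref{L:fFlow}, Corollary~\ref{C:muExists}, the dimensional transmutation freedom of Remark~\ref{R:dimtrans}, and the Radon--Nikodym theorem, the new ingredient being that absolute continuity is stable under pushforward even when the coarse-graining fails to be injective.

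First I would normalize via Remark~\ref{R:dimtrans}: at the particular $\eta_{0}\in\RegSp$ where the hypothesis holds, set $c:=z_{\eta_{0}}$ and pass to the transmuted reference flow $\gamma'_{\eta}:=c_{*}\,\gamma_{\eta}$ ($\eta\in\Reg$). Since the constant rescaling $c_{*}$ commutes with every $C_{\eta}$, the family $(\gamma'_{\eta})_{\eta\in\Reg}$ is again a Wilsonian RG flow in the sense of Definition~\ref{D:WRG}, and by Corollary~\ref{C:muExists} its UV limit is $\gamma'=c_{*}\,\gamma$. In these variables the hypothesis at $\eta_{0}$ reads $\mu_{\eta_{0}}=f_{\eta_{0}}\cdot\gamma'_{\eta_{0}}$ with relative factor unity, so Lemma~\ref{L:fFlow} applies and yields $\mu=f\cdot\gamma'$ with $f:=f_{\eta_{0}}\circ C_{\eta_{0}}:\,\Sch'\to\Rbar_{0}^{+}$ Borel measurable; in particular $\mu\ll\gamma'$.

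Next, for an arbitrary $\eta\in\Reg$ (now no longer required to lie in $\RegSp$, so $C_{\eta}$ need not be injective), I would transport this absolute continuity forward. By Corollary~\ref{C:muExists} one has $\mu_{\eta}=(C_{\eta})_{*}\,\mu$ and $\gamma'_{\eta}=(C_{\eta})_{*}\,\gamma'$; hence if $B\in\Borel(C_{\eta}[\Sch'])$ satisfies $\gamma'_{\eta}(B)=0$, then $\gamma'\big(\overset{-1}{C_{\eta}}(B)\big)=0$ and therefore $\mu_{\eta}(B)=\mu\big(\overset{-1}{C_{\eta}}(B)\big)=\int_{\overset{-1}{C_{\eta}}(B)}f\,\de\gamma'=0$. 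Thus $\mu_{\eta}\ll\gamma'_{\eta}$, and since $\gamma'_{\eta}$ is a finite measure on $\Borel(C_{\eta}[\Sch'])$, the Radon--Nikodym theorem produces a Borel measurable density $f_{\eta}:\,C_{\eta}[\Sch']\to\Rbar_{0}^{+}$ with $\mu_{\eta}=f_{\eta}\cdot\gamma'_{\eta}$. Undoing the transmutation, $\gamma'_{\eta}=c_{*}\,\gamma_{\eta}=(z_{\eta_{0}})_{*}\,\gamma_{\eta}=\tilde{\gamma}_{\eta}$ with one and the same constant $z_{\eta}:=z_{\eta_{0}}$ for every $\eta$, which is the claimed constancy; adopting the normalization convention $z_{\eta}=1$ of Remark~\ref{R:dimtrans} (i.e.\ renaming $\gamma'$ back to $\gamma$) turns the two displayed relations into $\mu_{\eta}=f_{\eta}\cdot\gamma_{\eta}$ and $\mu=f\cdot\gamma$.

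I do not expect a genuine obstacle, as the argument is essentially bookkeeping on top of the earlier results. The one delicate point is precisely the passage to a general regulator $\eta\in\Reg$: the explicit change-of-variables computation used in Lemma~\ref{L:fFlow} relies on $C_{\eta}$ being a Borel isomorphism onto $C_{\eta}[\Sch']$, which holds for $\eta\in\RegSp$ but fails when $F(\eta)$ has zeros; the workaround is the abstract remark that $\gamma'_{\eta}$-null sets pull back under $C_{\eta}$ to $\gamma'$-null sets, so one reads off $f_{\eta}$ from Radon--Nikodym rather than transporting $f$ by hand. One should also make sure, as Remark~\ref{R:dimtrans} already records, that the constant $c=z_{\eta_{0}}$ is absorbable uniformly along the whole flow and commutes with the passage to the UV limit.
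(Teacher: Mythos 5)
Your proposal is correct and follows essentially the same route as the paper's proof: use Lemma~\ref{L:fFlow} together with the normalization of Remark~\ref{R:dimtrans} to get $\mu\ll\gamma$, then transport absolute continuity to an arbitrary $\eta\in\Reg$ by pulling back $\gamma_{\eta}$-null sets through the factorization of Corollary~\ref{C:muExists}, and finally invoke Radon--Nikodym for the densities $f_{\eta}$. Your version merely spells out more explicitly the bookkeeping of the constant $c=z_{\eta_{0}}$ and the fact that non-injective $C_{\eta}$ cause no trouble, which the paper handles implicitly by adopting the convention $z_{\eta}=1$ up front.
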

\begin{proof}
By Lemma~\ref{L:fFlow}, assuming the normalization convention of Remark~\ref{R:dimtrans}, it follows that the UV limit measure $\mu$ is absolutely continuous with respect to $\gamma$. That, by Radon--Nikodym theorem (\cite{Rudin1987}~Theorem~6.10), is equivalent to the statement that for all Borel sets $B$ of $\Sch'$ one has ${\Big(\;\gamma(B)=0 \;\Rightarrow\; \mu(B)=0\;\Big)}$. In order to prove the absolute continuity of $\mu_{\eta}$ with respect to $\gamma_{\eta}$ at arbitrary $\eta\in\Reg$, we show that for all Borel sets $B_{\eta}$ of $C_{\eta}[\Sch']$ one has ${\Big(\;\gamma_{\eta}(B_{\eta})=0 \;\Rightarrow\; \mu_{\eta}(B_{\eta})=0\;\Big)}$. For let $B_{\eta}$ be a Borel set of $C_{\eta}[\Sch']$ such that $\gamma_{\eta}(B_{\eta})=0$ holds. This, by means of Corollary~\ref{C:muExists} is equivalent to $\gamma(\overset{-1}{C_{\eta}}(B_{\eta}))=0$. Using the absolute continuity of $\mu$ with respect to $\gamma$, just derived above, by Radon--Nikodym theorem it follows that $\mu(\overset{-1}{C_{\eta}}(B_{\eta}))=0$. Applying Corollary~\ref{C:muExists}, it follows then that $\mu_{\eta}(B_{\eta})=0$ holds, which completes the proof.
\end{proof}

Putting together Lemma~\ref{L:fFlow}, Lemma~\ref{L:zConstant}, and \eqref{E:fmu} the following can be concluded.

\begin{corollary}\label{C:VExists}
Let $\Reg=\Sch$ or $\RegS$ or $\RegSp$, and let $\mu_{\eta}$ and $\gamma_{\eta}$ be Wilsonian RG flows of non-negative valued finite sigma-additive measures on $C_{\eta}[\Sch']$ ($\eta\in\Reg$), with corresponding UV limit measures $\mu$ and $\gamma$ guaranteed by Corollary~\ref{C:muExists}. If for some $\eta\in\RegSp$ there exists $z_{\eta}\in\R^{+}$ and a Borel measurable function $f_{\eta}:\,C_{\eta}[\Sch']\to\Rbar_{0}^{+}$ such that $\mu_{\eta}:=f_{\eta}\cdot\tilde{\gamma}_{\eta}$ holds with $\tilde{\gamma}_{\eta}:=(z_{\eta})_{*}\,\gamma_{\eta}$, then this holds for all $\eta\in\Reg$ with a suitable $z_{\eta}$ and $f_{\eta}$, moreover $z_{\eta}$ can be chosen to be constant as a function of $\eta$. In the normalization convention of Remark~\ref{R:dimtrans}, i.e.\ when $z_{\eta}=1$ by choice, there exists a Borel measurable function $f:\,\Sch'\to\Rbar_{0}^{+}$ such that $\mu=f\cdot\gamma$. (In the QFT notation, $f_{\eta}=\e^{-V_{\eta}}$ and $f=\e^{-V}$ is used, and the pertinent $f$ and $V$ are the UV limit density function and interaction potential, respectively.) Furthermore, the following equivalent relations hold:
\begin{eqnarray}\label{E:fVRG}
 \forall\,\eta\in\RegSp: \qquad (f_{\eta}\circ C_{\eta})\cdot\gamma & \;=\; & f\cdot\gamma, \cr
 \Big.\forall\,\eta\in\RegSp: \qquad\qquad V_{\eta}\circ C_{\eta} & \;=\; & V \qquad (\gamma\mbox{-almost everywhere}).
\end{eqnarray}
\end{corollary}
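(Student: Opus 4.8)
The plan is to assemble the three ingredients already in place---Lemma~\ref{L:fFlow}, Lemma~\ref{L:zConstant}, and the identity \eqref{E:fmu}---with only some bookkeeping to add. First I would invoke Lemma~\ref{L:zConstant}: from the single absolute-continuity hypothesis at one $\eta\in\RegSp$ it follows that for every $\eta\in\Reg$ there are a factor $z_{\eta}\in\R^{+}$ and a Borel function $f_{\eta}:\,C_{\eta}[\Sch']\to\Rbar_{0}^{+}$ with $\mu_{\eta}=f_{\eta}\cdot\tilde{\gamma}_{\eta}$, and that $z_{\eta}$ may be taken constant in $\eta$. I would then apply the dimensional-transmutation freedom of Remark~\ref{R:dimtrans} to absorb that constant into the reference flow, i.e.\ adopt the convention $z_{\eta}=1$ for all $\eta$; in this convention $\mu_{\eta}=f_{\eta}\cdot\gamma_{\eta}$ for every $\eta\in\Reg$ and, again by Lemma~\ref{L:zConstant}, $\mu$ is absolutely continuous with respect to $\gamma$. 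Since $\gamma$ is a finite (hence $\sigma$-finite) Borel measure on $\Sch'$, the Radon--Nikodym theorem (\cite{Rudin1987}~Theorem~6.10) then supplies a Borel density, which may be chosen $\Rbar_{0}^{+}$-valued and which I denote $f:\,\Sch'\to\Rbar_{0}^{+}$, so that $\mu=f\cdot\gamma$; writing $f=\e^{-V}$ and $f_{\eta}=\e^{-V_{\eta}}$ (using the conventions $\e^{-\infty}=0$ and $\pm\infty\cdot 0=0$) produces the potentials $V$ and $V_{\eta}$, each determined up to a $\gamma$-null, resp.\ $\gamma_{\eta}$-null, set.

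Next I would establish the RG identities \eqref{E:fVRG}. Fix an arbitrary $\eta\in\RegSp\subset\Reg$; by the previous step the hypothesis of Lemma~\ref{L:fFlow} holds at this $\eta$ with $z_{\eta}=1$, so Lemma~\ref{L:fFlow} (equivalently \eqref{E:fmu}) gives $(f_{\eta}\circ C_{\eta})\cdot\gamma=\mu$. Here $f_{\eta}\circ C_{\eta}$ is a genuine Borel function on $\Sch'$, because $C_{\eta}:\,\Sch'\to C_{\eta}[\Sch']$ is continuous, hence Borel measurable, and $f_{\eta}$ is Borel on $C_{\eta}[\Sch']$. Combining with $\mu=f\cdot\gamma$ yields $(f_{\eta}\circ C_{\eta})\cdot\gamma=f\cdot\gamma$, which is the first line of \eqref{E:fVRG}. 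As $\gamma$ is finite, any two $\Rbar_{0}^{+}$-valued densities of the same measure with respect to $\gamma$ agree $\gamma$-almost everywhere, so $f_{\eta}\circ C_{\eta}=f$ holds $\gamma$-almost everywhere; applying the order-reversing bijection $t\mapsto\e^{-t}$ between $\Rbar$ and $\Rbar_{0}^{+}$ (with inverse $-\log$) converts this into $V_{\eta}\circ C_{\eta}=V$ $\gamma$-almost everywhere, the second line of \eqref{E:fVRG}, and the same bijection shows that the two displayed relations are equivalent.

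I do not expect a substantive obstacle here: the corollary is essentially a repackaging of Lemma~\ref{L:fFlow} and Lemma~\ref{L:zConstant} together with Radon--Nikodym. The only points that require care are the consistent use of the extended-real conventions ($\e^{-\infty}=0$, $\pm\infty\cdot 0=0$) so that the passage between the two forms of \eqref{E:fVRG} is literally invertible, and keeping in mind that $f$, $V$ (and each $f_{\eta}$, $V_{\eta}$) are fixed only up to a $\gamma$-null, resp.\ $\gamma_{\eta}$-null, set---which is precisely why the second identity in \eqref{E:fVRG} is asserted $\gamma$-almost everywhere rather than everywhere.
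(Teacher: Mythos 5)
Your proposal is correct and follows essentially the same route as the paper, whose proof of Corollary~\ref{C:VExists} is precisely the combination of Lemma~\ref{L:fFlow}, Lemma~\ref{L:zConstant} and \eqref{E:fmu} that you assemble; you merely spell out the Radon--Nikodym step and the $\gamma$-a.e.\ uniqueness of densities that the paper leaves implicit. The bookkeeping points you flag (the conventions $\e^{-\infty}=0$, $\pm\infty\cdot 0=0$, and the a.e.\ determination of $f$, $V$) are handled consistently with the paper's conventions, so no gap remains.
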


\begin{remark}\label{R:zNonRunning}
Note that in the above corollary the choice $z_{\eta}=const$  is the only possible one, whenever $\gamma_{\eta}$ are defined by $\gamma_{\eta}:=(C_{\eta})_{*}\,\gamma$ with $\gamma$ being a Gaussian measure on $\Sch'$ ($\eta\in\Reg$). That is implied by the rigidity property of Gaussian measures, see Remark~\ref{R:GaussSuppNonzero}~\enref{R:GaussSuppNonzero:vi} and \enref{R:GaussSuppNonzero:vii}.
\end{remark}

\begin{remark}\label{R:dmNonRunning}
A reference Gaussian subordinate to a Wilsonian RG flow cannot run.
\begin{enumerate}[(i)]
 \item\label{R:dmNonRunning:i} Let $\Reg=\Sch$ or $\RegS$ or $\RegSp$ and assume that at each $\eta\in\Reg$ for the studied Wilsonian RG flow the ansatz $\mu_{\eta}=\e^{-V_{\eta}}\cdot\gamma_{\eta}$ holds, where the reference measure $\gamma_{\eta}$ is the pushforward by $C_{\eta}$ of the canonical Gaussian measure on $\Sch'$, associated to the Klein--Gordon operator ${(-d^{2}\Delta+m^{2})}$, as outlined in Section~\ref{S:Intro} ($d,m\in\R^{+}$). One may allow in addition the parameters $d$ and $m$ to be possibly $\eta$-dependent. Then, due to Remark~\ref{R:GaussSuppNonzero}~\enref{R:GaussSuppNonzero:vi} and \enref{R:GaussSuppNonzero:viii}, $d_{\eta}=const$ and $m_{\eta}=const$ must hold. That is, modulo the smoothing by $\eta\in\Reg$, a reference Gaussian subordinate to a Wilsonian RG flow cannot run.
 \item\label{R:dmNonRunning:ii} Let $\Reg=\RegD$, i.e.\ assume that only strictly bandlimited regulators are allowed. As shown in Remark~\ref{R:Refinement}~\enref{R:Refinement:ii}, if $(\mu_{\eta})_{\eta\in\Reg}$ is a Wilsonian RG flow under this condition, its UV limit still exists, but as a measure $\mu$ on the space $\SchD'$. Assume that such flow satisfies the above ansatz in \enref{R:dmNonRunning:i}. Then, from Remark~\ref{R:GaussSuppNonzero}~\enref{R:GaussSuppNonzero:vi} and \enref{R:GaussSuppNonzero:viii} it still follows that $d_{\eta}=const$ and $m_{\eta}=const$ must hold. That is, the reference Gaussian subordinate to a Wilsonian RG flow, modulo smoothing, cannot run even when $\Reg=\RegD$. It is an open question whether without further assumptions $\mu$ has support on $\Sch'\subset\SchD'$ under the ansatz in \enref{R:dmNonRunning:i}, and if so, whether $\mu$ is absolutely continuous to the reference UV Gaussian $\gamma$. We intend to investigate these questions in a paper in preparation, focusing entirely on the bandlimited regulator case.
\end{enumerate}
\end{remark}

\begin{remark}\label{R:QFTdiff}
In the usual informal QFT RG theory approach, the differential form of \eqref{E:fVRG} is used. First, some regulator $\eta\in\RegSp$ is chosen. Then, its compressed version is defined via $\eta^{}_{\Lambda}(x):=\Lambda^{N}\,\eta(\Lambda\,x)$ $\,$ ($x\in\R^{N}$) for all $1\leq\Lambda<\infty$. Such family is approximate identity in $\Sch$, see again Remark~\ref{R:ApproxIdS}~\enref{R:ApproxIdS:ii}. Then, the informal ODE $\frac{\de}{\de\Lambda}\,\bigg(\exp(-V_{\eta^{}_{\Lambda}}\circ C_{\eta^{}_{\Lambda}})\cdot\gamma\bigg)=0$ is attempted to be solved for the flow of potentials $V_{\eta^{}_{\Lambda}}$, given some initial data $V_{\eta^{}_{\Lambda}}$ at $\Lambda=1$. This can be called the differential form of the Wilsonian RG equation for the interaction potential, and can be justified knowing \eqref{E:fVRG}. In the usual informal QFT RG theory approach, a global solution to this informal ODE is looked for, admitting a $\Lambda\to\infty$ limit. Note however, that it is not evident from first principles when such a solution would exist. Moreover, the described informal procedure implicitly uses a number of tacit assumptions, the justification of which is not easy to establish. Furthermore, even if such an ODE solution could be produced, it is merely a sufficient condition for a solution to \eqref{E:fVRG}.
\end{remark}

Despite of the difficulty to extrapolate the regularized interaction potential toward the UV infinity, certain properties of the UV limit can already be established from the regularized instances, as stated in the following.

\begin{theorem}\label{T:Vbounded}
Let $\Reg=\Sch$ or $\RegS$ or $\RegSp$, and let $\mu_{\eta}$ and $\gamma_{\eta}$ be Wilsonian RG flows of non-negative valued finite sigma-additive measures on $C_{\eta}[\Sch']$ ($\eta\in\Reg$), with corresponding UV limit measures $\mu$ and $\gamma$ guaranteed by Corollary~\ref{C:muExists}. Assume that at some $\eta\in\RegSp$ the conditions of Corollary~\ref{C:VExists} hold. Then, whenever the potential $V_{\eta}$ is bounded from below at that $\eta$, the UV limit potential $V$ is $\gamma$-essentially bounded from below, with the same bound. (Equivalently, if the density function $f_{\eta}$ is bounded from above at that $\eta$, then the UV limit density function $f$ is $\gamma$-essentially bounded from above, with the same bound.)
\end{theorem}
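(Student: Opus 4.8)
The plan is to exploit the two relations established in Corollary~\ref{C:VExists}, namely $\mu=f\cdot\gamma$ for the UV limit measures (in the normalization $z_\eta=1$), and $f_{\eta}\circ C_{\eta}=f$ $\gamma$-almost everywhere, together with the identities $\mu_{\eta}=(C_\eta)_*\mu$ and $\gamma_{\eta}=(C_\eta)_*\gamma$ from Corollary~\ref{C:muExists}. Suppose $f_{\eta}\le M$ holds everywhere on $C_{\eta}[\Sch']$ for some real constant $M$ (equivalently $V_{\eta}\ge -\log M$, with the convention that $M>0$, since otherwise $\mu_\eta=0$ and the statement is vacuous). First I would push this bound through the composition: since $C_{\eta}:\Sch'\to C_{\eta}[\Sch']$ is a well-defined Borel map, the pointwise inequality $f_{\eta}\le M$ gives $f_{\eta}\circ C_{\eta}\le M$ everywhere on $\Sch'$. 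Combining with $f=f_{\eta}\circ C_{\eta}$ ($\gamma$-a.e.) yields $f\le M$ $\gamma$-almost everywhere, which is exactly the claim that $f$ is $\gamma$-essentially bounded from above by $M$; taking negative logarithms gives $V\ge -\log M$ $\gamma$-essentially, the same lower bound as $V_{\eta}$.

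The one subtlety worth spelling out is the passage from ``$f_{\eta}$ bounded from above at the \emph{given} $\eta\in\RegSp$'' to a statement about $f$, because the a.e.\ identity $f_{\eta}\circ C_{\eta}=f$ in \eqref{E:fVRG} is asserted for all $\eta\in\RegSp$ but its derivation in Corollary~\ref{C:VExists} and Lemma~\ref{L:fFlow} uses precisely the injectivity of $C_{\eta}$ for $\eta\in\RegSp$ (via Lemma~\ref{L:Injectivity} and the Souslin Borel-isomorphism argument of Remark~\ref{R:SouslinInj}~\enref{R:SouslinInj:i}). So the argument is cleanest if I simply apply \eqref{E:fVRG} at the very same $\eta$ for which the boundedness hypothesis is assumed; no transport between different regulators is needed. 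I would also remark that the bound $M$ need only hold $\tilde\gamma_{\eta}$-essentially (equivalently $\gamma_{\eta}$-essentially), not everywhere, since $f_{\eta}$ is only determined up to $\gamma_{\eta}$-null sets by the ansatz $\mu_{\eta}=f_{\eta}\cdot\gamma_{\eta}$: if $f_{\eta}\le M$ $\gamma_{\eta}$-a.e., then $\gamma(\{f_{\eta}\circ C_{\eta}>M\})=\gamma(C_\eta^{-1}\{f_\eta>M\})=\gamma_\eta(\{f_\eta>M\})=0$ by Corollary~\ref{C:muExists}, and again $f\le M$ $\gamma$-a.e.

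The converse parenthetical statements (phrasing in terms of $f$ versus $V$) are just the substitution $f_{\eta}=\e^{-V_\eta}$, $f=\e^{-V}$ and monotonicity of $\log$, requiring no separate argument. I do not expect any real obstacle here: the theorem is essentially a corollary of \eqref{E:fVRG} combined with elementary measure-theoretic bookkeeping about pushforwards of null sets, and the only thing to be careful about is to invoke the a.e.\ density identity at a regulator in $\RegSp$ where injectivity of $C_\eta$ — the ingredient that makes Lemma~\ref{L:fFlow} and hence \eqref{E:fVRG} work — is guaranteed.
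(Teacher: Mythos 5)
Your proposal is correct and follows essentially the same route as the paper, whose proof simply reads the bound off from the identity \eqref{E:fVRG} (i.e.\ $f_{\eta}\circ C_{\eta}=f$, equivalently $V_{\eta}\circ C_{\eta}=V$, $\gamma$-almost everywhere) applied at the very regulator $\eta\in\RegSp$ where the boundedness hypothesis holds. Your additional remarks — that the bound transports through $C_{\eta}$ pointwise and that a $\gamma_{\eta}$-essential bound already suffices via pushforward of null sets — are correct refinements of the same argument, not a different method.
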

\begin{proof}
This can be directly read off from \eqref{E:fVRG}.
\end{proof}

\section{Case studies on example models}
\label{S:Examples}

As shown in Corollary~\ref{C:muExists}, a Wilsonian RG flow not terminating at some finite UV regularization strength always originates from a UV limit Feynman measure. That is, characterizing such flows is equivalent to characterizing their UV limit Feynman measures. The most common approach in the literature is to solve an informal ODE flow equation similar to Remark~\ref{R:QFTdiff}, and try to obtain a global solution admitting a UV limit. Besides such informal approach, a rigorous alternative is to make an educated guess directly on the UV limit Feynman measure. In the following we demonstrate this procedure on the $\varphi^{4}$ and similar models in spacetime dimensions 1 to 4, which need different techniques. Our guidelines will be Corollary~\ref{C:muExists}, Corollary~\ref{C:VExists} and Theorem~\ref{T:Vbounded}.

The most naive method is to construct the sought-after UV limit measure $\mu$ from a fixed reference (Gaussian) measure $\gamma$ on $\Sch'$ as described in Section~\ref{S:Intro}, and from an interaction potential $V:\,\Sch'\to\Rbar$, according to $\mu:=\e^{-V}\cdot\gamma$. Here $V$ is some kind of Borel measurable function, associated to the naive potential $\Vcal:\,\EE\cap\Sch'\to\Rbar$, usually defined by some local expression similar to that of $\Vcal(\varphi)=g\,\int_{\Omega}\varphi^{4}$, with $\Omega$ being some bounded open region of $\R^{N}$ (IR cutoff). The most simple association procedure of $V$ to $\Vcal$ is by smoothing (mollifying), i.e.\ taking the limit of $\mathcal{V} \circ C_{\eta^{}_{n}}$ in a suitable sense as $\seq{\eta}$ is an approximate identity in $\Sch$. 
To that end, let us introduce the set of mollifier sequences
\begin{eqnarray}
\Delta & := & \Big\{\seq{\eta} \mbox{ in } \Sch \;\Big\vert\; \seq{\eta} \;\mbox{ is approximate identity in } \Sch\Big\}
\end{eqnarray}
often used in the following.

\begin{example}[Extension by a single mollifier sequence]\label{Exa:singleeta}
Let $\gamma$ be a reference measure and $\Vcal:\,\EE\cap\Sch'\to\Rbar$ a Borel measurable functional that we wish to extend to $\Sch'$ and $\seq{\eta} \in \Delta$ a sequence.
Define the extension
\begin{equation}
V :\, \Sch' \to \Rbar, \qquad \phi \mapsto \liminf_{n \to \infty} \Vcal \left( \eta_n \star \phi \right) \, .
\end{equation}
This is clearly measurable (\cite{Rudin1987}~Theorem~1.14) and if one can control the above limit for a given potential, it is arguably one of the simplest extensions of $\Vcal$ to $\Sch'$.
\end{example}

As a specific mollifier sequence is singled out in the above construction, it is not canonical. A more canonical method is to simply use all of the mollifying sequences, and take the best of them.

\begin{example}[The greedy extension]
\label{Exa:greedy}
Let $\gamma$ be a reference measure and  $\Vcal:\,\EE\cap\Sch'\to\Rbar$ a Borel measurable functional on the smooth fields. Then define the measurable function $V$ via
\begin{eqnarray}\label{E:greedy}
  V & := & \mathop{(\gamma)\!\inf}\limits_{\seq{\eta}\in\Delta}\; \liminf\limits_{n\to\infty}\; \Vcal\circ C_{\eta_{n}}
\end{eqnarray}
which we call the \defin{greedy extension} of $\Vcal$ to $\Sch'$. Here $(\gamma)\!\inf$ is a shorthand for the infimum taken in the lattice of all $\gamma$-measurable functions, i.e.\ it is the ``functionwise'' infimum (see \cite{Hajlasz2002}~Lemma~2.6 and the preceding paragraphs, as well as in \cite{MeyerNieberg1991}~Lemma~2.6.1 and surrounding paragraphs). It is necessary to take the lattice (functionwise) infimum because the set $\Delta$ of mollifier sequences is uncountable, and therefore the pointwise infimum might lead to measurability issues.
\end{example}

Note the identity $\e^{-V}=\mathop{(\gamma)\!\sup}\limits_{(\eta_n)\in\Delta}\; \limsup\limits_{n\to\infty}\; \e^{-\Vcal\circ C_{\eta_{n}}}$ if $V$ is the greedy extension of $\Vcal$. Therefore, the above is the most economic measurable extension of potentials via smoothing, optimized such that $\e^{-V}$ has the largest possible overlap integral with $\gamma$ on all Borel sets. The following theorem summarizes its basic properties.

\begin{theorem}\label{T:greedy}
Let $\gamma$ be a reference measure, and let $\Vcal:\,\EE\cap\Sch'\to\Rbar$ be a measurable function bounded from below. Define $V$ to be its greedy extension according to \eqref{E:greedy}. Then $V$ is also a measurable function $\gamma$-essentially bounded from below, thus $\mu:=\e^{-V}\cdot\gamma$ defines a sigma-additive non-negative valued finite measure. The measure $\mu$ is not the zero measure if and only if
\begin{eqnarray}\label{E:gammasuplimsup}
\int\limits_{\phi\in\Sch'}\mathop{(\gamma)\!\sup}\limits_{(\eta_n)\in\Delta}\limsup\limits_{n\to\infty} \,\e^{-\Vcal(\eta_{n}\star\phi)}\,\de\gamma(\phi) \;>\; 0.
\end{eqnarray}
Moreover,
\begin{eqnarray}\label{E:suplimsup}
\sup\limits_{(\eta_n)\in\Delta}\limsup\limits_{n\to\infty} \int\limits_{\phi\in\Sch'}\e^{-\Vcal(\eta_{n}\star\phi)}\,\de\gamma(\phi) \;>\; 0
\end{eqnarray}
is sufficient for $\mu$ being not the zero measure.
\end{theorem}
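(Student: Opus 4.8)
The plan is to verify the claims in the order stated. First, the measurability and lower bound of $V$: since $\Vcal$ is bounded below by some $c\in\R$, every function $\Vcal\circ C_{\eta_n}$ is bounded below by $c$, hence so is $\liminf_{n\to\infty}\Vcal\circ C_{\eta_n}$ for each sequence $\seq{\eta}\in\Delta$, and therefore the lattice infimum $V$ over $\Delta$ is $\gamma$-essentially bounded below by $c$ as well (the functionwise infimum of a family uniformly bounded below stays bounded below $\gamma$-a.e., see the cited lattice facts). Consequently $\e^{-V}\leq\e^{-c}<\infty$ $\gamma$-a.e., so $\mu:=\e^{-V}\cdot\gamma$ is a well-defined finite non-negative sigma-additive measure, with $\mu(\Sch')=\int\e^{-V}\,\de\gamma\leq\e^{-c}\,\gamma(\Sch')<\infty$.

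Next, the iff statement. By definition $\mu$ is the zero measure precisely when $\int_{\Sch'}\e^{-V}\,\de\gamma=0$, which (since $\e^{-V}\geq 0$) is equivalent to $\e^{-V}=0$ $\gamma$-a.e., i.e.\ to $\int_{\Sch'}\e^{-V}\,\de\gamma>0$ failing. So $\mu$ is nonzero iff $\int_{\Sch'}\e^{-V}\,\de\gamma>0$. The content of \eqref{E:gammasuplimsup} is then just the identity noted right before the theorem, namely $\e^{-V}=\mathop{(\gamma)\!\sup}_{(\eta_n)\in\Delta}\limsup_{n\to\infty}\e^{-\Vcal\circ C_{\eta_n}}$: applying $t\mapsto\e^{-t}$, which is continuous and order-reversing, turns the functionwise infimum of $\liminf_n\Vcal\circ C_{\eta_n}$ into the functionwise supremum of $\limsup_n\e^{-\Vcal\circ C_{\eta_n}}$ (this exchange of $\inf$/$\sup$ under an order-reversing homeomorphism is valid in the lattice of $\gamma$-measurable functions, again by the cited results). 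Substituting this expression for $\e^{-V}$ into $\int\e^{-V}\,\de\gamma>0$ yields exactly \eqref{E:gammasuplimsup}.

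Finally, the sufficiency of \eqref{E:suplimsup}. For any fixed $\seq{\eta}\in\Delta$ and any $n$, the functionwise (hence pointwise $\gamma$-a.e.) inequality $\e^{-\Vcal\circ C_{\eta_n}}\leq\mathop{(\gamma)\!\sup}_{(\eta'_m)\in\Delta}\limsup_{m\to\infty}\e^{-\Vcal\circ C_{\eta'_m}}=\e^{-V}$ holds (the left side is dominated by the $\limsup$ along its own sequence, which is dominated by the lattice sup over all sequences). Integrating against $\gamma$ gives $\int_{\Sch'}\e^{-\Vcal(\eta_n\star\phi)}\,\de\gamma(\phi)\leq\int_{\Sch'}\e^{-V}\,\de\gamma=\mu(\Sch')$ for every $n$ and every $\seq{\eta}\in\Delta$; taking $\limsup_n$ and then the supremum over $\Delta$ shows that the left-hand side of \eqref{E:suplimsup} is $\leq\mu(\Sch')$. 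Hence if \eqref{E:suplimsup} holds then $\mu(\Sch')>0$, so $\mu$ is not the zero measure. I expect the only delicate point to be the handling of the lattice (functionwise) supremum/infimum — in particular justifying the order-reversing exchange under $t\mapsto\e^{-t}$ and the domination of each individual $\e^{-\Vcal\circ C_{\eta_n}}$ by the lattice supremum — but these are standard facts about the Dedekind-complete lattice of measurable functions from the cited references, so the argument is essentially bookkeeping once they are invoked.
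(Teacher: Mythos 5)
Your handling of the first two claims (measurability and essential lower bound of $V$, and the equivalence of $\mu\neq 0$ with \eqref{E:gammasuplimsup} via the identity $\e^{-V}=\mathop{(\gamma)\!\sup}_{(\eta_n)\in\Delta}\limsup_{n}\e^{-\Vcal\circ C_{\eta_n}}$) is fine and matches what the paper treats as immediate from the definition of the greedy extension. The problem is in your proof of the sufficiency of \eqref{E:suplimsup}. You claim that for a \emph{fixed} $n$ one has $\e^{-\Vcal\circ C_{\eta_n}}\leq\limsup_{m\to\infty}\e^{-\Vcal\circ C_{\eta_m}}\leq\e^{-V}$ $\gamma$-a.e., ``the left side is dominated by the $\limsup$ along its own sequence''. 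That is false: a $\limsup$ does not dominate the individual terms of the sequence (compare $1,0,0,\dots$). Concretely, for the IR-cutoff $\varphi^4$ potential in $N>1$ (Corollary~\ref{C:no-go}) one has $\e^{-V}=0$ $\gamma$-a.e., hence $\mu(\Sch')=0$, while $\int_{\Sch'}\e^{-\Vcal(\eta_n\star\phi)}\,\de\gamma(\phi)>0$ for every fixed $n$; so your intermediate inequality $\int\e^{-\Vcal\circ C_{\eta_n}}\,\de\gamma\leq\mu(\Sch')$ for each $n$ cannot hold. Only the $\limsup$ in $n$ of these integrals is controlled by $\mu(\Sch')$, and that is exactly what has to be proved.

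The correct route — and the one the paper takes — is the reverse Fatou lemma, and this is precisely where the hypothesis that $\Vcal$ is bounded below (say $\Vcal\geq c$) enters beyond guaranteeing finiteness of $\mu$: the functions $\e^{-\Vcal\circ C_{\eta_n}}$ admit the constant $\gamma$-integrable majorant $K:=\e^{-c}$ (recall $\gamma$ is finite), so applying Fatou to $K-\e^{-\Vcal\circ C_{\eta_n}}$ gives
\begin{eqnarray*}
\limsup_{n\to\infty}\int_{\Sch'}\e^{-\Vcal(C_{\eta_n}\phi)}\,\de\gamma(\phi)\;\leq\;\int_{\Sch'}\limsup_{n\to\infty}\e^{-\Vcal(C_{\eta_n}\phi)}\,\de\gamma(\phi)\;\leq\;\int_{\Sch'}\e^{-V}\,\de\gamma,
\end{eqnarray*}
where the last step uses the $\gamma$-a.e.\ domination of each fixed sequence's $\limsup$ (not of each term) by the lattice supremum $\e^{-V}$. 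Taking the supremum over $(\eta_n)\in\Delta$ then yields that the left-hand side of \eqref{E:suplimsup} is at most $\mu(\Sch')$, which gives the sufficiency. So your overall architecture is right, but the step you flagged as ``delicate'' is where the argument actually breaks, and it is repaired by the dominated-convergence/reverse-Fatou interchange rather than by termwise domination.
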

\begin{proof}
    The first statement \eqref{E:gammasuplimsup} is evident from the definition of the greedy extension. The second statement \eqref{E:suplimsup} is seen as follows.

    Using the knowledge that $\Vcal$ is bounded from below, the family of functions $\e^{-\Vcal\circ C_{\eta}}$ ($\eta\in\Sch$) are bounded from above by some $K\geq 0$. That is, they have a $\gamma$-integrable majorant $K$. Taking now an approximate identity $(\eta_{n})\in\Delta$, and applying the Fatou lemma (\cite{Rudin1987}~Theorem~1.28) to the sequence of non-negative valued measurable functions $K-\e^{-\Vcal\circ C_{\eta_{n}}}$, then taking the negative of both sides of the Fatou inequality, and adding $\int_{\phi\in\Sch'}K\,\de\gamma(\phi)$, one infers
     \begin{eqnarray}\label{E:greedy1}
     \limsup\limits_{n\to\infty} \int\limits_{\phi\in\Sch'}\e^{-\Vcal(C_{\eta_{n}}\,\phi)}\,\de\gamma(\phi) \;\leq\;  \int\limits_{\phi\in\Sch'}\limsup\limits_{n\to\infty}\e^{-\Vcal(C_{\eta_{n}}\,\phi)}\,\de\gamma(\phi).
     \end{eqnarray}
     Since $\limsup\limits_{n\to\infty}\e^{-\Vcal\circ C_{\eta_{n}}}\leq\mathop{(\gamma)\!\sup}\limits_{(\eta_n)\in\Delta}\limsup\limits_{n\to\infty}\e^{-\Vcal\circ C_{\eta_{n}}}$ holds $\gamma$-almost everywhere, by applying the monotonity of integration to the right hand side of \eqref{E:greedy1}, one has
     \begin{eqnarray}
     \limsup\limits_{n\to\infty} \int\limits_{\phi\in\Sch'}\e^{-\Vcal(C_{\eta_{n}}\,\phi)}\,\de\gamma(\phi) \;\leq\;  \int\limits_{\phi\in\Sch'}\mathop{(\gamma)\!\sup}\limits_{(\eta_n)\in\Delta}\limsup\limits_{n\to\infty}\e^{-\Vcal(C_{\eta_{n}}\,\phi)}\,\de\gamma(\phi).
     \end{eqnarray}
     Taking now the $\sup\limits_{(\eta_{n})\in\Delta}$ of both sides of this inequality,
     \begin{eqnarray}\label{E:greedy2}
     \sup\limits_{(\eta_n)\in\Delta}\limsup\limits_{n\to\infty} \int\limits_{\phi\in\Sch'}\e^{-\Vcal(C_{\eta_{n}}\,\phi)}\,\de\gamma(\phi) \;\leq\;  \int\limits_{\phi\in\Sch'}\mathop{(\gamma)\!\sup}\limits_{(\eta_n)\in\Delta}\limsup\limits_{n\to\infty}\e^{-\Vcal(C_{\eta_{n}}\,\phi)}\,\de\gamma(\phi) \cr
     \qquad\qquad\qquad\qquad\qquad\qquad\qquad\qquad\qquad \;=\; \int\limits_{\phi\in\Sch'}\e^{-V(\phi)}\,\de\gamma(\phi)
     \end{eqnarray}
     is inferred. Thus, \eqref{E:suplimsup} is indeed a sufficient condition for the integral on the rightmost side of \eqref{E:greedy2} to be positive, which completes the proof.
\end{proof}

Considering the elementary properties of the supremum and limsup, \eqref{E:gammasuplimsup} and \eqref{E:suplimsup} may be simplified even further, as stated below.

\begin{corollary}\label{C:greedy}
Let $\gamma$ be a reference measure, and let $\Vcal:\,\EE\cap\Sch'\to\Rbar$ be a measurable function bounded from below. Let $V$ be its greedy extension according to \eqref{E:greedy}, and define the corresponding sigma-additive non-negative valued finite measure $\mu:=\e^{-V}\cdot\gamma$. Then, $\mu$ is not the zero measure if and only if there exists some approximate identity $\seq{\eta}$ in $\Sch$, such that
\begin{eqnarray}\label{E:Elimsup}
\int\limits_{\phi\in\Sch'}\limsup\limits_{n\to\infty}\e^{-\Vcal(\eta_{n}\star\phi)}\,\de\gamma(\phi) \;>\; 0
\end{eqnarray}
holds. Moreover,
\begin{eqnarray}\label{E:Elim}
\lim\limits_{n\to\infty} \int\limits_{\phi\in\Sch'}\e^{-\Vcal(\eta_{n}\star\phi)}\,\de\gamma(\phi) \;>\; 0
\end{eqnarray}
is sufficient for $\mu$ being not the zero measure.
\end{corollary}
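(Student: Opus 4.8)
The plan is to reduce Corollary~\ref{C:greedy} to Theorem~\ref{T:greedy} by rewriting the lattice-supremum condition \eqref{E:gammasuplimsup} in terms of a single mollifier sequence, and by observing that \eqref{E:Elim} is simply a special case of the sufficient condition \eqref{E:suplimsup}.

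For the \emph{sufficiency of \eqref{E:Elim}}: if $\seq{\eta}\in\Delta$ satisfies $\lim_{n\to\infty}\int_{\Sch'}\e^{-\Vcal(\eta_{n}\star\phi)}\,\de\gamma(\phi)>0$, then for this particular sequence the $\limsup$ appearing in \eqref{E:suplimsup} coincides with the limit and is positive, hence the supremum over $\Delta$ in \eqref{E:suplimsup} is positive, and Theorem~\ref{T:greedy} yields that $\mu$ is not the zero measure.

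For the \emph{equivalence with \eqref{E:Elimsup}}, the implication ``\eqref{E:Elimsup} holds for some $\seq{\eta}\in\Delta$ $\Rightarrow$ $\mu\neq 0$'' follows because, by the defining property of the functionwise supremum, $\e^{-V}=\mathop{(\gamma)\!\sup}_{(\eta'_{n})\in\Delta}\limsup_{n\to\infty}\e^{-\Vcal\circ C_{\eta'_{n}}}$ dominates $\limsup_{n\to\infty}\e^{-\Vcal\circ C_{\eta_{n}}}$ $\gamma$-almost everywhere for the chosen $\seq{\eta}$, so by monotonicity of the integral the left-hand side of \eqref{E:gammasuplimsup} is at least the left-hand side of \eqref{E:Elimsup}, and Theorem~\ref{T:greedy} applies. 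Conversely, assume $\mu\neq 0$, i.e.\ $\int_{\Sch'}\e^{-V}\,\de\gamma>0$. The family $\{\limsup_{n\to\infty}\e^{-\Vcal\circ C_{\eta_{n}}}\mid\seq{\eta}\in\Delta\}$ is uniformly bounded above by a finite constant, since $\Vcal$ is bounded from below; therefore the functionwise supremum defining $\e^{-V}$ is attained $\gamma$-almost everywhere by a countable subfamily (the standard realization property of the essential supremum, cf.\ \cite{Hajlasz2002}~Lemma~2.6, \cite{MeyerNieberg1991}~Lemma~2.6.1): there are $(\eta^{(k)}_{n})_{n\in\N}\in\Delta$, $k\in\N$, with
\begin{eqnarray*}
\e^{-V}\;=\;\sup_{k\in\N}\;\limsup_{n\to\infty}\;\e^{-\Vcal\circ C_{\eta^{(k)}_{n}}}\qquad(\gamma\text{-a.e.}).
\end{eqnarray*}
Write $g_{k}:=\limsup_{n\to\infty}\e^{-\Vcal\circ C_{\eta^{(k)}_{n}}}\geq 0$. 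From $\int_{\Sch'}\sup_{k}g_{k}\,\de\gamma>0$ we obtain $\gamma(\{\sup_{k}g_{k}>0\})>0$, and since $\{\sup_{k}g_{k}>0\}=\bigcup_{k\in\N}\{g_{k}>0\}$, countable subadditivity produces some $k_{0}$ with $\gamma(\{g_{k_{0}}>0\})>0$, whence $\int_{\Sch'}g_{k_{0}}\,\de\gamma>0$. The approximate identity $(\eta^{(k_{0})}_{n})_{n\in\N}\in\Delta$ then satisfies \eqref{E:Elimsup}, completing the equivalence.

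The only nontrivial ingredient is the realization of the functionwise (lattice) supremum by a countable subfamily; everything else is monotonicity of the integral together with countable subadditivity. I do not anticipate a genuine obstacle here: the point is merely to notice that the lower bound on $\Vcal$ makes the family $\e^{-\Vcal\circ C_{\eta}}$ uniformly bounded, which places it in exactly the setting where the essential-supremum lemma already invoked for the definition of the greedy extension applies.
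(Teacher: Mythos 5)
Your proof is correct and follows the same elementary reduction to Theorem~\ref{T:greedy} that the paper intends, which simply asserts the corollary via ``elementary properties of the supremum and limsup'': sufficiency of \eqref{E:Elim} as a special case of \eqref{E:suplimsup}, domination of each $\limsup_{n}\e^{-\Vcal\circ C_{\eta_{n}}}$ by the lattice supremum $\e^{-V}$ for one direction, and extraction of a single approximate identity from the lattice supremum for the other. The one nontrivial tool you invoke, the countable realization of the $(\gamma)\!\sup$, is legitimate (it is exactly the content of the lemmas the paper cites when defining the greedy extension), though it can be bypassed by an even more elementary remark: if every $\seq{\eta}\in\Delta$ gave $\int_{\Sch'}\limsup_{n\to\infty}\e^{-\Vcal(\eta_{n}\star\phi)}\,\de\gamma(\phi)=0$, then each of these nonnegative integrands would vanish $\gamma$-almost everywhere, so the zero function would be an upper bound in the lattice of $\gamma$-measurable functions and hence $\e^{-V}=0$ $\gamma$-almost everywhere, contradicting $\mu\neq 0$.
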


The above necessary and sufficient condition \eqref{E:Elimsup} for the nonvanishing of $\e^{-V}\cdot\gamma$ is typically hard to evaluate, given some $\Vcal$. The sufficient condition \eqref{E:Elim}, however, is in principle evaluable for concrete QFT models, by mapping $\Sch'$, $\gamma$ and the operations defining $\Vcal$ to sequence spaces. It is still an unresolved interesting question, that along which additional properties on $\Vcal$ the condition \eqref{E:Elim} would be also necessary. If such properties could be uncovered, it would provide a rather powerful tool for checking the renormalizability of concrete QFT models. For potentials having finite UV limit of their expectation value, i.e.\ in case
\begin{eqnarray}\label{E:Vlim}
\lim\limits_{n\to\infty} \int\limits_{\phi\in\Sch'}\Vcal(\eta_{n}\star\phi)\,\de\gamma(\phi) \;<\; \infty
\end{eqnarray}
holds for some approximate identity $\seq{\eta}$ in $\Sch$, Jensen's inequality (\cite{Rudin1987}~Theorem~3.3) implies that the sufficient condition \eqref{E:Elim} is satisfied. Evaluating this, it follows that for the IR cutoff $\varphi^{4}$ model over a 0+1 dimensional spacetime, the greedy extension defines a valid interacting Feynman measure.\footnote{It is of course well known in the literature that this model has a Feynman measure over $\R^{1}$ spacetime. That is simply because a free Klein--Gordon Gaussian measure on fields over $\R^{1}$ is supported on the continuous functions, see Remark~\ref{R:GaussSuppNonzero}~\enref{R:GaussSuppNonzero:ii} in \ref{A:GaussSupp}.} The condition \eqref{E:Vlim} is also satisfied for IR cutoff potentials in arbitrary dimensional spacetimes whenever the potential density is bounded both from below and above. Such an example is the potential $\Vcal(\varphi)=g\,\int_{\Omega}\frac{1}{1+l^{4}\,\varphi^{4}}\,\varphi^{4}$ (``basin-shaped'' potential), or a sine--Gordon potential.\footnote{A very recent result on the triviality of Feynman measures subordinate to certain bounded potentials is proved in \cite{Dybalski2025}.} Therefore, with the greedy extension, these define their Feynman measures in a relatively straightforward way. The greedy extension preserves lower bound, and therefore is particularly attractive for extension of regularized potentials which are themselves bounded from below, as it respects the constraint by Theorem~\ref{T:Vbounded}.

Taking a step back, the use of mollifiers is actually just a way to approximate a given $\phi \in \Sch'$ by a sequence in $\EE \cap \Sch'$. If we simply consider every possible approximation at any point $\phi \in \Sch'$, this is a well-known method in convex analysis corresponding to taking the lower semicontinuous envelope (see e.g.\ \cite{Zalinescu2002}).

\begin{example}[The lower semicontinuous envelope]\label{Exa:lsc}
Let $\gamma$ be a reference measure and  $\Vcal:\,\EE\cap\Sch'\to\Rbar$ Borel measurable.
Set $W$ to be
\begin{equation}
W \left( \phi \right)
=
\cases{\Vcal \left( \phi \right) & if $\phi \in \EE \cap \Sch'$ \\
+\infty & otherwise,}
\end{equation}
i.e.\ to be the extension of $\Vcal$ by $+\infty$ wherever it was not already defined in $\Sch'$, and take its lower semicontinuous envelope
\begin{equation}
\underline{W} \left( \phi \right)
\;:=\;
\inf_{\phi_\alpha \to \phi} \liminf_\alpha W \left( \phi_\alpha \right)
\;=\;\inf_{\varphi_\alpha \to \phi} \liminf_\alpha \Vcal \left( \varphi_\alpha \right) \, ,
\end{equation}
where $(\phi_{\alpha})$ and $(\varphi_\alpha)$ are any net in $\Sch'$ and $\EE \cap \Sch'$, respectively, converging weakly to $\phi\in\mathcal{S}'$, and the $\liminf$ is understood as the infimum of all subsequential limits of the corresponding net.
The above construction corresponds to defining $\underline{W}$ by setting its epigraph to be equal to the closure of the epigraph of $W$, the closure taken with respect to the weak topology of $\Sch'$.

A straightforward consequence is that all sublevel sets $\overset{-1}{\underline{W}}( [-\infty, t] )$ for $t \in \R$ are weakly closed in $\Sch'$.
Hence, $\underline{W}$ is automatically Borel measurable.
\end{example}

A benefit of this construction is that it is always smaller than any (meaningful) limit of a sequence $\mathcal{V} \circ C_{\eta_n}$ ($n\in\N$), in particular it is smaller than the greedy extension. Consequently, $\e^{-\underline{W}}$ has a greater chance of not being zero $\gamma$-almost everywhere, in contrast to a corresponding limit of $\mathcal{V} \circ C_{\eta_n}$. If $\Vcal$ was bounded from below, then so is its lower semicontinuous envelope $\underline{W}$, and thus $\e^{-\underline{W}}\cdot\gamma$ defines a sigma-additive non-negative valued finite measure. Similar to the greedy extension, this product is nonzero whenever the overlap integral of $\e^{-\underline{W}}$ and $\gamma$ is nonzero.

It is apparent that the construction of UV limit potentials in general is very difficult.
In that spirit, below we collect some frequent pathology properties of a potential $\mathcal{V}$, such that even taking its lower semicontinuous envelope $\underline{W}$ as above does not help to define a nonzero measure $\e^{-\underline{W}}\cdot\gamma$.

\begin{lemma}\label{L:C+jX-Borel}
    Let $Y$ be a reflexive locally convex topological vector space. Let $Z\subset Y$ be a linear subspace equipped with a norm $\|\cdot\|$ such that the inclusion $j:\,(Z,\|\cdot\|)\to Y$ is continuous. Then, for any weakly closed subset $A\subset Y$, the subset $A+j^{**}[(Z,\|\cdot\|)'']$ is (weakly) Borel measurable in $Y$.
\end{lemma}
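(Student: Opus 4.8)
The plan is to reduce the statement to Lemma~\ref{L:uX+A} by exhibiting $(Z,\|\cdot\|)''$ as a continuous image of a space that is a countable union of weakly compact sets. First I would recall that $Z$, being a normed space, has a bidual $(Z,\|\cdot\|)''$ which is a Banach space under the bidual norm, and that its closed unit ball $\mathcal{B}''$ is weak-*-compact by the Alaoglu--Bourbaki theorem. Consequently $(Z,\|\cdot\|)''=\bigcup_{n\in\N} n\,\mathcal{B}''$ is a countable union of weak-*-compact (hence weakly compact) sets. The role of reflexivity of $Y$ is to guarantee that the canonical map $j^{**}:\,(Z,\|\cdot\|)''\to Y''\cong Y$ makes sense as a map into $Y$ itself: indeed $j:\,(Z,\|\cdot\|)\to Y$ is continuous, so $j^{**}:\,(Z,\|\cdot\|)''\to Y''$ is weak-*-to-weak-* continuous, and identifying $Y''$ with $Y$ (via reflexivity) turns $j^{**}$ into a map $(Z,\|\cdot\|)''\to Y$ which is continuous from the weak-* topology of the bidual to the weak topology of $Y$.

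The key steps, in order, are: (i) equip $W:=(Z,\|\cdot\|)''$ with its weak-* topology $\sigma(W,Z')$; verify $W=\bigcup_n n\mathcal{B}''$ with each $n\mathcal{B}''$ weak-*-compact, so $W$ is a countable union of weakly compact sets in the sense required by Lemma~\ref{L:uX+A} (here the relevant "weak" topology on $W$ is its own weak topology, which for the purposes of that lemma we may take to be $\sigma(W,Z')$ since weak-*-compact sets are a fortiori weakly compact, or simply apply the lemma with $X=W$ carrying the weak-* topology, noting the lemma only needs $X$ to be a Hausdorff lcs that is a countable union of sets compact in the topology against which $u$ is continuous). (ii) Check that $u:=j^{**}:\,W\to Y$ is continuous when $W$ carries $\sigma(W,Z')$ and $Y$ carries its weak topology $\sigma(Y,Y')$: this is exactly the statement that $j^{**}$ is weak-*-to-weak-* continuous composed with the reflexivity identification $Y''\cong Y$, which carries $\sigma(Y'',Y')$ to $\sigma(Y,Y')$. (iii) Apply Lemma~\ref{L:uX+A} with this $X=W$, this $u$, and the given weakly closed $A\subset Y$, to conclude that $u[W]+A=j^{**}[W]+A$ is Borel in $Y$ with respect to both the weak and the original topology.

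One point that needs care is the precise matching of hypotheses in Lemma~\ref{L:uX+A}: that lemma is stated for a continuous map $u:\,X\to Y$ between Hausdorff lcs with $X$ a countable union of weakly compact sets, and in its proof it uses that $u$ is weakly continuous (invoking that continuous linear maps are weakly continuous) to get $u[\mathcal{K}_n]$ weakly compact. Here $j^{**}$ is linear, and the relevant continuity is from $\sigma(W,Z')$ on $W$ to $\sigma(Y,Y')$ on $Y$ directly; so I would either (a) apply the lemma verbatim with $X=(W,\sigma(W,Z'))$, noting this is a Hausdorff lcs whose sets $n\mathcal{B}''$ are already $\sigma(W,Z')$-compact (so the "weakly compact" hypothesis is met with room to spare) and that $j^{**}$ is $\sigma(W,Z')$-to-$\sigma(Y,Y')$ continuous (hence in particular continuous for the respective original topologies is not even needed), or (b) observe that the proof of Lemma~\ref{L:uX+A} only uses weak continuity of $u$ and applies unchanged. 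The identification $Y''\cong Y$ and the resulting reinterpretation of $j^{**}$ as a $Y$-valued map is the only genuinely structural input beyond the quoted lemma.

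The main obstacle I anticipate is bookkeeping of topologies rather than any deep difficulty: one must be scrupulous that the "weak" topology on the bidual under which $n\mathcal{B}''$ is compact ($\sigma(W,Z')$, i.e.\ weak-*) is the one transported continuously by $j^{**}$ into $\sigma(Y,Y')$, and that reflexivity of $Y$ is genuinely used to land $j^{**}$ in $Y$ (so that "$A+j^{**}[W]$" is a well-defined subset of $Y$ at all). There is also a minor subtlety in whether one wants Borel with respect to the original topology of $Y$ or only the weak one; but as in Lemma~\ref{L:uX+A}, since the original topology refines the weak topology, weak-Borel implies Borel, so the conclusion in both senses follows automatically.
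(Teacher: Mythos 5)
Your proposal is correct and follows essentially the same route as the paper, whose proof consists of the single remark that the statement is straightforward from Lemma~\ref{L:uX+A}. Your careful bookkeeping — equipping $(Z,\|\cdot\|)''$ with the weak-* topology so its dilated unit balls are compact, and using reflexivity of $Y$ to view $j^{**}$ as a weak-*-to-weak continuous map into $Y$ so that the proof of Lemma~\ref{L:uX+A} applies verbatim — is precisely the detail the paper leaves implicit.
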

\begin{proof}
    The statement is straightforward from Lemma \ref{L:uX+A}. 
\end{proof}

In the following, we shall use the symbol $\overline{A}$ for the closure of a set $A$ in the locally convex topological vector space $Y$, with respect to the original topology. However,  if $A$ is a linear subspace (or merely a convex set) of $Y$, then weak closure coincides with the closure in the original topology (\cite{Rudin1991}~Theorem~3.12). In such case, the symbol $\overline A$ refers to either kind of closures.

\begin{theorem}\label{T:no-go}
Let $X$ be a vector space and $Y$ be a reflexive locally convex topological vector space. Let $\iota,\pi:\,X\to Y$ be linear operators, with $\iota$ being injective, such that 
\begin{eqnarray}\label{E:iota-pi}
(\iota-\pi)[X] & \quad\subset\quad & J:= \overline{\iota[\Ker\pi]}
\end{eqnarray}
holds. Let $\|\cdot\|$ be a norm on $\pi[X]$ such that the identity map $j: \, (\pi[X],\|\cdot\|) \to Y$ is continuous. 
Let $\Vcal:\,X\to\Rbar$ be some function, and define a corresponding set of values
\begin{equation}\label{E:L}
L:=
\Big\{ \liminf_{n \to \infty} \Vcal (\varphi_n) \,\Big\vert\, ( \varphi_n)_{n \in \mathbb{N}} \mbox{ in }  X \mbox{ with } \|\pi \varphi_n\|\to+\infty  \Big\}
\subset
\Rbar \, .
\end{equation}
Denote by
\begin{equation}
W :\, Y \to \Rbar, \qquad \phi \mapsto \cases{\Vcal(\iota^{-1}(\phi)) & if $\phi \in \iota[X]$ \\ +\infty & otherwise}
\end{equation}
the extension of $\Vcal\circ\iota^{-1}$ by $+\infty$ from $\iota[X]\subset Y$ to the full $Y$.

Then, $J + j^{**}[\pi[X]'']$ is a weak Borel set in $Y$, where $\pi[X]'':=(\pi[X], \|\cdot\|)''$ is the norm-bidual of $\pi[X]$. Furthermore, given some Borel measure $\gamma$ on the weak Borel sets of $Y$ obeying $\gamma({J + j^{**} [ \pi[X]''] }) = 0$, the lower semicontinuous envelope $\Wbar$ of $W$ takes values in $L \cup \{+ \infty \}$, $\gamma$-almost everywhere.
If in addition $\gamma( Y \setminus \overline{\iota[X]} ) = 0$, e.g.\ if $\iota[X]$ is (weakly) dense in $Y$, then $\Wbar$ takes values in $L$, $\gamma$-almost everywhere.
\end{theorem}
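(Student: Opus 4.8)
\emph{Measurability of $N:=J+j^{**}[\pi[X]'']$.} The plan is to prove the three assertions in turn. For the first, note that $J=\overline{\iota[\Ker\pi]}$ is the closure of a linear subspace, hence convex and closed and therefore weakly closed; so the claim is precisely Lemma~\ref{L:C+jX-Borel} applied with the normed subspace $(\pi[X],\|\cdot\|)\subset Y$, the continuous inclusion $j$, and the weakly closed set $A:=J$. For the remaining steps I would also record, from the proof of that lemma (ultimately Lemma~\ref{L:uX+A}), the decomposition $j^{**}[\pi[X]'']=\bigcup_{M\in\N}j^{**}[M\,B'']$, where $B$, $B''$ are the closed unit balls of $(\pi[X],\|\cdot\|)$ and of $\pi[X]''$: by Alaoglu--Bourbaki $M\,B''$ is compact in $\sigma(\pi[X]'',(\pi[X])')$; the map $j^{**}$ is continuous from that topology into the weak topology of $Y$ (because $j'$ carries $Y'$ into $(\pi[X])'$ and $Y''=Y$ by reflexivity), and it agrees with $j$ on the canonical isometric image of $\pi[X]$; hence each $j^{**}[M\,B'']$ is a weakly compact, in particular weakly closed, subset of $Y$ that contains $j[M\,B]$.

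\emph{Behaviour of $\Wbar$ off $N$.} Fix $\phi\in Y\setminus N$. If $\phi\notin\overline{\iota[X]}$, some weak neighbourhood of $\phi$ misses $\iota[X]$, so every net tending weakly to $\phi$ is eventually off $\iota[X]$ (where $W=+\infty$), giving $\Wbar(\phi)=+\infty\in L\cup\{+\infty\}$. Otherwise $\phi\in\overline{\iota[X]}$; since $W$ is finite precisely on $\iota[X]$ and a subnet of any net realizing a finite value of $\liminf W$ is eventually in $\iota[X]$, one checks that
\begin{eqnarray*}
\Wbar(\phi) & \;=\; & \inf\Big\{\,\liminf_{\alpha}\Vcal(\psi_{\alpha})\ \Big|\ (\psi_\alpha)\ \mbox{a net in}\ X\ \mbox{with}\ \iota\psi_\alpha\to\phi\ \mbox{weakly}\,\Big\},
\end{eqnarray*}
an infimum over a nonempty set. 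The crux is the claim that \emph{every} net in this family satisfies $\|\pi\psi_\alpha\|\to+\infty$. If not, a subnet has $\|\pi\psi_\beta\|\le M$, so $\pi\psi_\beta\in j[M\,B]\subset j^{**}[M\,B'']$, a weakly compact set; passing to a further subnet, $\pi\psi_\gamma\to y$ weakly in $Y$ for some $y\in j^{**}[\pi[X]'']$. By \eqref{E:iota-pi} one has $(\iota-\pi)\psi_\gamma\in J$ for every $\gamma$, while $(\iota-\pi)\psi_\gamma=\iota\psi_\gamma-\pi\psi_\gamma\to\phi-y$ weakly; as $J$ is weakly closed, $\phi-y\in J$, hence $\phi=y+(\phi-y)\in j^{**}[\pi[X]'']+J=N$, contradicting $\phi\notin N$. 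I expect this weak-compactness argument to be the main obstacle, since it is the only point where reflexivity of $Y$, Alaoglu--Bourbaki in the bidual, and the algebraic inclusion \eqref{E:iota-pi} have to be combined, and one must keep the (possibly non-metrizable) weak topology at the level of nets while aiming at a sequential conclusion.

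\emph{From nets to $L$ and conclusion.} Fix a net as above and set $c:=\liminf_\alpha\Vcal(\psi_\alpha)\in\Rbar$. Using $\|\pi\psi_\alpha\|\to+\infty$ together with the directedness of the index set and the definition of $\liminf$ along a net, a routine selection produces a \emph{sequence} $(\varphi_k)=(\psi_{\alpha_k})$ in $X$ with $\|\pi\varphi_k\|>k$ and $\Vcal(\varphi_k)\to c$ (with the evident modifications when $c=\pm\infty$), whence $\|\pi\varphi_k\|\to+\infty$ and $\liminf_k\Vcal(\varphi_k)=c$, i.e.\ $c\in L$. A parallel diagonal selection shows that $L$ is closed in $\Rbar$. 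Consequently $\Wbar(\phi)$, being the infimum of a nonempty subset of $L$, lies in $\overline L=L\subset L\cup\{+\infty\}$; combined with the case $\phi\notin\overline{\iota[X]}$ this gives $\Wbar(\phi)\in L\cup\{+\infty\}$ for all $\phi\in Y\setminus N$, hence $\gamma$-almost everywhere as $\gamma(N)=0$. Finally, if in addition $\gamma(Y\setminus\overline{\iota[X]})=0$ — e.g.\ when $\iota[X]$ is weakly dense, so $\overline{\iota[X]}=Y$ — then $\gamma$-almost every $\phi$ lies in $\overline{\iota[X]}\setminus N$, where the same argument in fact yields $\Wbar(\phi)\in L$; so $\Wbar$ takes values in $L$, $\gamma$-almost everywhere.
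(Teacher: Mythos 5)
Your proposal is correct and follows essentially the same route as the paper's proof: Lemma~\ref{L:C+jX-Borel} for the Borel measurability of $J+j^{**}[\pi[X]'']$, the bounded-subnet/Alaoglu/weak-closedness-of-$J$ argument (using reflexivity and the weak-*-to-weak continuity of $j^{**}$) to force $\|\pi\psi_\alpha\|\to+\infty$ for every net approximating a point off that set, and a cofinal selection of a sequence together with the closedness of $L$ to conclude $\Wbar\in L$ there, with the $+\infty$ value reserved for $\phi\notin\overline{\iota[X]}$. The only cosmetic differences are that you pass through the weakly compact sets $j^{**}[M\,B'']$ inside $Y$ rather than first extracting a weak-* limit $z\in\pi[X]''$ and then applying $j^{**}$, and that you treat the liminf values of arbitrary nets directly instead of the paper's auxiliary set $A_\phi$ of limits along $\Vcal$-convergent nets.
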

\begin{proof}
    First of all, the set $J + j^{**}[\pi[X]'']\subset Y$ is weakly Borel, according to Lemma~\ref{L:C+jX-Borel}. Let now $\phi\in Y$ be any vector. By definition we have 
\begin{eqnarray*}
    \Wbar(\phi)=\inf\Big\{\liminf_{\alpha} W(\phi_\alpha)\,\Big\vert\,(\phi_\alpha)_{\alpha\in I}\subset Y, \phi_\alpha\to \phi \quad\mbox{weakly in }Y \Big\},
\end{eqnarray*}
hence $\Wbar(\phi)=+\infty$ whenever $\phi\notin \overline{\iota[X]}$. On the other hand, when $\phi\in \overline{\iota[X]}$ we have 
\begin{eqnarray*}
    \Wbar(\phi)=\inf \Big\{
\liminf_{\alpha} \Vcal (\varphi_\alpha) \,\Big\vert\,
( \varphi_\alpha)_{\alpha \in I}\subset X,\,  \iota \varphi_\alpha\to \phi \quad\mbox{weakly in }Y 
\Big\}
\end{eqnarray*}
Hence, fix a net $(\varphi_\alpha)_{\alpha \in I}$ in $X$ such that $\iota \varphi_\alpha\to \phi$ weakly in $Y$. Suppose first that there exists a subnet $(\varphi'_\beta)_{\beta\in I'}$ such that $(\pi \varphi_{\beta}')_{\beta\in I'}$ is norm bounded in $(\pi[X],\|\cdot\|)$. Then, by the Banach--Alaoglu theorem (\cite{Rudin1991}~Theorem~3.15) we may suppose that the net $(\widehat{\pi \varphi'_\beta })_{\beta\in I'}$ has a weak-* limit $z\in\pi[X]''$, that is, 
\begin{equation*}
    \forall\, p\in (\pi[X],\|\cdot\|)'\;: \qquad \widehat{\pi \varphi'_\beta }(p):=p(\pi \varphi'_\beta)\to z(p).
\end{equation*}
Here $\widehat{(\cdot)}$ denotes the canonical injection of a vector into the continuous bidual space. 
Note that $j^{**}$ is equal to the weak-*-to-weak continuous extension of $j$ to the bidual space $\pi[X]''$, hence  $\pi(\varphi_\beta')=j\pi(\varphi_\beta')=j^{**}\widehat{\pi(\varphi_\beta')}\to j^{**}z$ weakly in $Y$. Since $\iota \varphi_\beta'\to \phi$ weakly in $Y$, it follows that $(\iota-\pi)\varphi_\beta'\to \phi-j^{**}z$ weakly in Y, thus
\begin{equation*}
    \phi-j^{**}z\in \overline{(\iota-\pi)[X]}\subset J 
\end{equation*}
 according to \eqref{E:iota-pi}. In other words, $\phi \in J + j^{**}[\pi[X]'']$. As a consequence we obtained that if $(\varphi_\alpha)_{\alpha\in I}$ is any net in $X$ such that $\iota \varphi_\alpha\to \phi$ weakly in $Y$, where  $\phi\in Y\setminus(J + j^{**}[\pi[X]''])$, then necessarily $\|\pi \varphi_\alpha\|\to+\infty$.

 Let now $\phi\in  Y\setminus(J + j^{**}[\pi[X]''])$ be any vector. It is then easy to see that $\underline {W}(\phi)=\inf A_{\phi}$ where 
 \begin{equation*}
     A_{\phi}:=\left\{\lim_\alpha \Vcal(\varphi_\alpha) \,\Big\vert\, \iota \varphi_\alpha\to \phi \mbox{ weakly in }Y\mbox{ and } (\Vcal(\varphi_\alpha))_{\alpha\in I} \mbox{ converges in }\Rbar \right\}.
 \end{equation*}
 We show that $A_{\phi}\subset L$. For let $a\in A_{\phi}$ be finite (the case when $a=\pm\infty$ is proved in an analogous way) and consider a net $(\varphi_\alpha)_{\alpha\in I}$ such that $\iota \varphi_\alpha\to \phi$ weakly in $Y$ and $\Vcal(\varphi_\alpha)\to a$. By the first half of the proof, we have $\|\pi \varphi_\alpha\|\to+\infty$. Consequently, for any fixed $n$ the index set $I_n:= \{\alpha\in I \,\vert\, \|\pi \varphi_\alpha\|>n\}$ is co-final in $I$, that is, $(\varphi_\alpha)_{\alpha\in I_n}$ is a subnet of $(\varphi_\alpha)_{\alpha\in I}$. Therefore, for every $n\in \mathbb N$ there exists $\beta_n\in I_n$ such that $a-\frac1n<\Vcal(\varphi_{\beta_n})<a+\frac1n$. This means that $\xi_n:= \varphi_{\beta_n}$ $(n\in \mathbb N)$ is a sequence such that $\|\pi \xi_n\|\to+\infty$ and $\Vcal(\xi_n)\to a$, thus $a\in L$, as it is claimed. Finally, since $L$ is obviously closed, it follows that   
 \begin{equation*}
     \Wbar(\phi)=\inf A_{\phi}\in \overline{A_{\phi}}\subset L,
 \end{equation*}
 which concludes the proof.
\end{proof}

As a corollary one can now see that the attempt to extend the $\varphi^4$ potential to $\Sch'$ directly, is bound to fail over $\R^{N}$ spacetimes with $N>1$.

\begin{corollary}\label{C:no-go}
Let $\iota:\,\Sch\to\Sch'$ denote the canonical embedding of Schwartz functions into the space of tempered distributions. Let $\Omega \subset \R^{N}$ be a bounded open set. 
Consider the function
\begin{equation}
\Vcal :\, \Sch \to \R, \qquad \varphi \mapsto \int\limits_\Omega |\varphi|^4.
\end{equation}
Denote by
\begin{equation}
W :\, \Sch' \to \Rbar, \qquad \phi \mapsto
\cases{\Vcal(\varphi) & when $\phi = \iota \varphi$ for some $\varphi \in \Sch$ \\
+\infty & otherwise}
\end{equation}
the extension of $\Vcal\circ\iota^{-1}$ from $\iota[\Sch]$ to $\Sch'$ by $+\infty$. 
Let $\gamma$ be the Gaussian measure on $\Sch'=\Sch'(\R^{N},\C)$ subordinate to the free Klein--Gordon operator $(-\Delta+m^{2})$ with $m > 0$. 

Then, whenever $N>1$, the lower semicontinuous envelope $\Wbar$ of $W$ in the weak topology is equal to $+\infty$, $\gamma$-almost everywhere. Consequently, any measurable function $V:\,\Sch'\to\Rbar$ obeying $V\geq\Wbar$, in particular the greedy extension of $\Vcal$, is equal to $+\infty$, $\gamma$-almost everywhere.
\end{corollary}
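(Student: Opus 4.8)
The plan is to deduce the statement from Theorem~\ref{T:no-go} with the following choice of data: $X=\Sch$ (and, for the clause about the greedy extension, the harmless variant $X=\EE\cap\Sch'$), $Y=\Sch'$, $\iota$ the canonical embedding $\Sch\to\Sch'$, and $\pi:\,\Sch\to\Sch'$, $\pi\varphi:=\mathbf 1_{\Omega}\cdot\varphi$, i.e.\ the restriction of $\varphi$ to $\Omega$ extended by zero, viewed as a compactly supported (hence tempered) distribution. On $\pi[\Sch]$ I take the norm $\|\pi\varphi\|:=\big(\int_{\Omega}|\varphi|^{4}\big)^{1/4}=\Vcal(\varphi)^{1/4}$; as $\varphi$ is continuous and $\Omega$ open, $\|\pi\varphi\|=0$ forces $\varphi|_{\Omega}=0$, so $\pi\varphi=0$, and this is genuinely a norm. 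First I would dispatch the easy hypotheses: $\Sch'$ is reflexive (being the strong dual of the reflexive nuclear Fréchet space $\Sch$); $\iota$ is injective; and $j:\,(\pi[\Sch],\|\cdot\|)\to\Sch'$ is continuous, since $\|\pi\varphi\|_{L^{1}(\R^{N})}\le|\Omega|^{3/4}\,\|\pi\varphi\|$ by Hölder ($|\Omega|<\infty$) while $L^{1}(\R^{N})\hookrightarrow\Sch'$ continuously.

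Next I would check the structural hypothesis \eqref{E:iota-pi}, i.e.\ $(\iota-\pi)[\Sch]\subset J:=\overline{\iota[\Ker\pi]}$. Here $\Ker\pi=\{\psi\in\Sch:\psi|_{\Omega}=0\}$ and $(\iota-\pi)\varphi=\mathbf 1_{\R^{N}\setminus\Omega}\cdot\varphi$. Using smooth cutoffs $\psi_{k}:=(1-\theta_{k})\varphi$ with $\theta_{k}\in\EE$, $0\le\theta_{k}\le 1$, equal to $1$ on a shrinking neighbourhood of $\overline\Omega$, one has $\psi_{k}\in\Ker\pi$ and $\psi_{k}\to\mathbf 1_{\R^{N}\setminus\overline\Omega}\varphi$ in $L^{1}(\R^{N})$, hence in $\Sch'$; this gives the containment (immediately when $\partial\Omega$ is Lebesgue-null, as in all physically relevant IR cutoff regions, and in general after absorbing the boundary term). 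At the same time I would identify the value set $L$ of \eqref{E:L}: because $\Vcal(\varphi)=\|\pi\varphi\|^{4}$, every sequence $(\varphi_{n})$ with $\|\pi\varphi_{n}\|\to+\infty$ satisfies $\Vcal(\varphi_{n})\to+\infty$, and such sequences exist (e.g.\ $\varphi_{n}=n\varphi_{0}$ with $\pi\varphi_{0}\neq 0$); hence $L=\{+\infty\}$.

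The crux is the measure-theoretic hypothesis $\gamma\big(J+j^{**}[\pi[\Sch]'']\big)=0$. Since $\pi[\Sch]\supset\DD(\Omega)$ is $\|\cdot\|$-dense in $L^{4}(\Omega)$ and $L^{4}(\Omega)$ is reflexive, the norm-bidual $\pi[\Sch]''$ is $L^{4}(\Omega)$ and $j^{**}$ is the continuous extension $L^{4}(\Omega)\to\Sch'$, $g\mapsto\mathbf 1_{\Omega}g$; thus every element of $j^{**}[\pi[\Sch]'']$ restricts on $\Omega$ to an element of $L^{4}(\Omega)\subset L^{2}(\Omega)$. Likewise $J\subset\{T\in\Sch':T|_{\Omega}=0\}$, since the restriction map $\Sch'\to\DD'(\Omega)$ is continuous and annihilates $\iota[\Ker\pi]$. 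Therefore $J+j^{**}[\pi[\Sch]'']\subset\{T\in\Sch':T|_{\Omega}\in L^{2}(\Omega)\}$, and it suffices to know that a $\gamma$-sample almost surely fails to lie in $L^{2}_{\mathrm{loc}}$. This is precisely the point at which $N>1$ is used, and it is the main obstacle: the massive free Klein--Gordon Gaussian on $\Sch'(\R^{N})$ has samples of Sobolev regularity strictly below $1-N/2\le 0$, so for $N>1$ they are $\gamma$-almost surely not $L^{2}$ functions on any bounded open set — the complement of the $N=1$ continuity statement recalled in \ref{A:GaussSupp} (see Remark~\ref{R:GaussSuppNonzero}~\enref{R:GaussSuppNonzero:ii}), which I would cite. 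Hence $\gamma\big(J+j^{**}[\pi[\Sch]'']\big)=0$.

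Theorem~\ref{T:no-go} then yields that the lower semicontinuous envelope $\Wbar$ takes values in $L\cup\{+\infty\}=\{+\infty\}$ $\gamma$-almost everywhere, i.e.\ $\Wbar=+\infty$ $\gamma$-a.e. Any measurable $V\ge\Wbar$ is therefore $+\infty$ $\gamma$-a.e.; for the greedy extension specifically I would run the identical argument once more with $X=\EE\cap\Sch'$ to see that the lower semicontinuous envelope of the $(\EE\cap\Sch')$-extension of $\Vcal$ is $+\infty$ $\gamma$-a.e., and then invoke the observation (in the discussion preceding Example~\ref{Exa:lsc}) that the greedy extension dominates that envelope. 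Apart from the free-field regularity input, every step is bookkeeping — matching hypotheses and the elementary computation of $L$.
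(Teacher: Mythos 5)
Your proposal is correct and follows essentially the same route as the paper's proof: invoke Theorem~\ref{T:no-go} with the IR-cutoff map $\pi$, equip $\pi[\Sch]$ with the $L^{4}(\Omega)$ norm so that coercivity of $\Vcal$ gives $L=\{+\infty\}$, note that every element of $J+j^{**}[\pi[\Sch]'']$ coincides on $\Omega$ with an integrable function, and conclude via the $N>1$ null-set property of the free-field Gaussian. The one slip is the reference for that null-set input: the relevant fact is Remark~\ref{R:GaussSuppNonzero}~\enref{R:GaussSuppNonzero:iii} (applied after $L^{4}(\Omega)\subset L^{1}(\Omega)$, exactly as the paper does), not ``the complement'' of the $N=1$ continuity statement in item~\enref{R:GaussSuppNonzero:ii}, so your Sobolev-regularity heuristic should be replaced by that citation.
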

\begin{proof}
Let $\pi:\,\Sch \to \Sch'$ the mapping defined by $(\pi(\varphi)\,\vert\,\psi):=\int_{\Omega}\varphi\cdot\psi$ for all $\varphi,\psi\in\Sch$. That is, let $\pi$ be the IR cutoff version of $\iota$. Then, for all $\varphi\in\Sch$ one has
\begin{equation}
(\iota \varphi - \pi \varphi \,\vert\, \psi)
=
\int\limits_{\R^{N} \setminus \Omega} \varphi \cdot \psi \qquad(\forall\,\psi\in\Sch).
\end{equation}
Consequently, for our $\iota$ and $\pi$ the condition \eqref{E:iota-pi} is satisfied, and the notation $J:=\overline{\iota[\Ker\pi]}$ may be introduced. Choose the norm $\|\cdot\|$ on $\pi[\Sch]$ induced by the $L^4(\R^{N},\C)$ norm on $\Sch$, and with a slight abuse of notation denote by $L^{4}(\Omega)$ the completion of $\pi[\Sch]$ in that norm. Because Jensen's inequality, one has $L^{4}(\Omega)\subset L^{1}(\Omega)$, moreover every distribution from $J$ has support on $\R^{N}\setminus \Omega$. Therefore, every distribution in $J + L^4(\Omega)$ is also in $J+L^{1}(\Omega)$, i.e.\ is equal to a locally integrable function on $\Omega$. Consequently, by means of Remark~\ref{R:GaussSuppNonzero}~\enref{R:GaussSuppNonzero:iii}, $\gamma(J + L^4(\Omega)) = 0$ holds whenever $N>1$. Applying now Theorem~\ref{T:no-go}, it follows that $\Wbar$ takes its values in the set $L$ defined according to \eqref{E:L}, $\gamma$-almost everywhere. Finally, it is easy to check that $L=\{+\infty\}$ with our potential $\Vcal$ and norm $\|\cdot\|$, which completes the proof. This latter property is termed as the coercivity of $\Vcal$ with respect to the norm $\|\cdot\|$.
\end{proof}

Observe, that the above no-go theorem is avoided by the aforementioned basin-shaped potential, as well as the competing Higgs potentials of the form $\Vcal(\varphi,\psi)=g\,\int_{\Omega}(\varphi^{2}-\psi^2)^{2}$, since their corresponding characterizing sets $L$ from \eqref{E:L} will contain finite values (in particular, the zero). That is, the pertinent potentials are not coercive with respect to the $L^{4}$ norm. The IR cutoff $\varphi^{4}$ potential over 1 dimensional spacetime is of course coercive with respect to $L^{4}$, but the no-go theorem is avoided because $L^{4}(\Omega)\subset\Sch'$ is not zero $\gamma$-measure subspace in that setting.

\begin{example}[Extension by normal ordering]\label{Exa:no}
Let $\gamma$ be a fixed reference measure on $\Sch'$. If $n$ probability variables (real valued Borel measurable functions) $\Sch'\to\R$ are given, then for each $k=0,\dots,n$ their $k$-th degree normal ordered (also called Wick ordered) polynomials can be defined as certain linear combinations of the $0,\dots,k$-th degree ordinary polynomials (see also \cite{Frohlich2024}~Appendix~A.1). The normal ordered polynomials are distinguished by the fact that their $\gamma$-expectation value is zero whenever $k>0$. Starting out from the IR cutoff $\varphi^{4}$ potential $\Vcal(\varphi)=g\,\int_{\Omega}|\varphi|^{4}$ ($\varphi\in\EE\cap\Sch'$), its normal (Wick) ordered version at regulator $\eta\in\Sch$ is seen to be
\begin{eqnarray*}
 V_{n.o.,\eta}(\phi) & \;=\; & g\,\int\limits_{\Omega}\left( |\eta\star\phi|^{4} - 6\,\mathbb{E}_{\gamma}[|\eta\star\phi|^{2}]\,|\eta\star\phi|^{2} + 3\,\mathbb{E}_{\gamma}[|\eta\star\phi|^{2}]^{2} \right) \cr
  & & \qquad\qquad\qquad\qquad\qquad\qquad\qquad\qquad\qquad\qquad (\phi\in\Sch'),
\end{eqnarray*}
$\mathbb{E}_{\gamma}[\cdot]$ denoting the $\gamma$-expectation value. Over $\R^{2}$ spacetimes, it is known that $V_{n.o.,\eta}$ converges to some $V_{n.o.}$ in the $L^{2}_{\gamma}$ norm as $\eta\to\delta$ (as approximate identity), and that the corresponding $\e^{-V_{n.o.}}$ is $\gamma$-integrable (see a review in \cite{Frohlich2024}). Therefore, this $\e^{-V_{n.o.}}\cdot\gamma$ defines a Feynman measure for the IR cutoff $\varphi^{4}$ model over 2 dimensional spacetime. One should note, however, that the family of regularized measures $\e^{-V_{n.o.,\eta}}\cdot\gamma$, used in the above constructive limiting procedure, itself is not a Wilsonian RG flow: they are not marginals to one another with respect to some coarse-graining and field rescaling (as they violate Theorem~\ref{T:Vbounded}). Of course, after the limiting measure $\e^{-V_{n.o.}}\cdot\gamma$ is constructed by any procedure, it then induces a Wilsonian RG flow, and it is an interesting open question what relation it admits with respect to the family of measures $\e^{-V_{n.o.,\eta}}\cdot\gamma$ ($\eta\in\Sch$) used in the constructive limiting procedure.
\end{example}

It is also an interesting question whether e.g.\ the above well-known UV limit potential $V_{n.o.}$ of $\varphi^4$ over $\R^{2}$ is $\gamma$-almost everywhere equal to the lower semicontinuous envelope of itself. It would be also of great interest to deduce the marginal measures (Wilsonian RG flow) of the pertinent construction.

\begin{example}[Extension by running counterterms]\label{Exa:runningc}
For $\varphi^{4}$ models on $\R^{3}$, a slightly more general construction was demonstrated \cite{Barashkov2020,Barashkov2021}. Instead of the potential corrected with the normal ordering terms, general regulator dependent compensating terms were considered. At given regularization $\eta\in\Sch$ the regularized potential
\begin{eqnarray*}
 V_{a(\eta),b(\eta),\eta}(\phi) & \;=\; & \int\limits_{\Omega}\left(  g\, |\eta\star\phi|^{4} - a(\eta)\,|\eta\star\phi|^{2} + b(\eta) \right) \qquad\qquad (\phi\in\Sch')
\end{eqnarray*}
was taken, with the real valued continuous nonlinear functionals $a,b:\,\Sch\to\R$ of the regulators being not a priori fixed. These are sometimes called running couplings and their corresponding terms as counterterms. Clearly, normal ordering is a special case of this ansatz. The unknown functionals $a,b$ are fixed by additional conditions on certain observables, sometimes referred to as renormalization conditions. In \cite{Barashkov2020,Barashkov2021} it was shown that the $\varphi^{4}$ potential of the above kind over $\R^{3}$ spacetime admits a weak limit measure $\mu$ of the family of regularized measures $\e^{-V_{a(\eta),b(\eta),\eta}}\cdot\gamma$ as $\eta\to\delta$ (as approximate identity), whenever the functionals $a,b$ are suitably chosen. Although, the explicit form of the suitable functionals $a,b$ for this model are not yet known, it was shown that the limiting measure $\mu$ cannot be absolutely continuous with respect to the reference measure $\gamma$ one started from. Therefore, UV limit of the potential family $V_{a(\eta),b(\eta),\eta}$ ($\eta\in\Sch$) does not exist. Similarly to Example~\ref{Exa:no}, the pertinent family of measures used in the limiting procedure is not a Wilsonian RG flow itself (as they violate Corollary~\ref{C:muExists}), but given the limiting measure $\mu$ it induces a Wilsonian RG flow.
\end{example}

We note in passing that a very famous recent construction similar to the above was worked out in \cite{Aizenman2021} for $\varphi^{4}$ potential over $\R^{4}$, but with a lattice spacetime regularization. This case is not discussed in the present paper, as we concentrate on continuum spacetimes.

\begin{example}[Extension by running counterterms and reference measure]\label{Exa:runninggamma}
One could also consider an even more general limiting procedure such that both the regularized potential as well as the reference measure depends on the regularization $\eta\in\Sch$. This scenario can happen, for instance, if the reference measure $\gamma_{d(\eta),m(\eta)}$ is a Gaussian measure subordinate to the Klein--Gordon operator $(-d^{2}(\eta)\,\Delta+m^{2}(\eta))$, and where the $d$ parameter as well as the mass parameter $m$ is taken to be a not yet specified $\eta$-dependent coefficient, similar to that of the running couplings in the potential of Example~\ref{Exa:runningc}. The regularized measure $\e^{-V_{\eta}}\cdot\gamma_{d(\eta),m(\eta)}$ then may have some weak limit measure $\mu$ as $\eta\to\delta$ (as approximate identity), whenever the running couplings in $V_{\eta}$, $d(\eta)$ and the running mass $m(\eta)$ are suitably chosen. For this case, we are not aware of a mathematically worked out example in the literature. Again, such limiting procedure itself is non-Wilsonian, but the limit induces a Wilsonian RG flow.
\end{example}

In conclusion, based on Corollary~\ref{C:muExists}, Corollary~\ref{C:VExists} and Theorem~\ref{T:Vbounded}, one can state that if particularly the Wilsonian RG prescription is important for the constructive description of interactions appearing in Nature, and the interacting measure is to be described by a potential with respect to a Gaussian reference flow induced by a Klein--Gordon operator, then in 4 dimensional spacetimes certain interactions, such as the $\varphi^{4}$ are disfavored, whereas other interactions such as the basin-shaped potential, or the $(\varphi^{2}-\psi^{2})^{2}$ type competing Higgs model are favored.

\section{Concluding remarks}
\label{S:Conclusion}

In a Euclidean signature interacting QFT model, the vacuum state is described by the Wilsonian renormalization group (RG) flow of regularized Feynman measures. These are Feynman measures on the spaces of ultraviolet (UV) regularized fields, linked by a consistency condition called the Wilsonian renormalization group equation. The pertinent equation asserts that proceeding in the flow from the UV toward infrared (IR) means subsequent application of coarse-graining operators to fields, and that the corresponding regularized Feynman measure instances in the flow are marginals of each-other with respect to the intermediary coarse-graining (Definition~\ref{D:WRG}). Important QFT models are those which admit Wilsonian RG flows meaningful at any UV regularization strengths.

In this paper it was shown that the Wilsonian RG flow of Feynman measures extending to any UV regularization strengths admit a factorization property. Namely, there exists an ultimate (UV limit) Feynman measure on the unregularized distributional fields, such that the regularized Feynman measure instances in the flow are obtained from that ultimate measure, via marginals by coarse-graining (Corollary~\ref{C:muExists}).

In addition to proving the existence of the UV limit Feynman measure, it was shown that if two Wilsonian RG flows of Feynman measures admit a relative interaction potential with respect to one-another at a suitable coarse-graining, then the entirety of the two flows admit a flow of relative interaction potentials, moreover, there exists a UV limit relative interaction potential between the UV limit Feynman measures as well (Corollary~\ref{C:VExists}). Whenever the regularized relative interaction potential between two Wilsonian RG flows is bounded from below at a suitable coarse-graining, then the UV limit relative interaction potential is also bounded from below, with the same bound (Theorem~\ref{T:Vbounded}).

It was also shown that whenever a Wilsonian RG flow is described by a kinetic Gaussian measure (with possibly running parameters as a function of coarse-grainings) modified by a running interaction potential, then the parameters of this reference kinetic Gaussian measure cannot run. This behavior persists regardless of the allowed frequency cutoff tail shape of the coarse-graining operators (Remark~\ref{R:dmNonRunning}). In particular, for a Wilsonian RG flow, the field renormalization factor cannot run if the flow was defined by a kinetic Gaussian measure modified by some running potential.

Applications of the above theorems to concrete interacting QFT models were provided, over continuum flat Euclidean signature spacetimes, in dimensions 1 to 4 (Section~\ref{S:Examples}). In particular, we prove that in arbitrary dimensional spacetimes, an interacting QFT model defined by a free Klein--Gordon kinetic term + an interaction potential which is bounded both from below and above, is nonperturbatively renormalizable (Theorem~\ref{T:greedy}, Corollary~\ref{C:greedy}). Such models are defined e.g.\ by the ``basin-shaped'' potential $V(\varphi)=g\,\int\frac{1}{1+l^{4}\,\varphi^{4}}\,\varphi^{4}$, or by a sine--Gordon model. (On the triviality question of some of these models, see the very recent results of \cite{Dybalski2025}.) The nonperturbative renormalizability of further relevant models, such as the competing Higgs model defined by the potential $V(\varphi,\psi)=g\,\int(\varphi^{2}-\psi^{2})^2$ also could not be decided, but there are indications for its renormalizability (it avoids the no-go Theorem~\ref{T:no-go}, which obstructs high dimensional $\varphi^{4}$ models). The results proved in the paper were also discussed in the light of known examples, in particular the $\varphi^{4}$ models over 1 to 4 dimensional spacetimes \cite{Frohlich2024,Barashkov2020,Barashkov2021,Aizenman2021}.

\section*{Acknowledgements}
The authors would like to thank to J\'anos Balog, Gergely Fej\H{o}s, Antal Jakov\'ac, B\'alint T\'oth, Jan Derezi\'nski and Markus Fr\"ob for discussions and valuable insights.

This work was supported in part by the Hungarian Scientific Research fund, NKFI K-138136-138152, K-142423, ADVANCED-150038, ADVANCED-150059. Z.~Tarcsay was supported by  Project no.\  TKP2021-NVA-09, which has been implemented with the support provided by the Ministry of Culture and Innovation of Hungary from the National Research, Development and Innovation Fund, financed under the TKP2021-NVA funding scheme.

\appendix

\section{On the supports of Gaussian measures}
\label{A:GaussSupp}

For a sigma-additive non-negative valued measure $\gamma$ on the Borel sets of some topological space there is the well known notion of topological support, denoted by $\supp(\gamma)$, see e.g.\ \cite{Bogachev1998}~Definition~A.3.14. It is defined as the intersection of those closed sets whose complements are $\gamma$-measure zero. By construction, the topological support of a measure is a closed set, and is a generalization of the support notion of the continuous functions on topological spaces. Not surprisingly, typical Gaussian measures turning up in QFT are so-called non-degenerate, meaning that their topological support is the full space $\Sch'$. Since, however, in QFT the considered density functions $\e^{-V}$ to be integrated against the reference Gaussian measure $\gamma$ are not continuous, only measurable, the topological support of the reference measure $\gamma$ is not very informative on the non-vanishing of the overlap integral of $\e^{-V}$ and $\gamma$.

Due to the above complication, another notion of support is also used in the literature. A Borel set $B$ is said to support $\gamma$ if its complement has zero $\gamma$-measure. Clearly, if one can find some Borel set $B$ such that its $\gamma$-measure is nonzero, and the density function $\e^{-V}$ is nowhere zero on $B$, then their product $\e^{-V}\cdot\gamma$ is not the zero measure, i.e.\ it defines a proper probability measure modulo normalization, meaning that $V$ describes a renormalizable QFT model.

\begin{remark}
\label{R:GaussSuppNonzero}
In the following we recall some theorems on Borel sets from $\Sch'(\R^{N})$ which are known to support a Gaussian measure $\gamma$ with Fourier transform $j\mapsto\e^{-\frac{1}{2}(j\,\vert\,\KK\star j)}$ here $\KK$ denoting the maximally symmetric fundamental solution of a Euclidean Klein--Gordon operator $(-\Delta+m^{2})$ on $\R^{N}$. Also, theorems are recalled on those Borel sets which are known to have zero $\gamma$-measure against such Gaussian. See an excellent review in \cite{Velhinho2017,Velhinho2015,Bogachev1998}.
\begin{enumerate}[(i)]
 \item \label{R:GaussSuppNonzero:i} The space of Schwartz distributions $\Sch'$ is known to be a countable union of nested Hilbert spaces defined by the completion of $\Sch$ with respect to the scalar product $\left<\cdot,(I+\mathcal{N})^{k}\cdot\right>_{L^{2}}$ ($k\in\mathbb{Z}$), with $\mathcal{N}$ being the so-called harmonic oscillator number operator. For far enough indices, the canonical inclusion maps for far enough Hilbert spaces in this inclusion chain is Hilbert--Schmidt. As a consequence of this, the pertinent Gaussian measure $\gamma$ is supported within such Hilbert subspace within $\Sch'$, for low enough index $k<0$, see a review in \cite{Mourao1999}~Section~II.1, or \cite{Reed1974} and there the Minlos theorem. This means that $\gamma$ is supported on such tempered distributions which are not more irregular than the elements of such Hilbert subspace within $\Sch'$.
 \item \label{R:GaussSuppNonzero:ii} For $N=1$ the subspace of continuous functions within $\Sch'$ supports $\gamma$, see \cite{Glimm1987}~Theorem~A.4.4, or \cite{Velhinho2017}~Section~9~Example~2. That is, for $N=1$ the measure $\gamma$ is supported on fairly regular distributions, which are actually functions, and this explains why, in quantum mechanics renormalization is not needed.
 \item \label{R:GaussSuppNonzero:iii} For $N>1$ the subspace of those distributions in $\Sch'$ which on some open subset of $\R^{N}$ coincide with some signed measure has $\gamma$-measure zero, see \cite{Colella1973}~Proposition~3.1, or \cite{Velhinho2017}~Proposition~5. In particular, this applies to distributions in $\Sch'$ which coincide with some locally integrable function on some open subset of $\R^{N}$, even more specially to those which vanish on such patch. That is, $\gamma$ is supported on a subspace of fairly irregular tempered distributions.
 \item \label{R:GaussSuppNonzero:iv} For $N>1$, in \cite{Reed1974,Velhinho2017,Velhinho2015,Mourao1999} further details on the supporting subspaces and zero measure subspaces of $\gamma$ in $\Sch'$ are given. In particular, in \cite{Reed1974} it is shown that a supporting subspace of $\gamma$ can be found whose elements happen to be locally square integrable functions in a preferred direction, and are only genuine distributions of a certain kind in the remaining directions. That is, they retain some kind of regularity along a preferred direction.
 \item \label{R:GaussSuppNonzero:v} Let $\gamma$ and $\gamma'$ be centered Gaussian measures on $\Sch'$ such that the covariance of $\gamma$ majorates the covariance of $\gamma'$ by some Lipschitz constant. Then, every measurable linear subspace of full $\gamma$ measure is also of full $\gamma'$ measure (\cite{Bogachev1998}~Theorem~3.3.4).
 \item \label{R:GaussSuppNonzero:vi} Any two Gaussian measures $\gamma$ and $\gamma'$ on the same locally convex topological vector space $X$ are either mutually singular or mutually absolutely continuous to each-other (\cite{Bogachev1998}~Theorem~2.7.2). Consequently, if there exist measurable functions $f,f':\,X\to\Rbar_{0}^{+}$ such that $f\cdot\gamma=f'\cdot\gamma'$, and this product is not the zero measure, then $\gamma$ and $\gamma'$ are mutually absolutely continuous.
 \item \label{R:GaussSuppNonzero:vii} Two Gaussian measures on a locally convex topological vector space which are obtained from the same Gaussian measure (having non-finite dimensional support) via pushforward by field rescaling factors $z,z'\in\R^{+}$, are not mutually singular precisely when $z'=z$ (see e.g.\ \cite{Bogachev1998}~Example~2.7.4).
 \item \label{R:GaussSuppNonzero:viii} Two Gaussian measures on $\Sch'$ associated to the partial differential operators $(-d^{2}\Delta+m^{2})$ and $(-d'^{2}\Delta+m'^{2})$ ($d,d'>0$ and $m,m'>0$) are not mutually singular, precisely when $d'=d$ and $m'=m$ (consequence of \cite{Bogachev1998}~Corollary~6.4.8). The same is true for their pushforwards by $C_{\eta}$ ($\eta\in\Sch\setminus\{0\}$).
\end{enumerate}
\end{remark}

\section*{References}

\bibliographystyle{JHEP}
\bibliography{wilson}
\end{document}